\newtheorem{defn}{Definition}
\newtheorem{lem}{Lemma}
\newtheorem{thm}{Theorem}
\newtheorem{prop}[lem]{Proposition}
\newtheorem{rem}{Remark}
\newtheorem{assump}{Assumption}
\newtheorem{example}{Example}
\theoremstyle{thmstyletwo}%
\theoremstyle{thmstylethree}%
\DeclareMathOperator{\vol}{vol}
\DeclareMathOperator{\conv}{conv}
\DeclareMathOperator{\supp}{supp}
\newcommand{\uy}{\underline{y}}
\newcommand{\uz}{\underline{z}}
\newcommand{\uY}{\underline{Y}}
\newcommand{\uZ}{\underline{Z}}
\begin{document}

\title[Minoration via Mixed Volumes and Cover's Problem for General Channels]{Minoration via Mixed Volumes and Cover's Problem for General Channels}


\author*[1]{\fnm{Jingbo} \sur{Liu}}\email{jingbol@illinois.edu}

\affil*[1]{\orgdiv{Department of Statistics}, \orgname{University of Illinois, Urbana-Champaign}, \orgaddress{\street{725 S. Wright St, 101 Illini Hall}, \city{Champaign}, \postcode{61820}, \state{IL}, \country{USA}}}


\abstract{We give a complete solution to an open problem of Thomas Cover in 1987 about the capacity of a relay channel in the general discrete memoryless setting without any additional assumptions.
The key step in our approach is to lower bound a certain soft-max of a stochastic process by convex geometry methods, which is based on two ideas:
First, the soft-max is lower bounded in terms of the supremum of another process, by approximating a convex set with a polytope with bounded number of vertices.
Second, using a result of Pajor, the supremum of the process is lower bounded in terms of packing numbers by means of mixed-volume inequalities (Minkowski's first inequality).}

\keywords{Multiuser Information theory, High dimensional probability, Empirical process theory, Convex geometry, Minkowski's inequality}



\maketitle

\section{Introduction}
The supremum of stochastic processes is a major topic in high-dimensional probability \cite{boucheron2004concentration,vershynin2018high,van2014probability}, and has found applications in many fields including asymptotic statistics, learning theories, and Banach space theories \cite{dudley,geer,geer2,ledoux2013probability}. 
Information theory, since the 1970s, has long been a major testbed of high-dimensional probability tools such as hypercontractivity and concentration of measure \cite{ahlswede1976spreading,ahlswede_bounds_cond1976}.

In this paper we give a complete solution to an open problem in information theory by Thomas Cover in 1987 about the capacity of a relay channel (see \cite{cover}, \cite[Section~1]{WuBarnesOzgur}), by unveiling a link between information inequalities and the supremum of stochastic processes.
Background on the relay channel and Cover's problem will be given in  Section~\ref{sec_cover} ahead. 
Our solution to Cover's problem is given in Section~\ref{sec_general}.
Previous solutions for the Gaussian  \cite{WuBarnesOzgur}\cite{bai} and the binary symmetric \cite{BarnesWuOzgur-BSC} cases were based on rearrangement inequalities or constrained optimal transport in 2-norm,
which are specialized to those channel distributions.
Other approaches based on measure  concentration \cite{Zhang,WuOzgur,WuOzgurXie,WuOzgur-general}
 and reverse hypercontractivity \cite{liuozgur} (building on a method of \cite{lvv2018,liu_thesis,liu18}) apply for general channels but are not strong enough in the regime of interest for Cover's problem.

In the remainder of this section, we will explain some high level ideas on how these concepts are connected: We first view information theoretic quantities as a certain ``soft-max'' of information, which is further reduced to a hard-max (supremum) by approximation of convex polytopes. 
Then, lower bounding the supremum (minoration) is achieved by a result of Pajor using mixed volume inequalities.

\subsection{From Divergence to Soft-max}
Recall that the relative entropy between two probability measures $P$ and $Q$ equals\footnote{Unless otherwise noted, the bases in this paper are natural.} $D(P\|Q):=\int \log \frac{dP}{dQ}dP=\log\mathbb{E}[\exp(f(Z))]-\mathbb{E}[f(Z)]$ where $f:=-\log\frac{dP}{dQ}$ and $Z\sim P$.
Note that $\log\mathbb{E}[\exp(f(Z))]-\mathbb{E}[f(Z)]=\log\mathbb{E}[\exp(f(Z)-\mathbb{E}[f(Z)])]$ can be understood as a ``soft-max''
\footnote{The name ``soft-max'' is justified by the fact that if $X$ is a random variable equiprobably distributed on a finite set $\mathcal{X}\subseteq \mathbb{R}$, then $\max_{x\in\mathcal{X}}x-\log\abs{\mathcal{X}}\le \log\mathbb{E}[\exp(X)]\le \max_{x\in\mathcal{X}}x$.} 
of $f(Z)-\mathbb{E}[f(Z)]$; let us denote it as ${\rm smax}_{z\in\mathcal{A}}(f(z)-\mathbb{E}[f(Z)])$ in the case where $Z$ follows the equiprobable distribution on a set $\mathcal{A}$.

For our purpose we actually need to consider a conditional relative entropy, which is the average over some relative entropy over another random variable $Y$.
Given a function $K$ of discrete variables $y$ and $z$, we can always construct mappings $\phi$ and $\psi$ to a common Euclidean space so that $K(y,z)=\left<\phi(y),\psi(z)\right>$ is an inner product.
This enables us to solve a discrete problem using geometric tools.
While such ``representations'' $\phi$ and $\psi$ are not unique, it will be seen that their particular choice does not affect our bounds in the end. This idea is similar to the kernel methods used in machine learning whereby features from an arbitrary domain are mapped to a Hilbert space \cite{smale}.
By a slight abuse of notation, let us write $\phi(y)$ and $\psi(z)$ as $y$ and $z$.
It turns out that the information theoretic problem under consideration can be rephrased as follows:
\begin{align}
\textit{Upper bound }{\sf P}_l(\mathcal{A})\textit{ in terms of }\mathbb{E}[{\rm smax}_{z\in\mathcal{A}}\left<z,Y\right>]
\label{e_goal}
\end{align}
where ${\sf P}_l(\mathcal{A})$ denotes the packing number under a metric induced by a convex polytope, and $Y$ is distributed on the vertices of the convex polytope.
The packing number can be bounded in terms of the entropy of $Z$, whereas $\mathbb{E}[{\rm smax}_{z\in\mathcal{A}}\left<z,Y\right>]$ is the relative entropy.
Details about the reduction from conditional relative entropy to expected soft-max in Cover's problem will be given in Section~\ref{sec_gn}-\ref{sec_h}.

A feature of our proof which differs from standard techniques in multiuser information theory (e.g.\ \cite{el2011network}) 
is the focus on relative entropy and the capacity-achieving \emph{output} distribution (Section~\ref{sec_caod}), as opposed to optimization of mutual information over input distributions. 
The capacity-achieving output distribution is always unique for general channels, whereas the capacity-achieving input distribution is not unless further assumptions such as full-rankness are imposed.
Therefore our solution to Cover's problem does not require any additional assumption such as full-rankness.

Although not the focus of the present paper, 
the Gaussian version of Cover's problem (previously solved in \cite{WuBarnesOzgur}\cite{bai} using Gaussian tools) can also be solved using our techniques.
In fact, the geometric intuition is even more evident in the Gaussian setting, since $K(y,z)$ will then become a quadratic form in $y$ and $z$, so that the inner product structure is natural.

\subsection{From Soft-Max to Supremum}
\label{sec_soft}
We will lower bound the expected soft-max $\mathbb{E}[{\rm smax}_{z\in\mathcal{A}}\left<z,Y\right>]$ in terms of the expected supremum (hard-max) of another process, so that lower bound techniques for the latter can be utilized.
We first find a finite set $\mathcal{S}$ in $\mathcal{C}$ and show that $\mathbb{E}[{\rm smax}_{z\in\mathcal{A}}\left<z,\hat{Y}\right>]\le \mathbb{E}[{\rm smax}_{z\in\mathcal{A}}\left<z,Y\right>]$ where $\hat{Y}$ is a random variable on $\mathcal{S}$ satisfying $\mathbb{E}[\hat{Y}]=0$.
Define
\begin{align}
\mathcal{B}:=\{z\in\mathcal{A}\colon \left<z,y\right>\le \mathbb{E}[\left<Z,y\right>]
+{\rm smax}_{z\in\mathcal{A}}\left<z,y\right>
+\log(2\abs{\mathcal{S}}),\,
\forall y\in\mathcal{S}\}
\end{align}
where $Z$ is equiprobable on $\mathcal{A}$.
Using the union bound and Markov's inequality we obtain
\begin{align}
\mathbb{P}[Z\in \mathcal{B}]\ge \frac1{2},
\label{eb}
\end{align}
and moreover,
\begin{align}
\mathbb{E}\left[\sup_{z\in\mathcal{B}}\left<z,\hat{Y}\right>\right]
\le 
\mathbb{E}\left[{\rm smax}_{z\in\mathcal{A}}\left<z,\hat{Y}\right>\right]
+\log(2\abs{\mathcal{S}}).
\label{e_intro2}
\end{align}
We will later use a minoration inequality of Pajor to lower bound the left side of  \eqref{e_intro2} by the packing number of $\mathcal{B}$, which is in turn lower bounded by the packing number of $\mathcal{A}$ in view of \eqref{eb}. 
This will fulfill our goal in \eqref{e_goal}.

We should not let $\mathcal{S}$ be too large to control the second term on the right side of \eqref{e_intro2}.
At the same time, in order to translate between the convex distances associated with $\mathcal{C}$ and $\mathcal{S}$, we cannot choose $\mathcal{S}$ too small.
For this, the mathematical core is the following question (Section~\ref{sec_sampling}):
\begin{quote}
\it Find Banach space $B$ for which the unit ball has at most $m$ vertices and the Banach-Mazur distance $d(B,\ell_{\infty}^n)$ is small.
\end{quote}
Here, $\ell_{\infty}^n$ denotes the $n$-dimensional space under the $\ell_{\infty}$ norm.
It turns out that $m=\exp(O(\frac{n}{a^2}\log a))$ is sufficient for $d(B,\ell_{\infty}^n)=a>1$ independent of $n$.
This can be seen from a general result on thrifty approximations of general convex bodies in \cite{barvinok2014thrifty} based on John's decomposition;
however we will provide a simple and explicit construction using Hadamard matrices which achieves this bound for $\ell_{\infty}^n$.
For the Gaussian version of Cover's problem,
we need similar results for $d(B,\ell_2^n)$ instead, which will be discussed in Section~\ref{sec_bm}.

\subsection{Lower Bounding the Supremum}
Robust tools exist for upper bounds on the supremum of rather general processes, 
whereas the lower bound (also called Sudakov's minoration; see e.g.\ \cite{ledoux2013probability}) is somewhat more delicate.
Currently, the most popular proof of Sudakov's inequality based on Gaussian comparison \cite{ledoux2013probability}\cite{chatterjee2005error} is short but highly Gaussian-specific.
Minoration inequalities for Rademacher processes, in contrast to the Gaussian processes, are often dimension dependent \cite[Chapter~4]{ledoux2013probability}\cite{pajor}.
Both the Gaussian and the Rademacher processes concern the inner product with vector with i.i.d.\ entries,
which is called canonical processes in \cite{tal_canonical}.
Minoration inequalities for other canonical processes were previously obtained, for example, in the case of distributions of the form $c\exp(-\abs{x}^{\alpha})$ \cite[Section~3.3]{ledoux2013probability}\cite{tal_canonical} or 
log-concave distributions \cite{latala2014sudakov}.

For our purpose, however, extension beyond canonical processes is necessary to control the left side of 
\eqref{e_intro2}, since $\hat{Y}$ does not follow a product distribution.
A classical result in Pajor's thesis \cite[eq 2-7] {pajort} (see also Section~\ref{sec_glemma}) via mixed volume inequalities states that
\begin{align}
{\sf P}_l(\mathcal{B})\le 
\left(1+\tfrac{2\mathbb{E}[\sup_{z\in\mathcal{B}}\left<z,\hat{Y}\right>]}{l}\right)^N
\label{e_intro1}
\end{align}
for any $l>0$,
where ${\sf P}_l(\mathcal{B})$ denotes the packing number under the metric induced by the same convex body which defines the distribution of $\hat{Y}$, and $N$ is the dimension of $z$.
Pajor applied this result for the cases of 
Gaussian and Rademacher averages and for the studies of entropy estimates in his thesis,
although modern methods have become more popular on these topics.
Though not obvious, \eqref{e_intro1} can recover the standard Sudakov's (Gaussian inequality) by choosing $\hat{Y}$ to be uniform on a sphere and applying the Johnson-Lindenstrauss lemma; see \cite{mendelson2019generalized}.
The work of 
\cite{mendelson2019generalized} was motivated by the question of extending minoration to general (non-product) log-concave measures.
Our work, on the other hand, uses \eqref{e_intro1} for the purpose of lower bounding the expected soft-max by combing the argument in Section~\ref{sec_soft}.

\section{Preliminary}
\subsection{Sudakov's Minoration}
Let us recall the basic definition of packing in metric spaces and Sudakov's inequality for the Gaussian process.
For more information, the reader may refer to \cite{boucheron2004concentration,van2014probability,ledoux2013probability}.
\begin{defn}
Let $(T,d)$ be a metric space. 
For $l>0$ and $\mathcal{A}\subseteq T$,
we say that $\mathcal{A}$ is an $l$-packing if 
$d(x,y)> l$, for all $x,y\in \mathcal{A}$, $x\neq y$.
The $l$-packing number, ${\sf P}_l(T)$, is defined as the maximum cardinality of an $l$-packing.
\end{defn}

Let $\{X_t\}_{t\in T}$ be a (centered) Gaussian process and equip $T$ with the natural metric
\begin{align}
d(t,s):=\mathbb{E}^{\frac1{2}}[\abs{X_t-X_s}^2].
\end{align}
Sudakov's inequality states that
\begin{align}
\mathbb{E}[\sup_{t\in T}X_t]
\ge c\sup_{l>0}l\sqrt{\log {\sf P}_l(T)}
\label{e_sudakov}
\end{align}
for some universal constant $c>0$.
The popular presentation of Sudakov's inequality replaces the packing number in  \eqref{e_sudakov} with the covering number even though the packing number bound follows directly from the proof,
presumably for convenient comparisons with upper bounds on the supremum (e.g.\ Dudley's integral) where covering numbers arise more naturally.
The packing number and the covering number are equivalent up to a factor of 2 in $l$ (see e.g.\ \cite{van2014probability}) so the choice is not a critical issue.

\subsection{Mixed Volumes}
The Alexandrov-Fenchel inequality is a fundamental result in convex geometry that relates the mixed volumes of convex bodies.
Many geometric inequalities, such as Minkowski's first inequality,
the Brunn-Minkowski inequality,
and the isoperimetric inequalities 
can all be regarded as its special cases
\cite{gardner}.

Given convex bodies $\mathcal{C}_1$,\dots,$\mathcal{C}_r$ in $\mathbb{R}^N$,
a fundamental fact from H.~Minkowski's theory is that the volume of the Minkowski sum
\begin{align}
\vol(\lambda_1\mathcal{C}_1+\dots+\lambda_r\mathcal{C}_r)
=\sum_{j_1,\dots,j_N=1}^r
V(\mathcal{C}_{j_1},\dots,\mathcal{C}_{j_N})
\lambda_{j_1}\dots\lambda_{j_N}
\label{e7}
\end{align}
can be shown to be a homogenous degree $N$ polynomial in $\lambda_1,\dots,\lambda_N\ge 0$.
The coefficients $V(\mathcal{C}_{j_1},\dots,\mathcal{C}_{j_N})$
are called mixed volumes.
Obvious properties of $V$ include symmetry in its arguments and multilinearity.
Moreover, $V(\mathcal{C}_{j_1},\dots,\mathcal{C}_{j_1})=\vol(\mathcal{C}_{j_1})$ is the volume.
The mixed volume $V(\mathcal{C}_1,\mathcal{C}_2,\dots,\mathcal{C}_2)$
relates to the the surface area of $\mathcal{C}_2$ when $\mathcal{C}_1$ is the unit ball,
and relates to the mean width of $\mathcal{C}_1$ when $\mathcal{C}_2$ is the unit ball.
Less obvious is the Alexandrov-Fenchel inequality which is the following ``log-concavity'' property
\begin{align}
V^2(\mathcal{C}_1,\mathcal{C}_2,\mathcal{C}_3,\dots,\mathcal{C}_N)
\ge V(\mathcal{C}_1,\mathcal{C}_1,\mathcal{C}_3,\dots,\mathcal{C}_N)
V(\mathcal{C}_2,\mathcal{C}_2,\mathcal{C}_3,\dots,\mathcal{C}_N).
\end{align}
In our application to Cover's problem we shall only use the following special case which is known as Minkowski's first inequality:
\begin{align}
V^{\frac{1}{N}}(\mathcal{A},\dots,\mathcal{A})
\cdot
V^{\frac{N-1}{N}}(\mathcal{B}\dots,\mathcal{B})
\le 
V(\mathcal{A},\mathcal{B},\dots,
\mathcal{B})
\end{align}
for two convex bodies $\mathcal{A}$ and $\mathcal{B}$.
This special case also follows by a limiting argument applied to the Brunn-Minkowski inequality.\footnote{Incidentally, as recounted by D.~Donoho, the analogies between information-theoretic inequalities and the Brunn-Minkowski inequality is one among the three that exemplify the ``beauty and purity'' of Cover's interest \cite{memoriam}.}
However, in general the Alexandrov-Fenchel inequality does not seem to follow from the Brunn-Minkowski inequality, and is generally considered as a deeper result with connections to diverse branches of mathematics; for recent surveys see for example \cite{schneider,shenfeld,onemore}.

\subsection{Norms Defined by Convex Bodies}
Given a convex set $\mathcal{K}$ and a vector $x$ in some Euclidean space $\mathbb{R}^N$, define 
\begin{align}
{\sf m}_{\mathcal{K}}(x):=\inf\{\lambda>0\colon
x\in \lambda \mathcal{K}\}.
\label{e_min_func}
\end{align}
Note that while convexity guarantees subadditivity, \eqref{e_min_func} defines a norm in $\mathbb{R}^N$ only if the following additional conditions hold:
\begin{itemize}
\item $\mathcal{K}$ is bounded, so that ${\sf m}_{\mathcal{K}}(x)=0$ implies $x=0$.
\item $\mathcal{K}$ is absorbing, which (in this setting) is equivalent to $\mathcal{K}$ containing the origin in its interior.
\item $\mathcal{K}$ is balanced (which in this setting is equivalent to symmetry $\mathcal{K}=-\mathcal{K}$), so that we always have ${\sf m}_{\mathcal{K}}(x)={\sf m}_{\mathcal{K}}(-x)$.
\end{itemize}
In this paper, we are interested in the cases where $\mathcal{K}$ is indeed bounded, absorbing, and balanced, so we can define the 
$\mathcal{K}$ norm:
\begin{align}
\|x\|_{\mathcal{K}}:={\sf m}_{\mathcal{K}}(x).
\end{align} 
If $\mathcal{K}=\mathcal{C}^{\circ}$ is the \emph{polar} of some convex body $\mathcal{C}$:
\begin{align}
\mathcal{C}^{\circ}:=
\{z\in\mathbb{R}^n\colon \sup_{y\in\mathcal{C}}\left<y,z\right>\le 1\},
\end{align}
then we also have
\begin{align}
\|x\|_{\mathcal{C}^{\circ}}=\sup_{y\in\mathcal{C}}\left<y,x\right>.
\end{align}

\subsection{Information Measures}
Recall that the relative entropy between two probability distributions $P$ and $Q$ on the same measurable space (alphabet) is defined as 
\begin{align}
D(P\|Q):=\int \log\frac{dP}{dQ}dP.
\end{align}
Given a distribution $P_{XY}$ on $\mathcal{X}\times \mathcal{Y}$ and a conditional distribution $Q_{Y\vert X}$, define the conditional relative entropy
\begin{align}
D(P_{Y\vert X}\|Q_{Y\vert X}P_X):=D(P_{YX}\|Q_{Y\vert X}P_X)
\end{align}
where $Q_{Y\vert X}P_X$ is the joint distribution induced by $P_X$ and $Q_{Y\vert X}$.
The mutual information for a joint distribution $P_{XY}$ is defined as 
\begin{align}
I(X;Y):=D(P_{Y\vert X}\|P_Y\vert P_X)
\end{align}
and it is well-known that the channel capacity for a channel $P_{Y\vert X}$ equals $\sup_{P_X}I(X;Y)$ where $(X,Y)\sim P_{Y\vert X}P_X$.
Given $P_{XYZ}$, the conditional mutual information is given by 
\begin{align}
I(X;Y\vert Z):=\sum_xI(X;Y\vert Z=z)P_Z(z)
\end{align}
where $I(X;Y\vert Z=z)$ is the mutual information under the distribution $P_{XY\vert Z=z}$.
The entropy is defined by 
\begin{align}
H(X):=\sum_xP_X(x)\log \frac1{P_X(x)},
\end{align}
and conditional entropy is defined by 
\begin{align}
H(Y\vert X)&:=\sum_{x}H(Y\vert X=x)P_X(x)
\end{align}
where $H(Y\vert X=x)$ is the entropy of $Y$ under $P_{Y\vert X=x}$.

\section{Lower Bounding the Supremum via Mixed Volumes}\label{sec_glemma}
In this section we recall Pajor's bound mentioned in \eqref{e_intro1}.
Let $\mathcal{C}$ be a bounded, symmetric convex polyhedron in $\mathbb{R}^N$ containing the origin in its interior. 
Let $\mathcal{C}^{\circ}$ be its polar.
Note that each vertex $y$ of $\mathcal{C}$ corresponds to a facet,
a $(N-1)$-dimensional face of $\mathcal{C}^{\circ}$ \cite{grunbaum2013convex}.
With an abuse of notation we use $y$ to denote this facet;
denote by $h_y$ the distance from this facet to the origin, and $S_y$ the area of this facet.
Let $P_Y$ be the probability distribution on the vertices of $\mathcal{C}$ such that 
\begin{align}
P_Y(y):=\frac{h_yS_y}{N\vol(\mathcal{C}^{\circ})}.
\label{e_geometry}
\end{align}
The measure $P_Y$ is also called the cone volume measure.
Note that this is indeed a probability measure since $\frac1{N}h_yS_y$ is the volume of the convex hull spanned by the facet $y$ and the origin.
Note also that if $Y\sim P_Y$ then
\begin{align}
\mathbb{E}[Y]=0
\end{align}
because the symmetry of $\mathcal{C}$ implies that $Y$ and $-Y$ have the same distribution.

\begin{lem}\label{lem_general}(Pajor, \cite[p40, eq 2-7]{pajort})
Suppose that $\mathcal{C}$ and $P_Y$ are as defined above.
Let $\mathcal{A}\subseteq \mathbb{R}^N$ be compact,
and define 
\begin{align}
a:=\mathbb{E}[\sup_{z\in\mathcal{A}}\left<z,Y\right>
],
\end{align}
where $Y\sim P_Y$, and the inner product is the $N$-dimensional inner product.
We have:
\begin{itemize}
\item The $N$-dimensional Euclidean volume of $\mathcal{A}$ is bounded as
\begin{align}
\vol(\mathcal{A})
\le 
\vol(a\mathcal{C}^{\circ}).
\label{e_claim1}
\end{align}
\item Let $l>0$.
The $l$-packing number of $\mathcal{A}$ under $\|\|_{\mathcal{C}^{\circ}}$ satisfies
\begin{align}
{\sf P}_l(\mathcal{A})\le 
\left(1+\frac{2a}{l}\right)^N.
\label{e_claim2}
\end{align}
\end{itemize}
\end{lem}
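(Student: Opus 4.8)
The plan is to derive \eqref{e_claim1} from the Alexandrov–Fenchel inequality in the special case quoted just before the lemma, and then obtain \eqref{e_claim2} from \eqref{e_claim1} by a standard volumetric packing argument. For \eqref{e_claim1} I would first dispose of trivial cases: if $\mathcal{A}$ is unbounded or not measurable, interpret $\vol$ as outer (Lebesgue) volume and note that it suffices to prove the bound for the closed convex hull $\conv(\mathcal{A})$, since passing to the convex hull does not change the linear functional $\sup_{z\in\mathcal{A}}\langle z,Y\rangle$ (a sup of a linear function over a set equals the sup over its convex hull) and only enlarges the volume. So assume $\mathcal{A}$ is a convex body. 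Then apply the special case of Alexandrov–Fenchel
\begin{align}
\vol(\mathcal{A})^{1/N}\vol(\mathcal{C}^{\circ})^{(N-1)/N}
= V(\mathcal{A},\dots,\mathcal{A})^{1/N} V(\mathcal{C}^{\circ},\dots,\mathcal{C}^{\circ})^{(N-1)/N}
\le V(\mathcal{A},\mathcal{C}^{\circ},\dots,\mathcal{C}^{\circ}).
\end{align}
The heart of the matter is then to identify the mixed volume $V(\mathcal{A},\mathcal{C}^{\circ},\dots,\mathcal{C}^{\circ})$ with (a constant multiple of) the quantity $a=\mathbb{E}[\sup_{z\in\mathcal{A}}\langle z,Y\rangle]$.

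To make that identification, recall the classical formula for this particular mixed volume as an integral over the boundary of $\mathcal{C}^{\circ}$ against the support function of $\mathcal{A}$: for convex bodies in $\mathbb{R}^N$,
\begin{align}
V(\mathcal{A},\mathcal{C}^{\circ},\dots,\mathcal{C}^{\circ})
= \frac{1}{N}\int_{\partial \mathcal{C}^{\circ}} h_{\mathcal{A}}(n(u))\, dS_{\mathcal{C}^{\circ}}(u),
\end{align}
where $h_{\mathcal{A}}(v)=\sup_{z\in\mathcal{A}}\langle z,v\rangle = \|v\|_{\mathcal{A}^{\circ}}$ is the support function and $n(u)$ is the outward unit normal; equivalently this is $\tfrac{1}{N}$ times the integral of $h_{\mathcal{A}}$ against the surface area measure of $\mathcal{C}^{\circ}$ on the sphere. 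Since $\mathcal{C}^{\circ}$ is a polytope, its surface area measure is supported on the finitely many outward facet normals $n_y$ (indexed by the vertices $y$ of $\mathcal{C}$, as in the setup), with mass $S_y$ at $n_y$. Writing the facet normal in terms of the vertex, $n_y = y/|y|$ and $h_y = 1/|y|$ — because the facet of $\mathcal{C}^{\circ}$ dual to vertex $y$ of $\mathcal{C}$ is $\{z : \langle y,z\rangle = 1\}$, whose distance to the origin is $1/|y|$ — we get
\begin{align}
V(\mathcal{A},\mathcal{C}^{\circ},\dots,\mathcal{C}^{\circ})
= \frac{1}{N}\sum_y h_{\mathcal{A}}(n_y) S_y
= \frac{1}{N}\sum_y h_{\mathcal{A}}(y)\, h_y S_y
= \vol(\mathcal{C}^{\circ}) \sum_y P_Y(y)\, \sup_{z\in\mathcal{A}}\langle z,y\rangle
= \vol(\mathcal{C}^{\circ})\cdot a,
\end{align}
using the positive homogeneity of $h_{\mathcal{A}}$ and the definition \eqref{e_geometry} of $P_Y$. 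Plugging this into the Alexandrov–Fenchel bound gives $\vol(\mathcal{A})^{1/N}\vol(\mathcal{C}^{\circ})^{(N-1)/N}\le \vol(\mathcal{C}^{\circ})\, a$, i.e. $\vol(\mathcal{A})\le a^N\vol(\mathcal{C}^{\circ}) = \vol(a\mathcal{C}^{\circ})$, which is \eqref{e_claim1}. I expect the main obstacle to be pinning down this mixed-volume identity cleanly — getting the normalization right and justifying the surface-area-measure representation for a polytope — rather than anything in the inequality itself.

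For \eqref{e_claim2}, run the usual packing-implies-volume argument in the normed space $(\mathbb{R}^N,\|\cdot\|_{\mathcal{C}^{\circ}})$, whose unit ball is $\mathcal{C}^{\circ}$. Let $z_1,\dots,z_M\in\mathcal{A}$ be an $l$-packing, so $\|z_i-z_j\|_{\mathcal{C}^{\circ}}>l$ for $i\ne j$. Then the translated bodies $z_i + \tfrac{l}{2}\mathcal{C}^{\circ}$ have pairwise disjoint interiors, and all lie in $\mathcal{A}' := \mathcal{A} + \tfrac{l}{2}\mathcal{C}^{\circ}$ (more precisely in $\conv(\mathcal{A}) + \tfrac l2 \mathcal{C}^\circ$, which is fine since \eqref{e_claim1} applies to $\conv(\mathcal{A})$ as noted). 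Comparing volumes,
\begin{align}
M\left(\tfrac{l}{2}\right)^N\vol(\mathcal{C}^{\circ}) \le \vol\!\left(\conv(\mathcal{A}) + \tfrac{l}{2}\mathcal{C}^{\circ}\right).
\end{align}
Now apply \eqref{e_claim1} to $\conv(\mathcal{A})$: since $\conv(\mathcal{A}) \subseteq a\mathcal{C}^{\circ}$ is false in general, I instead use the volume bound directly together with monotonicity/homogeneity — from $\vol(\conv(\mathcal{A}))\le \vol(a\mathcal{C}^\circ)$ and Brunn–Minkowski (or just the fact that Minkowski sum of a set of volume $\le \vol(a\mathcal C^\circ)$ with $\tfrac l2\mathcal C^\circ$ has volume at most that of $(a+\tfrac l2)\mathcal C^\circ$, which again is an instance of the same mixed-volume/Brunn–Minkowski circle of ideas) one gets $\vol(\conv(\mathcal{A}) + \tfrac l2 \mathcal C^\circ)^{1/N} \le \vol(a\mathcal C^\circ)^{1/N} + \vol(\tfrac l2 \mathcal C^\circ)^{1/N} = (a + \tfrac l2)\vol(\mathcal C^\circ)^{1/N}$. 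Hence
\begin{align}
M\left(\tfrac{l}{2}\right)^N \le \left(a+\tfrac{l}{2}\right)^N, \qquad\text{i.e.}\qquad M \le \left(1+\tfrac{2a}{l}\right)^N,
\end{align}
which is \eqref{e_claim2}. Since this holds for every $l$-packing, it bounds ${\sf P}_l(\mathcal{A})$, completing the proof. The only subtlety here is the Brunn–Minkowski step bounding the volume of the inflated set; it is standard, and the lemma's own special case of Alexandrov–Fenchel already contains Brunn–Minkowski as a limiting case, so no new ingredient is needed.
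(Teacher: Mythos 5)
Your proof of \eqref{e_claim1} is fine and is essentially the paper's argument: the identity $V(\mathcal{A},\mathcal{C}^{\circ},\dots,\mathcal{C}^{\circ})=a\vol(\mathcal{C}^{\circ})$ (paper: extracted as the coefficient in the Minkowski polynomial; you: the classical surface-area-measure formula, with the correct bookkeeping $n_y=h_y\,y$, $h_{\mathcal{A}}(n_y)S_y=h_{\mathcal{A}}(y)h_yS_y$), followed by the quoted special case of Alexandrov--Fenchel; the reduction to $\conv(\mathcal{A})$ is a harmless addition.

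The proof of \eqref{e_claim2}, however, has a genuine gap at the ``Brunn--Minkowski step.'' Brunn--Minkowski gives a \emph{lower} bound on the volume of a Minkowski sum, $\vol(K+L)^{1/N}\ge\vol(K)^{1/N}+\vol(L)^{1/N}$, not the upper bound you need; and the ``fact'' you invoke instead --- that a convex set of volume at most $\vol(a\mathcal{C}^{\circ})$, summed with $\tfrac l2\mathcal{C}^{\circ}$, has volume at most $\vol\bigl((a+\tfrac l2)\mathcal{C}^{\circ}\bigr)$ --- is false. Take $\conv(\mathcal{A})$ to be a long thin needle: its volume can be arbitrarily small (so the hypothesis $\vol(\conv(\mathcal{A}))\le\vol(a\mathcal{C}^{\circ})$ holds with $a$ tiny), yet $\vol\bigl(\conv(\mathcal{A})+\tfrac l2\mathcal{C}^{\circ}\bigr)$ grows linearly in the needle's length, while $\vol\bigl((a+\tfrac l2)\mathcal{C}^{\circ}\bigr)$ stays fixed. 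In other words, a bound on $\vol(\mathcal{A})$ alone cannot control the volume of the inflated set; you would need a reverse Brunn--Minkowski with constant $1$, which does not exist. The correct route --- and the one the paper takes --- is to apply part \eqref{e_claim1} \emph{directly} to $\mathcal{B}:=\mathcal{A}+\tfrac l2\mathcal{C}^{\circ}$, exploiting that the functional $a(\cdot)=\mathbb{E}[\sup_{z\in\cdot}\left<z,Y\right>]$ (unlike the volume) is additive under Minkowski sums: since every realization of $Y$ is a vertex of $\mathcal{C}$, one has $\sup_{z\in\mathcal{C}^{\circ}}\left<z,Y\right>=\|Y\|_{\mathcal{C}}=1$, hence $\mathbb{E}[\sup_{z\in\mathcal{B}}\left<z,Y\right>]=a+\tfrac l2$ and \eqref{e_claim1} gives $\vol(\mathcal{B})\le\bigl(a+\tfrac l2\bigr)^N\vol(\mathcal{C}^{\circ})$; combined with your (correct) disjointness bound $\vol(\mathcal{B})\ge {\sf P}_l(\mathcal{A})\bigl(\tfrac l2\bigr)^N\vol(\mathcal{C}^{\circ})$ this yields \eqref{e_claim2}. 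With this replacement your argument goes through.
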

Since the original reference \cite{pajort} is in French, we provide here the short proof for reader's convenience:
\begin{proof}
Since $\mathbb{E}[\sup_{z\in\mathcal{A}}\left<z,Y\right>
]$ is unchanged while the volume and the packing number do not decrease when $\mathcal{A}$ is replaced by its convex hull which is also compact, it suffices to consider convex compact $\mathcal{A}$.
Moreover it suffices to consider such $\mathcal{A}$ which also has nonempty interior (so that $\mathcal{A}$ is a convex body), since otherwise we can apply an approximation argument with $\mathcal{A}$ replaced by its Minkowski sum with a ball of vanishing radius.

Now by definition of the mixed volume we have
\begin{align}
N \cdot V(\mathcal{A},\mathcal{C}^{\circ},\dots,\mathcal{C}^{\circ})
&=V(\mathcal{A},\mathcal{C}^{\circ},\dots,\mathcal{C}^{\circ})
+V(\mathcal{C}^{\circ},\mathcal{A},\mathcal{C}^{\circ},\dots,\mathcal{C}^{\circ})
+
V(\mathcal{C}^{\circ},\dots,\mathcal{C}^{\circ},\mathcal{A})
\\
&=\sum_{y\in \,\textrm{facets of }\mathcal{C}^{\circ}}
\sup_{z\in\mathcal{A}}\left<n_y,z\right>
\cdot
S_y
\label{e23}
\\
&=
\sum_{y\in \,\textrm{facets of }\mathcal{C}^{\circ}}
\sup_{z\in\mathcal{A}}\left<\frac{1}{h_y}\,n_y,z\right>
\cdot
S_yh_y
\\
&=N\vol(\mathcal{C}^{\circ})\cdot \mathbb{E}[\sup_{z\in\mathcal{A}}\left<z,Y\right>
]
\end{align}
where we recall that $n_y$ denotes the outward normal of the facet $y$,
and \eqref{e23} follows by taking the limit in \eqref{e7}.
However,
by the Alexandrov-Fenchel Inequality,
\begin{align}
\vol^{\frac{1}{N}}(\mathcal{A})
\cdot
\vol^{\frac{N-1}{N}}(\mathcal{C}^{\circ})
&=V^{\frac{1}{N}}(\mathcal{A},\dots,\mathcal{A})
\cdot
V^{\frac{N-1}{N}}(\mathcal{C}^{\circ},\dots,\mathcal{C}^{\circ})
\\
&\le 
V(\mathcal{A},\mathcal{C}^{\circ},\dots,
\mathcal{C}^{\circ})
\\
&=\vol(\mathcal{C}^{\circ})
\mathbb{E}[\sup_{z\in\mathcal{A}}\left<z,Y\right>
]
\label{e_sv}
\\
&= a\vol(\mathcal{C}^{\circ})
\end{align}
Rearranging, we obtain
\begin{align}
\vol^{\frac{1}{N}}(\mathcal{A})
\le 
a \vol^{\frac{1}{N}}(\mathcal{C}^{\circ})
\\
=\vol^{\frac{1}{N}}(a\mathcal{C}^{\circ})
\end{align}
which proves \eqref{e_claim1}.

To show \eqref{e_claim2}, let 
\begin{align}
\mathcal{B}:=\frac{l}{2}\,\mathcal{C}^{\circ}
+\mathcal{A}.
\end{align}
Note that for any distinct elements $x_1$and $x_2$ in an $l$-packing of $\mathcal{A}$, 
we have $x_1-x_2\notin l\mathcal{C}^{\circ}$, 
and therefore $x_1+\frac{l}{2}\mathcal{C}^{\circ}$ and $x_2+\frac{l}{2}\mathcal{C}^{\circ}$ do not intersect. (This is the only part we used the symmetry of $\mathcal{C}$.) Thus
\begin{align}
\vol(\mathcal{B})
\ge {\sf P}_l(\mathcal{A})
\left(\frac{l}{2}\right)^N\vol(\mathcal{C}^{\circ}).
\label{e_union}
\end{align}
Moreover,
\begin{align}
\mathbb{E}[\sup_{z\in\mathcal{B}}\left<z,Y\right>
]
&=
\mathbb{E}\left[\frac{l}{2}\sup_{z\in\mathcal{C}^{\circ}}\left<z,Y\right>
+
\sup_{z\in\mathcal{A}}\left<z,Y\right>
\right]
\\
&=\frac{l}{2}+a.
\end{align}
Thus by applying \eqref{e_claim1} to the set $\mathcal{B}$, we have 
\begin{align}
\vol(\mathcal{B})\le \left(\frac{l}{2}+a\right)^N
\vol\left(\mathcal{C}^{\circ}\right)
\end{align}
which, together with \eqref{e_union}, implies \eqref{e_claim2}.
\end{proof}
For comparisons of 
Lemma~\ref{lem_general} and other minoration inequalities in the settings of Gaussian and Rademacher processes, the reader is referred to \cite{mendelson2019generalized} or an earlier version of this paper \cite[Section~3.2-3.3]{liu_minoration}.
In particular, by choosing $\mathcal{C}$
to be a ball one can obtain from Lemma~\ref{lem_general} a dimension dependent ``weak Sudakov minoration'' for the Gaussian process \cite{mendelson2019generalized}.
Further using a dimension reduction (Johnson-Lindenstrauss) argument,
one can obtain the standard dimension-free Sudakov minoration \cite{mendelson2019generalized}.
The combination of the weak Sudakov and the standard Sudakov is called 
``improved Sudakov minoration'' in \cite{mendelson2019generalized}.
We show in \cite[Section~3.2]{liu_minoration} that the improved Sudakov minoration is in fact equivalent to the standard Sudakov minoration upon optimizing the scale parameter.

\section{From Cover's Problem to Soft-Minoration}
\label{sec_application}
\subsection{Relay Channel and Cover's Problem}
\label{sec_cover}
Relay channel is a three-terminal communication model where a relay is present to help the transmission of a message from a transmitter to a receiver.
The general goal is to understand the maximum rate of transmitting a message to the receiver using the given relay such that the error probability is asymptotically vanishing.
In \cite{cover} (texts also reproduced in \cite{WuBarnesOzgur}), Thomas Cover considered a seemingly simply relay channel as depicted in Figure~\ref{f_m1} (later also referred to as the primitive relay channel \cite{kim_techniques}).
The mathematical formulation is as follows:

{\bf Model~1:} (Figure~\ref{f_m1}) Let $R,R_0\in(0,\infty)$
and fix two channels (i.e., conditional distributions) $P_{Y\vert X}$ and $P_{Z\vert X}$ from $\mathcal{X}$ to $\mathcal{Y}$ and $\mathcal{Z}$, respectively, where $\abs{\mathcal{X}},\abs{\mathcal{Y}},\abs{\mathcal{Z}}<\infty$.
For each nonnegative integer $n$:
\begin{itemize}
\item The message $W$ is a random variable  equiprobable on $\{1,\dots,\lfloor \exp(nR)\rfloor\}$;
\item The encoder is a map $\Phi\colon \{1,\dots,\lfloor \exp(nR)\rfloor\}\to \mathcal{X}^n$;
\item Given $X^n=x^n$, the channel outputs $Y^n$ and $Z^n$ are conditionally independent and follows the distribution $\prod_{i=1}^nP_{Y\vert X=x_i}\prod_{i=1}^nP_{Z\vert X=x_i}$;
\item The relay computes a relay message $V\in\{1,2,\dots,\lfloor\exp(nR_0)\rfloor\}$ from $Z^n$. 
With an abuse of notation we denote this map as $V:\mathcal{Z}^n\to \{1,2,\dots,\lfloor\exp(nR_0)\rfloor\}$.
\item The decoder computes $\hat{W}=\hat{W}(V,Y^n)\in\{1,2,\dots,\lfloor\exp(nR)\rfloor\}$ based on $V$ and $Y^n$. 
\item The error probability is $P_e^{(n)}:=\mathbb{P}[W\neq \hat{W}]$.
\end{itemize}
\begin{figure}[h]
  \centering
\begin{tikzpicture}
[node distance=0.6cm,minimum height=6mm,minimum width=6mm,arw/.style={->,>=stealth'}]
  \node[rectangle,draw,rounded corners] at (0,0) (E) {Encoder};
   \node[rectangle] [right =0.5cm of E](X) {$X^n$};
  \node[rectangle] (Y) [right =1cm of X] {$Y^n$};
  \node[rectangle] [above = 1cm of Y](Z){$Z^n$};
  \node[rectangle] (W) [left =0.4cm of E] {$W$};
  \node[rectangle,draw,rounded corners] (D) [right =2cm of Y] {Decoder};
  \node[rectangle,draw,rounded corners] (R) [right =0.4cm of Z] {Relay};
  \node[rectangle] (V) [right =0.4cm of R] {$V$};  
  \node[rectangle] (H) [right =0.4cm of D] {$\hat{W}$};

  \draw [arw] (W) to node[midway,above]{} (E);
  \draw [arw] (E) to node[midway,above]{} (X);
  \draw [arw] (X) to node[midway,above]{} (Y);  
  \draw [arw] (X) to node[midway,right]{} (Z);
  \draw [arw] (Z) to node[midway,right]{} (R);
  \draw [arw] (R) to node[midway,right]{} (V);  
  \draw [arw] (V) to node[midway,right]{} (D); 
  \draw [arw] (Y) to node[midway,right]{} (D);  
  \draw [arw] (D) to node[midway,right]{} (H);    
\end{tikzpicture}
\caption{Relay channel}
\label{f_m1}
\end{figure}
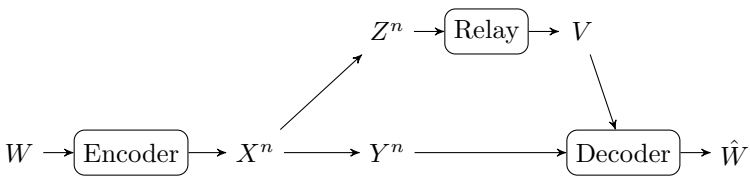

Given $R_0>0$ and $\epsilon\in(0,1)$, define
\begin{align}
C_{\epsilon}(R_0):=\sup\{R\colon \limsup_{n\to\infty}\inf P_e^{(n)}\le \epsilon\},
\end{align}
where the infimum is over all encoder $\Phi$, relay coder $V$, and decoder $\hat{W}$ in Model~1 with parameters $R_0$ and $R$.
The capacity of the relay channel is defined as
\begin{align}
C(R_0):=\lim_{\epsilon\downarrow 0}C_{\epsilon}(R_0)
\end{align}
where the limit exists due to monotone convergence.
In other words, $C(R_0)$ is the maximum rate $R$ at which the transmitter can transmit a message $W$ to the receiver, when the relay is allowed to send a relay message $V$ to the receiver at the rate $R_0$.

As with many problems in multi-terminal information theory, the relay channel is unsolved (in the sense of a single-letter characterization of $C(R_0)$ in terms of general $R_0$, $P_{Z\vert X}$, and $P_{Y\vert X}$), despite the simplicity of its formulation.
A question about a particular characteristic of $C(R_0)$, raised by Thomas Cover \cite{cover}, is the following:
\begin{quote}
\it
If $Y$ and $Z$ are conditionally identically distributed given $X$ (i.e., $P_{Y\vert X}=P_{Z\vert X}$), what is the minimum $R_0$ for which $C(R_0)=C(\infty)$? 
\end{quote}
To gain intuitions and to motivate the above question, Cover noted in \cite{cover} a few preliminary properties of $C(\cdot)$:
\begin{enumerate}
\item $C(0)=\max_{P_X}I(X;Y)$, 
which is the Shannon capacity \cite{shannon1948mathematical} of the point-to-point channel $P_{Y\vert X}$ from $\mathcal{X}$ to $\mathcal{Y}$;
\item $C(\infty)=\max_{P_X}I(X;Y,Z)$ where $P_{XYZ}=P_XP_{Y\vert X}P_{Z\vert X}$, since
when $R_0=\infty$ the relay can losslessly send $Z^n$ to the intended receiver and the problem reduces to a point-to-point channel from $\mathcal{X}$ to $\mathcal{Y}\times \mathcal{Z}$;
\item $C(\cdot)$ is a nondecreasing function.
\end{enumerate}
It is unclear why Cover originally imposed the symmetry assumption $P_{Y\vert X}=P_{Z\vert X}$ since the problem is valid even without it.
We shall use it in justifying some regularity conditions; see \eqref{e_Delta} \eqref{e_full} (it boils down the fact that for a rank 2 matrix $A$ and another arbitrary matrix $B$, their product may be rank 1, but $AA^{\top}$ always remains rank 2).
However, some other condition may be imposed in place of symmetry to retain regularity and clean answers (see the end of Section~\ref{sec_general}).

Several achievability schemes for the primitive relay channel have been studied in \cite{kim_techniques}, yielding various lower bounds on $C(\cdot)$.
In particular, the compress-and-forward scheme gives 
\begin{align}
C(R_0)\ge \max_{P_X,P_{\hat{Z}\vert Z}}
\{I(X;Y,\hat{Z})\colon I(Z;\hat{Z}\vert Y)\le R_0\}.
\label{e_caf}
\end{align}
Roughly speaking, in compress-and-forward the relay sends $V$ at the rate $I(Z;\hat{Z}\vert Y)$ using the Wyner-Ziv scheme so that the receiver can simulate $\hat{Z}^n$, essentially creating a channel $P_{\hat{Z}Y\vert X}$ to the intended receiver \cite{kim_techniques}.
While \eqref{e_caf} may not be tight in general, it will be seen from our main result that it is tight for the critical $R_0$ in Cover's problem.

A major challenge in solving Cover's problem is to derive tight converse bounds (upper bounds on $C(\cdot)$).
Despite the many past and recent efforts using various high dimensional probability tools, Cover's problem was previously only solved for the special case of binary symmetric channel \cite{BarnesWuOzgur-BSC} or the Gaussian version \cite{WuBarnesOzgur}.
 
Before explaining our solution to Cover's problem, let us look at a few concrete examples of channels.

\begin{example}\label{exp1}
 Let $P_{Z\vert X}$ be arbitrary and let $Y$ always be a constant.
Then $R_{\rm crit}:=\inf\{R_0\colon C(R_0)=C(\infty)\}=\sup_{P_X}I(X;Z)$ is the capacity of the channel to $Z$.
\end{example}
It was shown in \cite{BarnesWuOzgur-BSC},
using rearrangement inequalities on the Hamming sphere, that for binary symmetric channels, 
we have $R_{\rm crit}=H(Z\vert Y)$ where $Y$ and $Z$ follow the capacity-achieving distribution.
Since $I(X;Z)$ can be strictly smaller than $H(Z)=H(Z\vert Y)$,
 Example~\ref{exp1} shows that $R_{\rm crit}=H(Z\vert Y)$ cannot be true in general.
The reader may argue that symmetry $P_{Z\vert X}=P_{Y\vert X}$ does not hold in Example~\ref{exp1}.
The following example does preserve symmetry:

\begin{example}\label{exp2}
Let $P_{Z\vert X=x}=P_{Y\vert X=x}$ be independent of $x$.
Then $R_{\rm crit}=0$ as the channel capacity is 0. 
\end{example}
Still $R_{\rm crit}<H(Z\vert Y)$.
The reader may argue that in Example~\ref{exp2}, a constant is a sufficient statistic of $Z$ for $X$,
so perhaps $R_{\rm crit}=H(\uZ\vert Y)$ where $\uZ$ is a certain ``most succinct sufficient statistic'' for $X$?
However, we can not find such a sufficient statistic for $X$ in Example~\ref{exp1} so that $R_{\rm crit}=H(\uZ\vert Y)$.
In fact, we can also construct symmetry examples in which $R_{\rm crit}<H(\uZ\vert Y)$ for any sufficient statistic $\uZ$ of $Z$ for $X$:

\begin{example}\label{exp3}
Consider $P_{Y\vert X}=P_{Z\vert X}$ where $\mathcal{X}=\{1,2,3\}$, and
\begin{align}
P_{Z\vert X=1}&=[\frac1{4}+\frac{\epsilon}{2},\,
\frac1{4}+\frac{\epsilon}{2},\,
\frac1{2}-\epsilon];
\\
P_{Z\vert X=2}&=[\frac1{4}-\frac{\epsilon}{2},\,
\frac1{4}-\frac{\epsilon}{2},\,
\frac1{2}+\epsilon];
\\
P_{Z\vert X=3}&=[\frac1{4}-\delta,\,
\frac1{4}+\delta,\,
\frac1{2}];
\end{align}
Then there exists $c>0$ small enough such that for any $\epsilon\in(0,c)$ and $\delta\in(0,\epsilon^2)$, we have that $P_X=[\frac1{2},\frac1{2},0]$ maximizes $I(X;YZ)$.
We cannot combine symbols in $\mathcal{Z}$ to form a ``more succinct'' sufficient statistic for $X$.
Yet using our main result (Theorem~\ref{thm1}), we will find $R_{\rm crit}={\rm H}(\frac1{2}+2\epsilon^2)$, where ${\rm H}(\cdot)$ denotes the binary entropy function. In contrast, $H(Z\vert Y)=\frac1{2}\log2+{\rm H}(\frac1{2}+2\epsilon^2)$.
\end{example}
The claims in Example~\ref{exp3} will be shown in Appendix~\ref{app1}.

\subsection{Solution for General Channels}\label{sec_general}
Let $\uY$ and $\uZ$ be functions of $Y$ and $Z$ respectively, whose definitions are postponed to Definition~\ref{defn1} since they require the notion of capacity-achieving output distribution.
Our general solution to Cover's problem is that $R_{\rm crit}:=\inf\{R_0\colon C(R_0)=C(\infty)\}=H(\uZ\vert \uY)$ where $(Y,Z)$ follows the capacity-achieving output distribution.
The upper bound part is the following statement:
\begin{prop}\label{prop_achieve}
$C(H(\uZ\vert \uY))=C(\infty)$.
\end{prop}
The proof of Proposition~\ref{prop_achieve} is given in Appendix~\ref{app_prop_achieve}, which is relatively immediate from the compress-and-forward bound \eqref{e_caf}.
The converse part, on the other hand, is nontrivial and is our main result:
\begin{thm}\label{thm1}
Given $P_{Y\vert X}=P_{Z\vert X}$, there exists $c>0$ such that 
for any $R_0\in[H(\uZ\vert \uY)-c^{-1},H(\uZ\vert \uY)]$, 
we have
\begin{align}
H(\uZ\vert \uY)-R_0
\le
c\lambda^{\frac1{10}}\log^{\frac{6}{5}}\frac1{\lambda}
\end{align}
where $\lambda:=C(\infty)-C(R_0)$.
In particular, $R_{\rm crit}=H(\uZ\vert \uY)$.
\end{thm}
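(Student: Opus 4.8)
The plan is to prove a quantitative converse (outer bound): from any blocklength-$n$ code for Model~1 with relay rate $R_0$ and vanishing error probability I will extract an inequality of the shape
\[
n\bigl(H(\uZ\mid\uY)-R_0\bigr)\ \le\ C_1\, n\,\lambda^{1/10}\log^{6/5}(1/\lambda)+n\cdot o(1),
\]
and then let $n\to\infty$ along codes with $R\uparrow C_\epsilon(R_0)$ and finally $\epsilon\downarrow0$. Four ingredients are needed: (i) a single-letter outer bound written through the \emph{capacity-achieving output distribution}; (ii) a kernel/feature representation turning the resulting information quantity into an inner-product process; (iii) the passage from a ``soft-max'' to an honest supremum, followed by Lemma~\ref{lem_general}; and (iv) the Hadamard-matrix sampling construction of Section~\ref{sec_sampling}, which supplies the quantitative loss.

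\emph{Step (i).} First I would run Fano plus a cut-set-type manipulation to get $nR\le I(W;V,Y^n)+no(1)$, then invoke Proposition~\ref{prop_delta} to pass without asymptotic loss to the preprocessed channels $P_{\uY\mid X},P_{\uZ\mid X}$ with input confined to $\mathcal X_{\rm good}$, and to identify $C(\infty)$ with the corresponding $\sup_{P_X}I(X;\uY,\uZ)$. The key step (Section~\ref{sec_caod}) is to measure the output statistics against the \emph{unique} capacity-achieving output distribution $Q^*_{\uY},Q^*_{\uZ}$ via the associated minimax property: this rewrites $I(W;V,Y^n)$ as $nC(\infty)$ minus nonnegative relative-entropy terms, the relay contributing only $H(V)\le nR_0$. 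Rearranging leaves, up to $no(1)$, a bound of the form
\[
n\bigl(H(\uZ\mid\uY)-R_0\bigr)\ \le\ \mathbb{E}\bigl[{\rm smax}_{w}\ \langle\, \jmath_w,\, \varphi(\uY^n)\,\rangle\bigr]+n\lambda\cdot(\text{const})+n\,o(1),
\]
where the soft-max is over messages $w$, $\jmath_w$ is a recentered log-likelihood vector of $\uZ^n$ and $\varphi(\uY^n)$ a feature vector of $\uY^n$; the crucial structural feature, emphasized in the introduction, is that the right side $\to0$ as $\lambda\to0$ uniformly in $n$.

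\emph{Steps (ii)--(iii).} The finite matrix $K(\uy,\uz)=\log\frac{Q^*_{\uZ\mid\uY}(\uz\mid\uy)}{Q^*_{\uZ}(\uz)}$ (or the relevant variant appearing above) admits a factorization $K(\uy,\uz)=\langle\phi(\uy),\psi(\uz)\rangle$ into a common $\mathbb{R}^d$; extending coordinatewise turns the soft-max into $\mathbb{E}[{\rm smax}_{z\in\mathcal A}\langle z,Y\rangle]$ with $\mathcal A\subseteq\mathbb{R}^{N}$, $N=dn$, the recentered image of the codebook under $\psi^{\otimes n}$ and $Y=\phi^{\otimes n}(\uY^n)$, and an invariance check shows the final estimate is independent of the chosen factorization. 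I then follow the scheme around \eqref{e_intro2}: take $\mathcal C$ to be the convex hull of a near-optimal $m$-point set, symmetric, whose polar defines a norm $\|\cdot\|_{\mathcal C^\circ}$ within a factor $a$ of the $\ell_1$-type metric in which codewords are separated; let $\mathcal S$ be its vertex set, construct $\hat Y$ on $\mathcal S$ with $\mathbb{E}[\hat Y]=0$ and $\mathbb{E}[{\rm smax}_z\langle z,\hat Y\rangle]\le\mathbb{E}[{\rm smax}_z\langle z,Y\rangle]$, restrict to the half-mass set $\mathcal B$ of \eqref{e_intro2}, and apply \eqref{e_claim2} of Lemma~\ref{lem_general} to $\mathcal B$. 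Since distinct codewords are essentially $\exp(nH(\uZ\mid\uY))$-separated in the $\uZ$ ``Hamming-like'' metric, hence separated up to the factor $a$ in $\|\cdot\|_{\mathcal C^\circ}$, the packing number of $\mathcal B$ is at least $\tfrac12\exp(nR)$; comparing with the upper bound $(1+2a/l)^N$ from Lemma~\ref{lem_general}, and once more using the vanishing-at-$0$ property, constrains $H(\uZ\mid\uY)-R_0$.

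\emph{Step (iv) and optimization.} The rate comes from Section~\ref{sec_sampling}: for distortion $a>1$ there is a symmetric polytope in $\mathbb{R}^N$ with $m=\exp\bigl(O(\tfrac{N}{a^2}\log a)\bigr)$ vertices and Banach--Mazur distance $a$ to $\ell_\infty^N$, so $\log(2|\mathcal S|)=O(\tfrac{N}{a^2}\log a)$ enters \eqref{e_intro2} and the metric comparison loses a factor $a$. Feeding these into the packing comparison of the previous step yields, for every admissible $a$ and packing level $l$, an inequality whose dominant terms are, schematically, $a\lambda$, $a^{-2}\log a$, and contributions governed by $l$ and the Fano/AEP slack; balancing $a$ (and the auxiliary parameters) against $\lambda$ produces the stated $\lambda^{1/10}\log^{6/5}(1/\lambda)$. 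I expect the main obstacle to be Step~(i) together with the bookkeeping in Step~(iv): the outer bound must be arranged so that, after single-letterization through the capacity-achieving \emph{output} distribution, every term on the error side is either $O(\lambda)$ or vanishes with the approximation parameters --- precisely the ``vanishing at $0$'' property that earlier approaches to Cover's problem lacked --- after which several parameters must be optimized simultaneously. By contrast, the purely geometric part (Lemma~\ref{lem_general} and the Hadamard construction) is essentially routine once the reduction to an inner-product process is in place.
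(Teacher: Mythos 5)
Your proposal correctly identifies the paper's main ingredients (outer bound through the capacity-achieving output distribution with the ``vanishing at $0$'' property, a factorization $\tilde K_x(\uy,\uz)=\langle\phi_x(\uy),\psi_x(\uz)\rangle$, the soft-max--to--sup reduction via a half-mass set, Lemma~\ref{lem_general}, and the Hadamard sampling of Section~\ref{sec_sampling}), but the middle of the argument has a genuine gap: you apply the minoration to the wrong set and in the wrong direction. In the paper, the object fed into the geometric machinery is not the codebook indexed by messages $w$ but the \emph{relay decoding cell} $\mathcal{A}_{x^n,v,p}=\{\uz^n\colon V(\uz^n,E(x^n))=v,\ T_{x^n}(\uz^n)=p\}$, mapped into $\mathbb{R}^N$ by $\psi_{x^n}$; the information-theoretic chain (Fano, the decomposition of $I(X^n;V,\uY^n,E)$, the Model~2 oracle $E$ and the $\mathcal{X}_{\rm bad}$ control via $\Delta$, and conditioning on conditional types so that $K-\tilde K$ is constant) shows that the conditional divergence $\frac1n\bigl(\mathbb{E}_P[\imath_{Q_{\uZ^n|\uY^nVE}}]-H(\uZ^n|X^nV)\bigr)$ is small, i.e.\ the soft-max $\rho(\phi_{x^n}(\uY^n))$ over the cell is small on average, and the minoration is then used to conclude that the cell's \emph{cardinality} (hence $H(\uZ^n|X^n,V)$, hence $\mathbb{E}_P[\imath_{Q_{\uZ^n|\uY^nVE}}]\gtrsim n(H_Q(\uZ|\uY)-R_0)$) is small. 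The packing lower bound needed there does not come from any separation of codewords: it comes from a Hamming-ball counting argument inside the cell together with the $\kappa$-differentiation property, i.e.\ the injectivity of $\psi_x$ (which is exactly where the symmetry $P_{Y|X}=P_{Z|X}$ enters, through the full-rank $2\times2$ matrix \eqref{e_full}). Your step (iii) instead claims that ``distinct codewords are $\exp(nH(\uZ\mid\uY))$-separated'' (a claim that is unjustified and dimensionally ill-formed; separation should scale linearly in $n$) and lower-bounds the packing of the codebook by $\tfrac12\exp(nR)$. Even granting that, combining it with \eqref{e_claim2} only yields a \emph{lower} bound on $\mathbb{E}[\sup]$, hence on your soft-max term, which is the opposite of what you need: in your step (i) the soft-max sits on the right-hand side of the bound on $n(H(\uZ|\uY)-R_0)$, so you must show it is \emph{small}, and nothing in your sketch does that.

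Relatedly, the inequality you assert in step (i) --- $n(H(\uZ|\uY)-R_0)\le\mathbb{E}[{\rm smax}_w\langle\jmath_w,\varphi(\uY^n)\rangle]+O(n\lambda)+no(1)$ with the soft-max over messages --- is not derived and does not follow from Fano plus the saddle-point property as stated; the paper's actual route requires the intermediate functions $g_n,f_n,\bar f_n,h_n$, the bound $g_n(t)\le2\max\{t,f_n(t)\}$, the type-conditioning step costing $t^{1/5}\log|\mathcal{Z}|$, and the per-cell reduction $\bar f_n\le h_n$, none of which is replaced by anything in your outline. The purely geometric parts you call routine (Lemma~\ref{lem_general}, the Hadamard construction, the sign-symmetrization producing $\hat Y^n$ with $\mathbb{E}[\hat Y^n]=0$) are indeed as in the paper, but without the correct identification of the set to be packed and the counting-based packing lower bound, the exponent bookkeeping in your step (iv) cannot be carried out, so the proof as proposed does not go through.
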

\begin{proof}
Suppose that for parameters $n$, $R\le \log\abs{\mathcal{X}}$, $R_0$ and $\epsilon$, Model~1 can achieve error $P_e^{(n)}\le \epsilon$.
In the next a few sections we shall introduce functions $\mu$ \eqref{e_mu}, $\Delta$ \eqref{e_Delta} depending only on $P_{Z\vert X}$, and functions $g_n$, $f_n$, $\bar{f}_n$, and $h_n$, depending on $P_{Z\vert X}$ and $n$, all on $(0,\infty)$, 
such that 
\begin{align}
H(\uZ\vert \uY)-R_0&\le
g_n(\tau)+\tau+O(\sqrt{\lambda'})
&\textrm{(Lemma~\ref{lem6})};&
\\
g_n(t)&\le 2\max\{t,f_n(t)\}
&\textrm{(Lemma~\ref{lem7})};&
\\
f_n(t)&\le \bar{f}_n(t^{4/5})+t^{\frac1{5}}\log\abs{\mathcal{Z}}+O(\frac1{n}\log n)
&\textrm{(Lemma~\ref{lem8})};&
\\
\bar{f}_n(t)&\le h_n(t)
&\textrm{(Lemma~\ref{lem11})};&
\\ 
\limsup_{n\to\infty}h_n(t)&=
O(t^{\frac1{4}}\log\frac1{t})
&\textrm{(Lemma~\ref{lem_l1})}.&
\end{align}
where $\lambda':=C(\infty)-R+\mu(\epsilon)\ge 0$,
$\tau:=\mu(\Delta(\lambda'))+\lambda'+\tfrac{\log n}{n}$,
and $t\in (0,1/2)$ is arbitrary; all $O(\cdot)$ hide multiplicative factors depending only on $P_{Z\vert X}$, and their arguments are assumed to be close to 0.
From \eqref{e_mu} and \eqref{e_Delta} we have $\mu(t)=O(t\log\frac1{t})$ and $\Delta(t)=O(\sqrt{t})$.
Using these and taking $n\to\infty$ we obtain
\begin{align}
H(\uZ\vert \uY)-R_0
=
O\left((\lambda')^{\frac1{10}}\log^{\frac{6}{5}}\frac1{\lambda'}\right).
\end{align}
Taking $R\uparrow C_{\epsilon}(R_0)$ gives $H(\uZ\vert \uY)-R_0
=
O\left((\lambda'')^{\frac1{10}}\log^{\frac{6}{5}}\frac1{\lambda''}\right)$ where $\lambda'':=C(\infty)-C_{\epsilon}(R_0)+\mu(\epsilon)$.
Further taking $\epsilon\to0$ gives the desired estimate.
\end{proof}

The symmetry assumption $P_{Y\vert X}=P_{Z\vert X}$ is used in some steps in the proof of Theorem~\ref{thm1} (see \eqref{e_Delta} and \eqref{e_full}). 
However, some other regularity conditions may be imposed instead of symmetry for $R_{\rm crit}=H(\uZ\vert \uY)$ to hold, which we now discuss.

\begin{thm}\label{thm_general}
Let $\mathcal{X}$, $\mathcal{Y}$, $\mathcal{Z}$ bet finite sets, and consider arbitrary $P_{Y\vert X}$ and $P_{Z\vert X}$ (not necessarily $P_{Y\vert X}=P_{Z\vert X}$). 
Suppose that 
\begin{itemize}
\item The $z$-equivalence classes are singletons. 
That is, $z\mapsto \underline{z}$ is injective.
\item $P_{Y\vert X=x}(y)>0$ for any $x\in\mathcal{X}$ and $y\in\mathcal{Y}$. 
\end{itemize}
Then $R_{\rm crit}=H(Z\vert Y)$, where $(Y,Z)$ follows $Q_{YZ}$.
\end{thm}
The proof is given in Appendix~\ref{app_thm_general}.
Theorem~\ref{thm_general} implies the following:

\begin{thm}\label{thm3}
Suppose that $2\le \abs{\mathcal{X}}\le \abs{\mathcal{Y}}<\infty$ and $\abs{\mathcal{Z}}<\infty$.
Then we have $R_{\rm crit}=H(Z\vert Y)$
where $(Y,Z)$ follows $Q_{YZ}$, 
for almost all $(P_{Y\vert X}, P_{Z\vert X})\in (\Delta^{\abs{\mathcal{Y}}-1})^{\abs{\mathcal{X}}}\times (\Delta^{\abs{\mathcal{Z}}-1})^{\abs{\mathcal{X}}}$ (with respect to the measure induced by the metric, where $\Delta^d$ denotes the $d$-dimensional probability simplex).
\end{thm}
The proof is given in Appendix~\ref{app_thm3}.
\begin{rem}
After an initial version of this paper appeared on arXiv, El Gamal, Gohari and Nair also posted their concurrent work \cite{gamal_cut},
which employed the traditional auxiliary random variable approach and showed that when $\{P_{Y\vert X}(y\vert x)\}_{x,y}$ has full row-rank (meanwhile, the symmetry assumption $P_{Y\vert X}=P_{Z\vert X}$ can be dropped), compress-forward is optimal for achieving $R_{\rm crit}$.
\end{rem}

\subsection{The Capacity-Achieving Output Distribution}
\label{sec_caod}
For a general discrete memoryless channel $P_{Y\vert X}$,
the channel capacity is
\begin{align}
\sup_{P_X}I(X;Y)
&=\sup_{P_X}\inf_{Q_Y}D(P_{Y\vert X}\| Q_Y\vert P_X)
\label{e_mid}
\\
&=\inf_{Q_Y}\sup_{P_X}D(P_{Y\vert X}\|Q_Y\vert P_X)
\label{e_saddle}
\\
&=\inf_{Q_Y}\sup_xD(P_{Y\vert X=x}\|Q_Y).
\label{e_linear}
\end{align}
The steps are explained as follows: \eqref{e_mid} follows from $I(X;Y)=\inf_{Q_Y}D(P_{Y\vert X}\|Q_Y\vert P_X)
$. 
\eqref{e_saddle} and \eqref{e_linear} follow since $D(P_{Y\vert X}\|Q_Y\vert P_X)$ is linear in $P_X$ and convex in $Q_Y$.
These equivalent formulations of the capacity are known as the saddle point characterizations;
see \cite{minimax}.

The $Q_Y$ achieving the infimum in \eqref{e_saddle} is called the \emph{capacity-achieving output distribution}, which is known to be unique due to strong convexity. 
On the other hand there may be multiple $P_X$ achieving the supremum in \eqref{e_mid} unless additional assumptions such as full-rankness is imposed.

Now return to the relay channel problem.
Let us recall a classical argument to show that if the communication rate $R$ is close to the maximum capacity $C(\infty)$ and the error probability $\epsilon$ is small enough, then the output distribution $P_{Y^nZ^n}$ must be close to the capacity-achieving distribution.
Indeed, suppose that for parameters $n$, $R\le \log \abs{\mathcal{X}}$, $R_0$ and $\epsilon$, Model~1 can achieve error $P_e^{(n)}\le \epsilon$; we have
\begin{align}
nR
&\le I(X^n;V,Y^n)+n\mu(\epsilon)
\label{e_fano}
\\
&\le 
I(X^n;Z^n,Y^n)+n\mu(\epsilon)
\label{e_dpi}
\\
&= D(P_{Y^nZ^n\vert X^n}\|Q_{Y^nZ^n}\vert P_{X^n})
-D(P_{Y^nZ^n}\|Q_{Y^nZ^n})
+n\mu(\epsilon)
\\
&\le
nC_{\epsilon}(\infty)-D(P_{Y^nZ^n}\|Q_{Y^nZ^n})+n\mu(\epsilon)
\label{e_golden}
\end{align}
where 
\begin{itemize}
\item In \eqref{e_fano} we defined the function $\mu$ by
\begin{align}
\mu(\epsilon):=\epsilon\log\abs{\mathcal{X}}+{\rm H}(\epsilon)
\label{e_mu}
\end{align}
where ${\rm H}(\cdot)$ is the binary entropy function.
\eqref{e_fano} follows from Fano's inequality and the assumption $R\le \log\abs{\mathcal{X}}$.
\item \eqref{e_dpi} follows from the data processing inequality.
\item In \eqref{e_golden}, $Q_{Y^nZ^n}=Q_{YZ}^{\otimes n}$ and $Q_{YZ}$ is the capacity-achieving output distribution for the channel $P_{YZ\vert X}$.
\end{itemize}
Rearranging, 
and noticing that for point to point channel coding the capacity does not depend on the error probability (due to the strong converse \cite{wolfowitz1968}), 
we find $C_{\epsilon}(\infty)=C(\infty)=\sup_{P_X}I(X;YZ)$,
and obtain
\begin{align}
D(P_{Y^nZ^n}\|Q_{Y^nZ^n})
\le n[C(\infty)-R+\mu(\epsilon)]
\label{e87}
\end{align}
as desired.

Throughout this paper we will make the following assumption:
\begin{assump}\label{assump1}
For each $y\in\mathcal{Y}$, there exists $x\in\mathcal{X}$ such that $P_{Y\vert X=x}(y)>0$.
The same property holds for each $z\in\mathcal{Z}$.
\end{assump}
This assumption is without any loss of generality, since if $y$ is such that $P_{Y\vert X=x}(y)=0$ for all $x$, we can simply redefine $\mathcal{Y}$ by removing $y$.
Now from the saddle point condition we must have
\begin{align}
\{(y,z)\colon \max_xP_{Y\vert X=x}(y)P_{Z\vert X=x}(z)>0\}
\subseteq
\{(y,z)\colon Q_{YZ}(y,z)>0\}.
\label{e111}
\end{align} 
Indeed, otherwise we obtain that for some $x\in\mathcal{X}$, $\infty=D(P_{YZ\vert X=x}\|Q_{YZ})\le C(\infty)<\infty$, a contradition.
Now \eqref{e111} and Assumption~\ref{assump1} implies 
\begin{align}
Q_Y(y)>0,\,\forall y\in\mathcal{Y};
\quad 
Q_Z(z)>0,\,\forall z\in\mathcal{Z}.
\label{e_cond}
\end{align}

Next we define the function ``underline'' that we promised in the statement in the main result (Theorem~\ref{thm1}).
\begin{defn}\label{defn1}
Given $P_{Y\vert X}$ and $P_{Z\vert X}$ from Model~1, let $Q_{YZ}$ be the capacity-achieving output distribution associated with the channel $P_{Y\vert X}P_{Z\vert X}$.
We say two elements $z_1,z_2\in\mathcal{Z}$ are \emph{equivalent} if $Q_{Y\vert Z=z_1}=Q_{Y\vert Z=z_2}$ (where the conditional distributions are well-defined in view of \eqref{e_cond}).
Let $\uz$ denote the equivalent class of an arbitrary $z\in\mathcal{Z}$,
and let $\mathcal{\uZ}$ be the set of such equivalent classes.
Define $\uy$ and $\mathcal{\uY}$ similarly.
\end{defn}

\begin{prop}\label{prop7}
Given arbitrary $P_{Y\vert X}$ and $P_{Z\vert X}$, let $Q_{YZ}$ be the capacity-achieving output distribution.
Let $\uY$ and $\uZ$ be functions of $Y$ and $Z$ as defined by Definition~\ref{defn1}.
Then
\begin{enumerate}
\item $\uy_1=\uy_2$ if and only if $Q_{Z\vert \uY=\uy_1}= Q_{Z\vert \uY=\uy_2}$.
\item $\uy_1=\uy_2$ if and only if $Q_{\uZ\vert \uY=\uy_1}= Q_{\uZ\vert \uY=\uy_2}$.
\end{enumerate}
\end{prop}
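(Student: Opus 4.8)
The plan is to dispatch part~(1) almost immediately and to extract from Definition~\ref{defn1} a factorization of $Q_{YZ}$ that realizes $Z-\uZ-\uY$ (and, symmetrically, $Y-\uY-\uZ$) as a Markov chain; part~(2) then drops out.

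First I would record the bookkeeping facts to be used. By \eqref{e_cond} the marginals $Q_Y,Q_Z$ have full support, hence so do $Q_{\uY},Q_{\uZ}$, so every conditional law below is well defined. Since $\uY$ is a deterministic function of $Y$ and, by Definition~\ref{defn1}, all $y$ in a fixed class $\uy$ share the same conditional law $Q_{Z|Y=y}$, we have $Q_{Z|\uY=\uy}=Q_{Z|Y=y}$ for any representative $y$ of $\uy$; applying Definition~\ref{defn1} in the other orientation, $Q_{Y|\uZ=\uz}=Q_{Y|Z=z}$ for any $z\in\uz$. Part~(1) is then immediate: by Definition~\ref{defn1} the statement $\uy_1=\uy_2$ is literally $Q_{Z|Y=y_1}=Q_{Z|Y=y_2}$, and by the identity just noted this is equivalent to $Q_{Z|\uY=\uy_1}=Q_{Z|\uY=\uy_2}$.

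For part~(2) the crux is the identity $Q_{YZ}(y,z)=Q_Z(z)\,Q_{Y|\uZ=\uz}(y)$ for all $y\in\mathcal Y,\ z\in\mathcal Z$, where $\uz$ denotes the class of $z$; it follows by combining $Q_{YZ}(y,z)=Q_Z(z)Q_{Y|Z=z}(y)$ with $Q_{Y|Z=z}=Q_{Y|\uZ=\uz}$. Summing this identity over $y$ ranging in a class $\uy$, and then using $Q_Z(z)=Q_{\uZ}(\uz)Q_{Z|\uZ=\uz}(z)$, one obtains
\[
Q_{Z\uZ\uY}(z,\uz,\uy)=Q_{Z|\uZ=\uz}(z)\,Q_{\uZ\uY}(\uz,\uy)
\]
(with the convention $Q_{Z|\uZ}(z|\uz)=0$ unless $\uz$ is the class of $z$), i.e.\ $Z-\uZ-\uY$ is Markov. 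Conditioning on $\uY=\uy$ yields $Q_{Z|\uY=\uy}(z)=\sum_{\uz}Q_{\uZ|\uY=\uy}(\uz)\,Q_{Z|\uZ=\uz}(z)$, so $Q_{Z|\uY=\uy}$ is the image of $Q_{\uZ|\uY=\uy}$ under the \emph{fixed} stochastic kernel $Q_{Z|\uZ}$. With this in hand part~(2) follows: the forward implication is part~(1) followed by the push-forward of $Q_{Z|\uY=\uy}$ under the deterministic map $Z\mapsto\uZ$; conversely, if $Q_{\uZ|\uY=\uy_1}=Q_{\uZ|\uY=\uy_2}$, applying the kernel $Q_{Z|\uZ}$ to both sides gives $Q_{Z|\uY=\uy_1}=Q_{Z|\uY=\uy_2}$, whence $\uy_1=\uy_2$ by part~(1).

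The only place any thought is required is spotting the factorization of $Q_{YZ}$; once that displayed identity is written down, everything else is summation bookkeeping. It is worth noting that the argument uses only \eqref{e_cond} (full support of the marginals of the capacity-achieving $Q_{YZ}$) — in particular it needs neither the symmetry $P_{Y|X}=P_{Z|X}$ nor any further consequence of the saddle-point characterization.
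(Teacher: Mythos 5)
Your proposal is correct and takes essentially the same route as the paper's proof: part~(1) rests on the averaging identity $Q_{Z|\uY=\uy}=Q_{Z|Y=y}$, and part~(2) on $Q_{Y|Z=z}=Q_{Y|\uZ=\uz}$, which the paper rearranges into the pointwise identity $Q_{Z|\uY=\uy}(z)=Q_{\uZ|\uY=\uy}(\uz)\,Q_Z(z)/Q_{\uZ}(\uz)$. Your Markov-chain formulation $Z-\uZ-\uY$ (i.e.\ $Q_{Z|\uY=\uy}$ is the push-forward of $Q_{\uZ|\uY=\uy}$ under the fixed kernel $Q_{Z|\uZ}$) is just a repackaging of that same identity, so there is no substantive difference in approach.
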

\begin{proof}
\begin{enumerate}
\item We only need to show the ``if'' part.
Suppose that $\uy_1$ and $\uy_2$ are equivalent classes represented by $y_1$ and $y_2$, and $\uy_1\neq\uy_2$.
By definition we have that
\begin{align}
Q_{Z\vert Y=y}
=Q_{Z\vert Y=y_1}
\neq
Q_{Z\vert Y=y_2}
=Q_{Z\vert Y=y'}
\end{align}
for any $y\in \uy_1$ and $y'\in\uy_2$.
Then
\begin{align}
Q_{Z\vert \uY=\uy_1}&=\sum_{y\in\uy_1}Q_{Z\vert Y=y}
Q_{Y\vert \uY=\uy_1}(y)
\\
&=\sum_{y\in\uy_1}
Q_{Z\vert Y=y_1}
Q_{Y\vert \uY=\uy_1}(y)
\\
&=Q_{Z\vert Y=y_1}.
\label{e114}
\end{align}
Similarly, $Q_{Z\vert \uY=\uy_2}=Q_{Z\vert Y=y_2}$, so $Q_{Z\vert \uY=\uy_1}\neq Q_{Z\vert \uY=\uy_2}$.
\item Again it suffices to show the ``if'' part.
In view of Part~(1), it suffices to show the following statement:
\begin{align}
Q_{\uZ\vert \uY=\uy_1}=Q_{\uZ\vert \uY=\uy_2}
\Longrightarrow 
Q_{Z\vert \uY=\uy_1}=Q_{Z\vert \uY=\uy_2}
\label{e115}
\end{align}
for arbitrary $y_1$ and $y_2$.
To show \eqref{e115},
note that by the same argument in \eqref{e114},
we have 
\begin{align}
Q_{Y\vert Z}(y_1\vert z)=Q_{Y\vert \uZ}(y_1\vert \uz),
\quad \forall y_1\in\mathcal{Y},z\in\mathcal{Z},
\end{align}
and since $\uY$ is a function of $Y$ we have
\begin{align}
Q_{\uY\vert Z}(\uy_1\vert z)=Q_{\uY\vert \uZ}(\uy_1\vert \uz),
\quad \forall y_1\in\mathcal{Y},z\in\mathcal{Z}.
\label{e117}
\end{align}
Upon rearrangements, \eqref{e117} is equivalent to
\begin{align}
Q_{Z\vert \uY=\uy_1}(z)=
Q_{\uZ\vert \uY=\uy_1}(\uz)\cdot\frac{Q_Z(z)}{Q_{\uZ}(\uz)}
\end{align}
from which the validity of \eqref{e115} is immediate.
\end{enumerate}
\end{proof}

As mentioned, the fact that $z_1,z_2$ being in the same equivalent class does not necessarily imply that $\uZ$ is a sufficient statistic of $Z$ for $X$,
however we will show that we can eliminate a portion of $\mathcal{X}$ so that the implication holds.
Let us introduce the following notations:
\begin{align}
\mathcal{X}_{\rm bad}&:=
\left\{x\in\mathcal{X}\colon\exists z\neq z' \textrm{such that $\uz=\uz'$ but 
$\frac{P_{Z\vert X=x}(z)}{Q_Z(z)}\neq
\frac{P_{Z\vert X=x}(z')}{Q_Z(z')}$}\right\}
\label{e_bad}
\\
\mathcal{X}_{\rm good}&:=\mathcal{X}_{\rm bad}^c.
\label{e_good}
\end{align}
Next, we will show that the probability of $\mathcal{X}_{\rm bad}$ must be small if the output distribution is close to capacity-achieving.

\begin{prop}\label{prop_delta}
Suppose that $P_{Y\vert X}=P_{Z\vert X}$. 
\begin{enumerate}
\item Define function $\Delta(\cdot)$ such that for each $\delta>0$,
\begin{align}
\Delta(\delta):=\sup_{S_X}\{S_X(\mathcal{X}_{\rm bad})\}
\label{e_Delta}
\end{align}
where the supremum is over all distribution $S_X$ on $\mathcal{X}$ such that 
\begin{align}
D(\sum_{x\in\mathcal{X}}P_{YZ\vert X=x}S_X(x)\|
Q_{YZ})\le \delta.
\end{align}
Then $\Delta(\delta)=O(\sqrt{\delta})$ as $\delta\to0$. 
\item The capacity of the channel $P_{\uY\uZ\vert X}$ equals the capacity of the original channel $P_{YZ\vert X}$, with the capacity-achieving output distribution $Q_{\uY\uZ}$\footnote{A priori, $Q_{\uY\uZ}$ is the distribution induced by $Q_{YZ}$, the capacity-achieving output distribution for $P_{YZ\vert X}$, and the functions $Y\mapsto\uY$ and $Z\mapsto \uZ$. This proposition shows that the notation also coincides with the capacity-achieving output distribution for the channel $P_{\uY\uZ\vert X}$.}.
\end{enumerate}
\end{prop}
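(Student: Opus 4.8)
\textbf{Proof proposal for Proposition~\ref{prop_delta}.}

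The plan is to treat the two parts separately, with Part~(1) being the analytic heart and Part~(2) a structural consequence.

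For Part~(1), the quantity $\Delta(\delta)$ measures how much mass a near-capacity-achieving output distribution can place on $\mathcal{X}_{\rm bad}$. The key is that membership in $\mathcal{X}_{\rm bad}$ is a ``rigid'' algebraic condition: $x\in\mathcal{X}_{\rm bad}$ forces the existence of a pair $z\neq z'$ with $\uz=\uz'$ but $P_{Z|X=x}(z)/Q_Z(z)\neq P_{Z|X=x}(z')/Q_Z(z')$. First I would observe that under the symmetry assumption $P_{Y|X}=P_{Z|X}$, if $S_X$ is supported on $\mathcal{X}_{\rm bad}$ (or places mass $\eta$ there), then the induced output distribution $P_{YZ}^{S}:=\sum_x P_{YZ|X=x}S_X(x)$ satisfies, for each equivalence class $\uz$, a conditional law $P^{S}_{Y|Z}(\cdot\mid z)$ that \emph{cannot} be constant in $z$ over the class $\uz$ once there is positive mass from bad symbols — whereas the capacity-achieving $Q_{YZ}$ has $Q_{Y|Z=z}$ constant on each equivalence class by Definition~\ref{defn1}. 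This gives a strict separation: there is a constant $\kappa>0$ (depending only on the channel) such that placing mass $\eta$ on $\mathcal{X}_{\rm bad}$ forces $\|P^{S}_{YZ}-Q_{YZ}\|$ to be bounded below, in an appropriate metric, by something like $\kappa\eta$. Since Pinsker's inequality gives $\|P^{S}_{YZ}-Q_{YZ}\|_{\rm TV}^2\le \tfrac12 D(P^{S}_{YZ}\|Q_{YZ})\le \tfrac12\delta$, we conclude $\eta = O(\sqrt{\delta})$, which is exactly the claimed rate. The rank-2 remark following Example~\ref{exp3} and referenced in the statement (``for a rank 2 matrix $A$ and another arbitrary matrix $B$, their product may be rank 1, but $AA^{\top}$ always remains rank 2'') is what guarantees the separation constant $\kappa$ is strictly positive under symmetry: it ensures the bad symbols genuinely perturb the joint output law rather than being washed out.

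For Part~(2), I would argue that $C(P_{\uY\uZ|X}) = C(P_{YZ|X})$ by a sandwiching. The inequality $C(P_{\uY\uZ|X})\le C(P_{YZ|X})$ is immediate from the data-processing inequality, since $(\uY,\uZ)$ is a function of $(Y,Z)$. For the reverse, I would use the saddle-point characterization \eqref{e_linear}: it suffices to exhibit a candidate output distribution on $\mathcal{\uY}\times\mathcal{\uZ}$ — namely $Q_{\uY\uZ}$, the pushforward of $Q_{YZ}$ — for which $\sup_x D(P_{\uY\uZ|X=x}\|Q_{\uY\uZ})$ is at most $C(P_{YZ|X})$. By Proposition~\ref{prop7} and the structure of the equivalence classes, $Q_{Y|Z}$ and $Q_{\uY|\uZ}$ are linked via the deterministic maps so that the relative entropy $D(P_{YZ|X=x}\|Q_{YZ})$ decomposes as $D(P_{\uY\uZ|X=x}\|Q_{\uY\uZ})$ plus a conditional term $D(P_{YZ|\uY\uZ,X=x}\|Q_{YZ|\uY\uZ})$; on $\mathcal{X}_{\rm good}$ the latter vanishes because the within-class conditional law of $(Y,Z)$ given $(\uY,\uZ)$ under $P_{YZ|X=x}$ matches that under $Q_{YZ}$ (this is precisely the content of the good-symbol definition, that $P_{Z|X=x}(z)/Q_Z(z)$ is constant on each class). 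Hence for $x\in\mathcal{X}_{\rm good}$ we get $D(P_{\uY\uZ|X=x}\|Q_{\uY\uZ}) = D(P_{YZ|X=x}\|Q_{YZ}) \le C(P_{YZ|X})$. Since the capacity-achieving input distribution for $P_{YZ|X}$ can be taken supported on $\mathcal{X}_{\rm good}$ (Part~(1) with $\delta\to 0$ shows the optimal $P_X$ assigns zero mass to $\mathcal{X}_{\rm bad}$), we conclude both that the capacities agree and that $Q_{\uY\uZ}$ satisfies the saddle-point condition, hence is the capacity-achieving output distribution for $P_{\uY\uZ|X}$.

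The main obstacle I anticipate is making the separation constant $\kappa$ in Part~(1) explicit and confirming it is positive — that is, verifying rigorously that positive mass on $\mathcal{X}_{\rm bad}$ genuinely moves the joint output distribution away from $Q_{YZ}$ by a definite amount, rather than having the perturbation cancel. This is where the symmetry hypothesis $P_{Y|X}=P_{Z|X}$ (via the rank-2 observation) is essential, since without it the product channel could collapse the bad perturbation. The decomposition of relative entropy in Part~(2) across the deterministic maps is routine once Proposition~\ref{prop7} is in hand, and the DPI direction is standard.
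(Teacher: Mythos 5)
Your Part (2) is essentially the paper's own argument: data processing gives $C(P_{\uY\uZ|X})\le C(P_{YZ|X})$; Part (1) applied with $\delta=0$ shows any capacity-achieving input puts no mass on $\mathcal{X}_{\rm bad}$; and for $x\in\mathcal{X}_{\rm good}$ the within-class ratio identity $\frac{P_{YZ|X=x}(y,z)}{Q_{YZ}(y,z)}=\frac{P_{\uY\uZ|X=x}(\uy,\uz)}{Q_{\uY\uZ}(\uy,\uz)}$ (equivalently, your chain-rule decomposition with vanishing conditional term) yields $I(X;\uY\uZ)=I(X;YZ)$ and identifies $Q_{\uY\uZ}$ as the capacity-achieving output distribution. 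One detail you gloss over: the definition of $\mathcal{X}_{\rm bad}$ only constrains the $Z$-side ratios, so the $Y$-side constancy you need on each $\uy$-class additionally uses that $Q_{YZ}$ is exchangeable under $P_{Y|X}=P_{Z|X}$ (uniqueness of the capacity-achieving output distribution), so that $\uY$-classes coincide with $\uZ$-classes and $Q_Y=Q_Z$; the paper then verifies $Q_{YZ}(y,z)/[Q_Y(y)Q_Z(z)]$ is constant within classes by a short computation. These are fixable details, not a change of route.

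The genuine gap is in Part (1), and it sits exactly at the step you flag as ``the main obstacle'': you assert, but do not prove, that mass $\eta$ on $\mathcal{X}_{\rm bad}$ forces $\|P^S_{YZ}-Q_{YZ}\|\ge\kappa\eta$. Your proposed mechanism (``the mixture's conditional law $P^S_{Y|Z}$ cannot be constant on a class'') is not the statement that does the work: a mixture over $x$ could a priori restore constancy even when individual components violate it, and constancy of $P^S_{Y|Z}$ is in any case measured against $S$-marginals rather than against $Q$. The paper closes this with a specific linear witness that is componentwise a square. After a compactness/lower-semicontinuity step (the supremum in \eqref{e_Delta} need not be attained), pick $x_0\in\mathcal{X}_{\rm bad}$ with $S_X(x_0)\ge\Delta_0/|\mathcal{X}|$ and a witnessing pair $z\ne z'$ with $\uz=\uz'$, and set $L(R):=\frac{R(z,z)}{Q_Y(z)Q_Z(z)}+\frac{R(z',z')}{Q_Y(z')Q_Z(z')}-2\,\frac{R(z,z')}{Q_Y(z)Q_Z(z')}$. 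Since $Y$ and $Z$ are conditionally independent and $P_{Y|X}=P_{Z|X}$ (whence also $Q_Y=Q_Z$), one gets $L(P_{YZ|X=x})=\bigl(\frac{P_{Z|X=x}(z)}{Q_Z(z)}-\frac{P_{Z|X=x}(z')}{Q_Z(z')}\bigr)^2\ge 0$ for every $x$ --- this pointwise nonnegativity is precisely what rules out the cancellation you worry about --- and it is bounded below by a positive channel constant at $x=x_0$ (a minimum over the finite alphabet). On the other hand, $\uz=\uz'$ means $Q_{Y|Z=z}=Q_{Y|Z=z'}$, which makes the $2\times 2$ block of $Q_{YZ}$ on $\{z,z'\}^2$ proportional to the rank-one block $Q_Y\otimes Q_Z$, so $L(Q_{YZ})=0$. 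Hence $L(S_{YZ})-L(Q_{YZ})\ge c\,\Delta_0$, and since the coefficients of $L$ are bounded (the relevant $Q$-masses are positive by \eqref{e_cond}), this gives $\|S_{YZ}-Q_{YZ}\|_{TV}\ge c'\Delta_0$; Pinsker then yields $\delta\ge c''\Delta_0^2$, i.e.\ $\Delta(\delta)=O(\sqrt{\delta})$. Without this (or an equivalent) witness functional, your Part (1) is a correct outline of the paper's strategy but not yet a proof.
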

\begin{proof}
\begin{enumerate}
\item Suppose that $\Delta(\delta)=\Delta_0$.
By compactness of the probability simplex and the lower semicontinuity of the relative entropy, we can find a sequence $S_X^{(1)}$, $S_X^{(2)}$ \dots converging to $S_X^{\star}$ such that
\begin{align}
S_X^{\star}(\mathcal{X}_{\rm bad})
&= 
\lim_{j\to\infty}S_X^{(j)}(\mathcal{X}_{\rm bad})
= \Delta_0;
\\
D(\sum_{x\in\mathcal{X}}P_{YZ\vert X=x}S_X^{\star}(x)\|
Q_{YZ})&\le 
\liminf_{j\to\infty}D(\sum_{x\in\mathcal{X}}P_{YZ\vert X=x}S_X^{(j)}(x)\|
Q_{YZ})\le \delta.
\end{align}
Then there exists some $x_0\in\mathcal{X}_{\rm bad}$ such that 
\begin{align}
S_X^{\star}(x_0)>\frac{\Delta_0}{\abs{\mathcal{X}}}.
\label{e133}
\end{align}
By definition we can then find $z\neq z'$ satisfying $\uz=\uz'$ but 
$\frac{P_{Z\vert X=x_0}(z)}{Q_Z(z)}\neq
\frac{P_{Z\vert X=x_0}(z')}{Q_Z(z')}$.
Next we shall use the symmetry $P_{Y\vert X}=P_{Z\vert X}$.
The fact that $\uz=\uz'$ means $Q_{Y\vert Z=z}=Q_{Y\vert Z=z'}$, hence
\begin{align}
\frac{Q_{YZ}(z,z)}{Q_Z(z)}
&=\frac{Q_{YZ}(z,z')}{Q_Z(z')};
\label{e_83}
\\
\frac{Q_{YZ}(z',z)}{Q_Z(z)}
&=\frac{Q_{YZ}(z',z')}{Q_Z(z')}.
\label{e_84}
\end{align}
Next, we note that $Q_{YZ}(y,z)=Q_{YZ}(z,y)$ for any $(y,z)\in\mathcal{Y}\times \mathcal{Z}$ since $P_{Y\vert X}=P_{Z\vert X}$ and $Q_{YZ}$ is induced by $P_{Y\vert X}P_{Z\vert X}$ and some $P_X$. 
Therefore the following is a symmetric matrix
\begin{align}
\left(
\begin{array}{cc}
Q_{YZ}(z,z)  &   Q_{YZ}(z,z')  \\
Q_{YZ}(z',z)  &   Q_{YZ}(z',z')   
\end{array}
\right).\label{e85}
\end{align}
Using \eqref{e_83} and \eqref{e_84} we see that \eqref{e85}
is a scalar multiple of the following rank-1 matrix 
\begin{align}
\left(
\begin{array}{cc}
Q_Y(z)Q_Z(z)  &   Q_Y(z)Q_Z(z')  \\
Q_Y(z')Q_Z(z)  &   Q_Y(z')Q_Z(z')   
\end{array}
\right)
\end{align}
and hence
\begin{align}
\frac{Q_{YZ}(z,z)}{Q_Y(z)Q_Z(z)}
+
\frac{Q_{YZ}(z',z')}{Q_Y(z')Q_Z(z')}
-2\cdot\frac{Q_{YZ}(z,z')}{Q_Y(z)Q_Z(z')}=0.
\label{e129}
\end{align}
On the other hand, using $P_{Y\vert X}=P_{Z\vert X}$ and by completing the square we see that
\begin{align}
&\frac{P_{YZ\vert X=x}(z,z)}{Q_Y(z)Q_Z(z)}
+
\frac{P_{YZ\vert X=x}(z',z')}{Q_Y(z')Q_Z(z')}
-2\cdot\frac{P_{YZ\vert X=x}(z,z')}{Q_Y(z)Q_Z(z')}
\nonumber\\
&=
\left(
\frac{P_{Z\vert X=x}(z)}{Q_Z(z)}-\frac{P_{Z\vert X=x}(z')}{Q_Y(z')}
\right)^2\ge0
\end{align}
for all $x\in\mathcal{X}$,
with strict inequality when $x=x_0$, and the gap to 0 is a constant depending only on $P_{Y\vert X}$ (not on $\Delta_0$).
Integrating with respect to $S_X^{\star}$ and using \eqref{e133}, we have 
\begin{align}
\frac{S_{YZ}(z,z)}{Q_Y(z)Q_Z(z)}
+
\frac{S_{YZ}(z',z')}{Q_Y(z')Q_Z(z')}
-2\cdot\frac{S_{YZ}(z,z')}{Q_Y(z)Q_Z(z')}\ge c\Delta_0
\label{e131}
\end{align}
where $S_{YZ}:=\sum_{x\in\mathcal{X}}P_{YZ\vert X=x}S_X^{\star}(x)$ and $c$ is a constant depending only on $P_{Y\vert X}$ (not on $\Delta_0$).
Comparing \eqref{e129} and \eqref{e131} we obtain the lower bound the total variation distance:
\begin{align}
\|S_{YZ}-Q_{YZ}\|_{TV}\ge c'\Delta_0
\end{align}
and in turn,
\begin{align}
\delta:=D(S_{YZ}\|Q_{YZ})\ge c''\Delta_0^2
\end{align}
where $c',c''\in (0,\infty)$ are constants depending only on $P_{Y\vert X}$.
This establishes that $\Delta(\delta)=\Delta_0=O(\sqrt{\delta})$.
 
\item
Suppose that $P_X$ is an arbitrary capacity achieving input distribution for the channel $P_{YZ\vert X}$, so that 
\begin{align}
\sum_{x\in\mathcal{X}}P_{YZ\vert X=x}P_X(x)=Q_{YZ}. \label{e_px}
\end{align}
Then part (1) implies that $P_X(\mathcal{X}_{\rm bad})=0$.
Consider arbitrary $x$ on the support of $P_X$, so that $x\in\mathcal{X}_{\rm good}$, and arbitrary $(y,z)$ and $(y',z')$ satisfying $P_{YZ\vert X=x}(y,z)>0$, $P_{YZ\vert X=x}(y',z')>0$, and $\uz=\uz'$, $\uy=\uy'$.
Then \eqref{e111} implies $Q_{YZ}(y,z)$ and $Q_{YZ}(y',z')>0$, and $x\in\mathcal{X}_{\rm good}$ implies
\begin{align}
\frac{P_{Z\vert X=x}(z)}{Q_Z(z)}=
\frac{P_{Z\vert X=x}(z')}{Q_Z(z')},
\quad
\frac{P_{Y\vert X=x}(y)}{Q_Y(y)}=
\frac{P_{Y\vert X=x}(y')}{Q_Y(y')}.
\end{align}
Hence 
\begin{align}
\frac{P_{YZ\vert X=x}(y,z)}{Q_Y(y)Q_Z(z)}
=\frac{P_{YZ\vert X=x}(y',z')}{Q_Y(y')Q_Z(z')}.
\end{align}
However,
\begin{align}
Q_Y(y)Q_Z(z)&=Q_Y(y)\frac{Q_{YZ}(y,z)}{Q_{Y\vert Z}(y\vert z)}
\\
&=Q_Y(y)\frac{Q_{YZ}(y,z)}{Q_{Y\vert Z}(y\vert z')}
\\
&=Q_Y(y)\frac{Q_{YZ}(y,z)Q_Z(z')}{Q_{YZ}(y,z')}
\\
&=\frac{Q_{YZ}(y,z)Q_Z(z')}{Q_{Z\vert Y}(z'\vert y)}
\\
&=\frac{Q_{YZ}(y,z)Q_Z(z')}{Q_{Z\vert Y}(z'\vert y')}
\\
&=\frac{Q_{YZ}(y,z)Q_Z(z')Q_Y(y')}{Q_{YZ}(y',z')},
\end{align}
therefore,
\begin{align}
\frac{P_{YZ\vert X=x}(y,z)}{Q_{YZ}(y,z)}
=\frac{P_{YZ\vert X=x}(y',z')}{Q_{YZ}(y',z')}.
\end{align}
Using the elementary identity $\frac{A}{B}=\frac{C}{D}\Rightarrow\frac{A}{B}=\frac{A+C}{B+D}$, the above implies that 
\begin{align}
\frac{P_{YZ\vert X=x}(y,z)}{Q_{YZ}(y,z)}
=\frac{P_{\uY\uZ\vert X=x}(\uy,\uz)}{Q_{\uY\uZ}(\uy,\uz)}.
\label{e102}
\end{align}
Since $\sum_{x\in\mathcal{X}}P_{\uY\uZ\vert X=x}P_X(x)=Q_{\uY\uZ}$ by \eqref{e_px},
taking the expectation of \eqref{e102} with respect to $P_XP_{YZ\vert X}$ we obtain
$I(X;\uY\uZ)=I(X;YZ)$.
Therefore the capacity of the channel $P_{\uY\uZ\vert X}$ is no less than the capacity of the original channel $P_{YZ\vert X}$,
and so they must be equal.
Moreover $P_X$ is a capacity-achieving input distribution while $Q_{\uY\uZ}$ is the capacity-achieving output distribution.
\end{enumerate}
\end{proof}
\begin{rem}
Proposition~\ref{prop_delta} may fail when the assumption $P_{Y\vert X}=P_{Z\vert X}$ is removed.
Indeed, this is implied by the following example where there exists $x_0\in\mathcal{X}_{\rm bad}$ in the support of a capacity achieving input distribution: Consider the case where $\mathcal{X}=\{x_0,x_1,x_2,x_3\}$, $Y$ and $Z$ are binary, and 
\begin{align}
P_{Y\vert X=x_0}=P_{Y\vert X=x_1}=[1,0];
\\
P_{Y\vert X=x_2}=P_{Y\vert X=x_3}=[0,1];
\\
P_{Z\vert X=x_0}=P_{Z\vert X=x_2}=[1,0];
\\
P_{Z\vert X=x_1}=P_{Z\vert X=x_3}=[0,1].
\end{align}
We can see that the capacity achieving output distribution is $Q_{YZ}$ equiprobable on $\mathcal{Y}\times \mathcal{Z}$, achieved when $X$ is equiprobable on $\mathcal{X}$. Thus the two elements in $\mathcal{Z}$ are in the same equivalent class, but $x_0\in\mathcal{X}_{\rm bad}$.
\end{rem}

We can reduce the channel $P_{YZ\vert X}$ to $P_{\uY\uZ\vert X}$ if the input symbols are restricted to $\mathcal{X}_{\rm good}$.
In reality such a restriction is not in place; however the probability of $\mathcal{X}_{\rm bad}$ must be small if the scheme is nearly capacity-achieving.
Therefore with a small bit of oracle message taking care of the symbols in $\mathcal{X}_{\rm bad}$, a good scheme for Model~1 can be converted to a good scheme for the following modified model which is essentially for the channel $P_{\uY\uZ\vert X}$:

{\bf Model~2: } (Figure~\ref{f_m2})
Suppose that $P_{Y\vert X}$ and $P_{Z\vert X}$ are given.
The model is similar to Model~1, but the channel outputs are the equivalent classes $\uY^n$ and $\uZ^n$.
Moreover, the relay and the decoder receive an oracle information $E=(E_1,\dots,E_n)$, which is a function of $x^n$: 
\begin{align}
E_i&=x_i,\quad\textrm{if $x_i\in\mathcal{X}_{\rm bad}$};
\\
E_i&=*, \quad\textrm{if $x_i\in\mathcal{X}_{\rm good}$},
\end{align}
where $*$ is a dummy symbol.
(Alternatively, $E$ may be defined as the equivalent class of $x^n$ for which coordinates in $\mathcal{X}_{\rm bad}$ agree.)
The outputs of the relay and the decoder are given by the functions $V=V(\uZ^n,E)\in\{1,2,\dots,\lfloor\exp(nR_0)\rfloor\}$ and $\hat{W}=\hat{W}(V,\uY^n,E)\in \{1,2,\dots,\lfloor\exp(nR)\rfloor\}$.

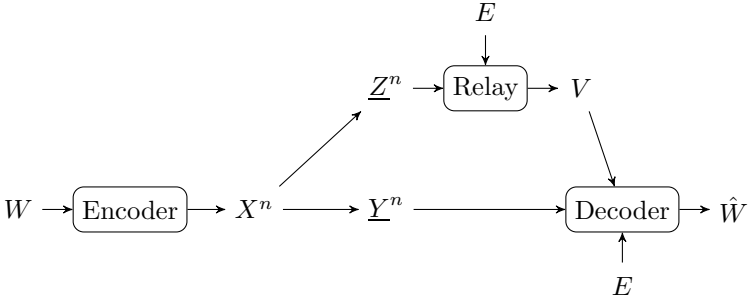
\begin{figure}[h]
  \centering
\begin{tikzpicture}
[node distance=0.6cm,minimum height=6mm,minimum width=6mm,arw/.style={->,>=stealth'}]
  \node[rectangle,draw,rounded corners] at (0,0) (E) {Encoder};
   \node[rectangle] [right =0.5cm of E](X) {$X^n$};
  \node[rectangle] (Y) [right =1cm of X] {$\uY^n$};
  \node[rectangle] [above = 1cm of Y](Z){$\uZ^n$};
  \node[rectangle] (W) [left =0.4cm of E] {$W$};
  \node[rectangle,draw,rounded corners] (D) [right =2cm of Y] {Decoder};
  \node[rectangle,draw,rounded corners] (R) [right =0.4cm of Z] {Relay};
  \node[rectangle] (V) [right =0.4cm of R] {$V$};  
  \node[rectangle] (H) [right =0.4cm of D] {$\hat{W}$};
  \node[rectangle] (E1) [above =0.4cm of R] {$E$};  
  \node[rectangle] (E2) [below =0.4cm of D] {$E$}; 

  \draw [arw] (W) to node[midway,above]{} (E);
  \draw [arw] (E) to node[midway,above]{} (X);
  \draw [arw] (X) to node[midway,above]{} (Y);  
  \draw [arw] (X) to node[midway,right]{} (Z);
  \draw [arw] (Z) to node[midway,right]{} (R);
  \draw [arw] (R) to node[midway,right]{} (V);  
  \draw [arw] (V) to node[midway,right]{} (D); 
  \draw [arw] (Y) to node[midway,right]{} (D);  
  \draw [arw] (D) to node[midway,right]{} (H);  
  \draw [arw] (E1) to node[midway,right]{} (R);
  \draw [arw] (E2) to node[midway,right]{} (D);       
\end{tikzpicture}
\caption{Relay channel with side information}
\label{f_m2}
\end{figure}
\begin{prop}\label{prop1}
Suppose there exist encoder, relay, and decoder in Model~1 with parameters $R$, $R_0$ such that $P_e^{(n)}\le \epsilon$
for some $n$ and $\epsilon\in (0,1)$.
Then there exist relay, decoder in Model~2 (and with the same encoder in Model~1) such that
$P_e^{(n)}\le \epsilon$, $R$ and $R_0$ remain the same, and 
\begin{align}
\frac1{n}\mathbb{E}\left[\sum_{i=1}^n1_{X_i\in\mathcal{X}_{\rm bad}}\right]
\le
\Delta(C(\infty)-R+\mu(\epsilon))
\label{e_93}
\end{align}
where $X^n$ is the output of the encoder,
and $\Delta(\cdot)$ is the function defined in Proposition~\ref{prop_delta}.
\end{prop}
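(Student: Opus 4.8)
The plan is to establish the two assertions in turn: first construct a Model~2 relay and decoder with the same error probability, then bound the expected number of bad input symbols; the second step is routine single-letterization, while the first carries all the subtlety.

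\textbf{The Model~2 scheme.} The crux is a sufficiency-type identity that merely unpacks the definition \eqref{e_good}: for $x\in\mathcal{X}_{\rm good}$ the ratio $P_{Z|X=x}(z)/Q_Z(z)$ depends on $z$ only through $\uz$, so
\begin{align}
P_{Z|X=x,\,\uZ=\uz}(z)=\frac{Q_Z(z)}{Q_Z(\uz)}=Q_{Z|\uZ=\uz}(z),
\end{align}
which is \emph{independent of} $x$. Under the symmetry $P_{Y|X}=P_{Z|X}$ the channel $P_{YZ|X}$ is invariant under swapping its two outputs, so by uniqueness of the capacity-achieving output distribution $Q_{YZ}$ is swap-invariant, whence $Q_Y=Q_Z$ and the $\uY$- and $\uZ$-partitions coincide; consequently the same identity holds with $Y$ in place of $Z$. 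Given the Model~1 code, I let the Model~2 relay build a surrogate $\tilde Z^n$ from $(\uZ^n,E)$ coordinatewise with fresh independent randomness: if $E_i=x_i\in\mathcal{X}_{\rm bad}$ draw $\tilde Z_i\sim P_{Z|X=x_i,\,\uZ=\uZ_i}$, and if $E_i=*$ draw $\tilde Z_i\sim Q_{Z|\uZ=\uZ_i}$; the decoder builds $\tilde Y^n$ from $(\uY^n,E)$ the same way. By the identity, in both cases coordinate $i$ has conditional law $P_{Y|X=X_i,\,\uY=\uY_i}$ (resp.\ $P_{Z|X=X_i,\,\uZ=\uZ_i}$), hence $\tilde Y^n\mid X^n\sim\prod_iP_{Y|X=X_i}$ and $\tilde Z^n\mid X^n\sim\prod_iP_{Z|X=X_i}$; since the channel is a product channel $\uY^n\perp\uZ^n\mid X^n$, and the two regenerations use independent randomness, so $(W,X^n,\tilde Y^n,\tilde Z^n)$ has the same joint law as $(W,X^n,Y^n,Z^n)$ in Model~1. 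Letting the Model~2 relay send $V(\tilde Z^n)$ and the decoder output $\hat W(V,\tilde Y^n)$ with the original Model~1 maps then reproduces the error distribution exactly, so $P_e^{(n)}\le\epsilon$; the auxiliary randomness can afterwards be fixed to a value at which the error is still at most $\epsilon$, which does not change \eqref{e_93} since that quantity depends only on the encoder.

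\textbf{The bad-symbol bound.} From \eqref{e87}, $D(P_{Y^nZ^n}\,\|\,Q_{YZ}^{\otimes n})\le n\delta_0$ with $\delta_0:=C(\infty)-R+\mu(\epsilon)$. Since $D(P_{Y^nZ^n}\|Q_{YZ}^{\otimes n})=D(P_{Y^nZ^n}\|\textstyle\prod_iP_{Y_iZ_i})+\sum_iD(P_{Y_iZ_i}\|Q_{YZ})$, we get $\sum_iD(P_{Y_iZ_i}\|Q_{YZ})\le n\delta_0$. Write $S_X^{(i)}$ for the law of $X_i$ and $\delta_i:=D(P_{Y_iZ_i}\|Q_{YZ})$, noting $P_{Y_iZ_i}=\sum_xP_{YZ|X=x}S_X^{(i)}(x)$; then the definition of $\Delta$ in Proposition~\ref{prop_delta} gives $\mathbb{P}[X_i\in\mathcal{X}_{\rm bad}]=S_X^{(i)}(\mathcal{X}_{\rm bad})\le\Delta(\delta_i)$. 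Hence
\begin{align}
\frac1n\mathbb{E}\Big[\sum_{i=1}^n1_{X_i\in\mathcal{X}_{\rm bad}}\Big]\le\frac1n\sum_{i=1}^n\Delta(\delta_i)\le\Delta\Big(\frac1n\sum_{i=1}^n\delta_i\Big)\le\Delta(\delta_0),
\end{align}
the middle step being Jensen's inequality together with concavity of $\Delta$ --- which holds because $\Delta(\delta)$ is the supremum of the linear functional $S_X\mapsto S_X(\mathcal{X}_{\rm bad})$ over the sublevel set $\{S_X:D(\sum_xP_{YZ|X=x}S_X(x)\|Q_{YZ})\le\delta\}$ of a convex function, so a convex combination of near-optima at $\delta_1,\delta_2$ is feasible at $\lambda\delta_1+(1-\lambda)\delta_2$ --- and the last step using that $\Delta$ is non-decreasing while $\frac1n\sum_i\delta_i\le\delta_0$.

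The main obstacle is the first step: verifying that the surrogate $(\tilde Y^n,\tilde Z^n)$ is \emph{exactly}, not merely approximately, distributed as $(Y^n,Z^n)$ jointly with $(W,X^n)$. This rests on the sufficiency identity for good symbols, on transporting the $Z$-only definition of $\mathcal{X}_{\rm good}$ to the $Y$-branch through the symmetry $P_{Y|X}=P_{Z|X}$, and on careful bookkeeping of the conditional independences so that feeding the surrogates into the Model~1 relay and decoder maps leaves the error probability untouched.
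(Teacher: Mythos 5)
Your proposal is correct and takes essentially the same route as the paper: simulate $Z^n$ and $Y^n$ coordinatewise from $(\uZ^n,E)$ and $(\uY^n,E)$ (using that for $x\in\mathcal{X}_{\rm good}$ the conditional law within a class is $Q_{Z|\uZ}$, with the symmetry $P_{Y|X}=P_{Z|X}$ transporting this to the $Y$-branch), feed the original Model~1 relay and decoder, derandomize, and then combine \eqref{e87} with the definition of $\Delta$. The only cosmetic differences are that the paper resamples bad coordinates from $P_{Z|X=E_i}$ without conditioning on $\uZ_i$, and averages the single-letter marginals $P_{Y_iZ_i}$ before applying $\Delta$ once, rather than your per-coordinate application of $\Delta$ followed by its concavity and monotonicity — the two arguments are equivalent.
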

\begin{proof}
The relay can simulate $Z^n$ based on $E$ and $\uZ^n$.
Indeed, if $i$ is such that $E_i\in \mathcal{X}_{\rm bad}$ then generate $Z_i\sim P_{Z\vert X\uZ}(\cdot\vert E_i,\uZ_i)$ (where $P_{Z\vert X\uZ}$ is induced by $P_{Z\vert X}$ and the mapping $z\mapsto \uz$);
if $E_i=*$ then generate $Z_i$ with probability $\frac{Q_Z(z)}{Q_{\uZ}(\uz)}$ for each $z$ in the equivalent class $\uz$.
The simulation uses independent randomness for different $i$.
Then from the definition of the equivalent class we see that the distribution of $Z^n$ conditioned on $X^n$ follows the stationary memoryless channel with per-letter distribution $P_{Z\vert X}$.
Then with $Z^n$ available, $V$ can be computed using the relay coder in Model~1.
Similarly, $Y^n$ can be simulated based on $E$ and $\uY^n$,
and $\hat{W}$ is computed using the decoder in Model~1.
The joint distribution of $(W,Y^n,Z^n,V,\hat{W})$ is then the same as in Model~1, hence $P_e^{(n)}\le \epsilon$ is guaranteed.
Note that in the above, $V$ is computed stochastically from $E$ and $\uZ^n$ (i.e., the rule is given by a conditional distribution $P_{V\vert \uZ^n,E}$), 
and $\hat{W}$ is computed stochastically from $E$, $V$ and $\uY^n$. 
However, they can be converted to deterministic relay coder and decoder without increasing $P_e^{(n)}$, by outputting the $V$ and $\hat{W}$ with the least conditional error probability instead.

It remains to show \eqref{e_93}.
Observe that 
\begin{align}
n[C(\infty)-R+\mu(\epsilon)]
&\ge
D(P_{Y^nZ^n}\|Q_{Y^nZ^n})
\label{e94}
\\
&\ge \sum_{i=1}^nD(P_{Y_iZ_i}\|Q_{YZ})
\label{e_111}
\\
&\ge nD\left(\frac1{n}\sum_{i=1}^nP_{Y_iZ_i}\|Q_{YZ}\right)
\\
&= nD\left(\frac1{n}\sum_{i=1}^n
\sum_xP_{YZ\vert X=x}P_{X_i}(x)\|Q_{YZ}\right)
\\
&=nD\left(
\sum_xP_{YZ\vert X=x}\mathbb{E}[\widehat{P}_{X^n}(x)]\|Q_{YZ}\right),
\end{align}
where \eqref{e94} was shown in \eqref{e87};
\eqref{e_111} used $Q_{Y^nZ^n}=Q_{YZ}^{\otimes n}$;
 $\widehat{P}_{X^n}$ denotes the empirical distribution of $X^n$.
Therefore by Proposition~\ref{prop_delta},
\begin{align}
\sum_{x\in\mathcal{X}_{\rm bad}}\mathbb{E}[\widehat{P}_{X^n}(x)]
\le \Delta(C(\infty)-R+\mu(\epsilon)).
\end{align}
\end{proof}

Next, we shall sequentially introduce a number of functions. 
Roughly speaking, 
$H(\uZ\vert \uY)-R_0=O(C(\infty)-R)$ is implied if the first function is shown to vanish as its argument tends to zero.
The latter is in turn implied by the same vanishing property of the second function, and so on, which eventually leads to a form tractable by geometric tools.

\subsection{Function $g_n$}
\label{sec_gn}
Fix $P_{Y\vert X}$ and $P_{Z\vert X}$. 
For any $t\in[0,\infty)$, 
define 
\begin{align}
g_n(t):=\max\frac1{n}\mathbb{E}_P[\imath_{Q_{\underline{Z^n}\vert \underline{Y}^nVE}}]
\end{align}
where the max is over Model~2 source encoder, relay encoder, and decoder satisfying:
\begin{align}
\frac1{n}[\mathbb{E}_P[\imath_{Q_{\underline{Z^n}\vert \underline{Y}^nVE}}]-H(\underline{Z}^n\vert X^nV)]
\le t.
\label{e154}
\end{align}
The notations in the above definition is explained as follows, from which it will be seen that the left side of \eqref{e154} equals $\frac1{n}D(P_{\uZ^n\vert X^nV}\|Q_{\uZ^n\vert \uY^nVE}\vert P_{X^n\uY^nV})$ which is nonnegative.
\begin{itemize}
\item The joint distribution $P_{\uY^n\uZ^nX^nEV}$ is defined by the $P_{X^n}$ determined by the encoder, the channel $P_{\uY^n\uZ^n\vert X}=P_{\uY^n\vert X^n}P_{\uZ^n\vert X^n}$, $E=E(X^n)$, and $V=V(\uZ^n,E)$.
\item $H(Z^n\vert X^n,V)$ is defined for the distribution $P_{\uY^n\uZ^nX^nEV}$. 
Since $E$ is a function of $X^n$, we have $H(Z^n\vert X^n,V)=H(Z^n\vert X^n,V,E)$.
\item We pick an arbitrary fully supported $Q_E$ and define $Q_{\uY^n\uZ^nEV}$ by setting $$Q_{\uY^n\uZ^nE}=Q_EQ_{\uY^n\uZ^n}$$ where $Q_{\uY^n\uZ^n}$ is the capacity-achieving output distribution and $V=V(\uZ^n,E)$.
Note that $X^n$ will never appear under the  distribution $Q$.
Also later $Q$ will only be used when $E$ is been conditioned on, and so the choice of $Q_E$ will not matter since it does not affect the conditional distribution.
\item The information density $\imath_{Q_{\uZ^n\vert \uY^n,V,E}}$ is a function on $\mathcal{\uZ}^n\times \mathcal{\uY}^n\times \mathcal{V}\times \mathcal{E}$, defined by $\imath_{Q_{\uZ^n\vert \uY^n,V,E}}(\uz^n\vert \uy^n,v,e)=\log\frac1{Q_{\uZ^n\vert \uY^n,V,E}(\uz^n\vert \uy^n,v,e)}$.
\item $\mathbb{E}$ means taking the expectation of a function whereby the arguments are random variables following the distribution $P$.
\end{itemize}

\begin{lem}
\label{lem6}
Suppose that for parameters $n$, $R$, $R_0$ and $\epsilon$, Model~1 can achieve error $P_e^{(n)}\le \epsilon$.
Then
\begin{align}
H_{Q}(\uZ\vert \uY)-R_0
\le 
g_n\left(t\right)+t+\alpha\sqrt{\frac{\lambda}{2}}.
\end{align}
Here, $\alpha:=\max\abs{\imath_{Q_{\uZ\vert \uY}}(\uz\vert \uy)} <\infty$ where the max is over $(\uy,\uz)$ satisfying $\max_xP_{\uY\vert X=x}(\uy)P_{\uZ\vert X=x}(\uz)>0$\footnote{As in \eqref{e94}, the capacity-achieving output distribution $Q_{\uY\uZ}$ is fully supported  on $\{(\uy,\uz)\colon\max_xP_{\uY\vert X=x}(\uy)P_{\uZ\vert X=x}(\uz)>0\}$, hence $\alpha$ is finite.}.
$\mu(\cdot)$ is defined as in \eqref{e_mu}. 
$\Delta$ is defined in \eqref{e_Delta}.
$H_Q(\uZ\vert \uY)$ denotes the conditional entropy under $Q_{\uY\uZ}$.
Moreover $\lambda:=C(\infty)-R+\mu(\epsilon)$ and $t:=\mu(\Delta(\lambda))+\lambda+\frac{\log n}{n}$.
\end{lem}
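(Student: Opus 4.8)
The plan is to run a near-optimal Model~1 code through Proposition~\ref{prop1} to obtain a Model~2 code with the same encoder, error $\le\epsilon$, and the bad-letter bound $\tfrac1n\sum_i\mathbb{P}[X_i\in\mathcal{X}_{\rm bad}]\le\Delta(\lambda)$; to recognise this code as a feasible point of the program defining $g_n$ with the stated argument; and to lower bound its objective value. Write $P$ for the Model~2 joint law of $(W,X^n,\uY^n,\uZ^n,E,V,\hat W)$ and $Q=Q_{\uY^n\uZ^nEV}$ for the reference law in the definition of $g_n$. Because $V=V(\uZ^n,E)$ is deterministic and $Q_{\uY^n\uZ^nE}=Q_EQ_{\uY^n\uZ^n}$ with $Q_{\uY^n\uZ^n}=\prod_iQ_{\uY\uZ}$, a short computation gives $Q_{\uZ^n|\uY^nVE}(\uz^n|\uy^n,v,e)=Q_{\uZ^n|\uY^n}(\uz^n|\uy^n)/Q_{V|\uY^nE}(v|\uy^n,e)$ on $\{v=V(\uz^n,e)\}$, whence the $P$-a.s.\ pointwise identity
\[
\imath_{Q_{\uZ^n|\uY^nVE}}(\uZ^n|\uY^n,V,E)=\sum_{i=1}^n\imath_{Q_{\uZ|\uY}}(\uZ_i|\uY_i)-\imath_{Q_{V|\uY^nE}}(V|\uY^n,E),
\]
the first sum being the product-structure decomposition of $\imath_{Q_{\uZ^n|\uY^n}}$.

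For the objective value I would bound the two pieces of $\tfrac1n\mathbb{E}_P[\imath_{Q_{\uZ^n|\uY^nVE}}]$ separately. Setting $\bar P:=\tfrac1n\sum_iP_{\uY_i\uZ_i}$, the first piece is $\mathbb{E}_{\bar P}[\imath_{Q_{\uZ|\uY}}]$; pushing \eqref{e87} through the coordinatewise underline maps gives $D(P_{\uY^n\uZ^n}\|Q_{\uY^n\uZ^n})\le n\lambda$, so by convexity and superadditivity of relative entropy $D(\bar P\|Q_{\uY\uZ})\le\lambda$, and since $0\le\imath_{Q_{\uZ|\uY}}\le\alpha$ on the common support, Pinsker's inequality yields $\mathbb{E}_{\bar P}[\imath_{Q_{\uZ|\uY}}]\ge H_Q(\uZ|\uY)-\alpha\sqrt{\lambda/2}$. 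The second piece equals $\tfrac1nH_P(V|\uY^n,E)+\tfrac1nD(P_{V|\uY^nE}\|Q_{V|\uY^nE}\mid P_{\uY^nE})$; here $H_P(V|\uY^n,E)\le\log|\mathcal{V}|\le nR_0$, and data processing through the deterministic map $(\uZ^n,E)\mapsto V$ together with $D(P_{\uZ^n|\uY^n}\|Q_{\uZ^n|\uY^n}\mid P_{\uY^n})\le n\lambda$ controls the divergence, the extra cost of conditioning on $E$ being at most $H_P(E)$, which Proposition~\ref{prop1} and concavity of $\mu$ bound by $n\,\mu(\Delta(\lambda))$. Assembling the pieces gives $\tfrac1n\mathbb{E}_P[\imath_{Q_{\uZ^n|\uY^nVE}}]\ge H_Q(\uZ|\uY)-R_0-\lambda-\alpha\sqrt{\lambda/2}$ once the $\mu\Delta(\lambda)$ contribution is accounted for as in the last paragraph.

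The crux is the feasibility check $\tfrac1nD(P_{\uZ^n|X^nV}\|Q_{\uZ^n|\uY^nVE}\mid P_{X^n\uY^nV})\le\mu\Delta(\lambda)+\lambda+\tfrac{\log n}{n}$. Expanding this divergence with the pointwise identity above and repeatedly applying the chain rule --- using $\uZ^n\perp\uY^n\mid(X^n,V)$, $V\perp\uY^n\mid X^n$, and that $V$ is a function of $(\uZ^n,E)$ with $E$ a function of $X^n$ --- collapses it to
\[
\tfrac1n\Bigl(I_P(\uZ^n;X^n\mid V,\uY^n,E)+D(P_{\uZ^n|\uY^n}\|Q_{\uZ^n|\uY^n}\mid P_{\uY^n})-D(P_{V|\uY^nE}\|Q_{V|\uY^nE}\mid P_{\uY^nE})\Bigr).
\]
Dropping the last, nonnegative, term and using the $n\lambda$ bound on the middle one, it remains to bound $I_P(\uZ^n;X^n\mid V,\uY^n,E)\le H_P(X^n\mid V,\uY^n,E)$; since $\hat W=\hat W(V,\uY^n,E)$ and $X^n$ is a function of $W$, this is $\le H_P(X^n\mid\hat W)\le H_P(W\mid\hat W)$, which Fano's inequality bounds by $n\mu(\epsilon)\le n\lambda$ (using $R\le C(\infty)$, from \eqref{e_fano}--\eqref{e_golden}). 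Hence the left side is $O(\lambda)$, which lies below $\mu\Delta(\lambda)+\lambda+\tfrac{\log n}{n}$ once $\lambda$ is small; when $\lambda$ is not small the asserted inequality is vacuous, since its right side already exceeds $H_Q(\uZ|\uY)\ge H_Q(\uZ|\uY)-R_0$. Chaining the feasibility with the objective bound gives $H_Q(\uZ|\uY)-R_0\le\tfrac1n\mathbb{E}_P[\imath_{Q_{\uZ^n|\uY^nVE}}]+\lambda+\alpha\sqrt{\lambda/2}\le g_n(\mu\Delta(\lambda)+\lambda+\tfrac{\log n}{n})+\lambda+\alpha\sqrt{\lambda/2}$.

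The step I expect to be genuinely delicate is the chain-rule collapse of the constraint divergence to $I_P(\uZ^n;X^n\mid V,\uY^n,E)+D_1-D_2$: once that exact identity is available, each surviving term is transparently small --- the mutual-information term by the decoder's reliability via Fano, $D_1$ by \eqref{e87}, and $D_2$ is simply discarded. The remaining work is bookkeeping of the $E$-conditioning losses, which are of order $\mu\Delta(\lambda)$, exactly the order already present inside the argument of $g_n$; a little care about whether one charges them to $g_n$'s argument or to the additive slack produces the stated form $\lambda+\alpha\sqrt{\lambda/2}$ for the external term.
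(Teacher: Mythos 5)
Your overall architecture matches the paper's: feed the Model~2 code produced by Proposition~\ref{prop1} into the optimization defining $g_n$, lower bound the objective $\tfrac1n\mathbb{E}_P[\imath_{Q_{\uZ^n|\uY^nVE}}]$ via the pointwise split $\imath_{Q_{\uZ^n|\uY^n}}-\imath_{Q_{V|\uY^nE}}$, Pinsker, $H(V|\uY^n,E)\le nR_0$, and a data-processing bound on $D(P_{V|\uY^nE}\|Q_{V|\uY^nE}|P_{\uY^nE})$ --- that part is essentially the paper's argument. The genuine gap is in your feasibility check. Your collapse of the constraint gives (after correcting the middle term, which must be $D(P_{\uZ^n|\uY^nE}\|Q_{\uZ^n|\uY^n}|P_{\uY^nE})$, i.e.\ conditioned on $E$; the excess $I(\uZ^n;E|\uY^n)\le H(E)$ is harmless) a bound of order $\mu(\epsilon)+\lambda+\mu\Delta(\lambda)+O(\tfrac{\log n}{n})$: the Fano step applied to $H(X^n|V,\uY^n,E)\le H(W|\hat W)$ produces a stray $\mu(\epsilon)$ that the stated argument $t=\mu\Delta(\lambda)+\lambda+\tfrac{\log n}{n}$ does not contain. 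Since $g_n$ is nondecreasing in its argument, feasibility at a larger threshold does not yield $g_n(t)$, and your two-case patch does not close this: when $\mathcal{X}_{\rm bad}=\emptyset$ we have $\Delta\equiv 0$, so $\mu\Delta(\lambda)=0$ and your claim that $O(\lambda)$ lies below $\lambda+\tfrac{\log n}{n}$ fails for any fixed $\lambda>0$ as $n\to\infty$; and in that same regime (say $\lambda$ a small constant, $R_0=0$) the lemma is not vacuous, since $\lambda+\alpha\sqrt{\lambda/2}$ can be far below $H_Q(\uZ|\uY)$. Also, your inequality $\mu(\epsilon)\le\lambda$ rests on "$R\le C(\infty)$", which \eqref{e_fano}--\eqref{e_golden} do not give at fixed $n$ (they only give $\lambda\ge 0$).

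The paper avoids the stray $\mu(\epsilon)$ by a different decomposition of the feasibility bound: Fano is applied at the full message rate, $nR-n\mu(\epsilon)\le I(X^n;E)+I(X^n;\uY^n|E)+I(X^n;V|\uY^n,E)$, with $I(X^n;E)\le H(E)\le n\mu\Delta(\lambda)+\log n$, $I(X^n;\uY^n|E)\le D(P_{\uY^n|X^n}\|Q_{\uY^n}|P_{X^n})$, and the identity $I(X^n;V|\uY^n,E)\le D(P_{\uZ^n|X^n}\|Q_{\uZ^n|\uY^n}|P_{X^n\uY^n})-[\mathbb{E}_P[\imath_{Q_{\uZ^n|\uY^nVE}}]-H(\uZ^n|X^nV)]$; the two divergences combine to $D(P_{\uY^n\uZ^n|X^n}\|Q_{\uY^n\uZ^n}|P_{X^n})\le nC(\infty)$ by the saddle-point property of the capacity-achieving output distribution. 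Rearranging, the combination $C(\infty)-R+\mu(\epsilon)=\lambda$ appears as a single package and the constraint is bounded by exactly $n[\mu\Delta(\lambda)+\lambda]+\log n$, with no leftover $\mu(\epsilon)$ and no need for $R\le C(\infty)$. If you want to keep your route, the honest conclusion it supports is the weaker statement with $g_n(\mu\Delta(\lambda)+\lambda+\mu(\epsilon)+O(\tfrac{\log n}{n}))$, which is not the lemma as stated (though it would still feed into Theorem~\ref{thm1}); to prove the stated lemma you should replace your Fano-on-$X^n$ step by the rate-based accounting above.
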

\begin{proof}
Let the random variable $E$ be as defined in Model~2.
By Proposition~\ref{prop1}, we find a Model~2 scheme where $P_e^{(n)}\le \epsilon$ and with the same $R$ and $R_0$.
Then by Fano's inequality,
\begin{align}
nR-n\mu(\epsilon)&\le 
I(X^n;V,\underline{Y}^n,E)\\
&= I(X^n;E)
+I(X^n;\underline{Y}^n\vert E) 
+ I(X^n;V\vert \underline{Y}^n,E).
\label{e113}
\end{align}
The first term in \eqref{e113} is upper bounded by 
\begin{align}
H(E)&\le H(E\vert L)+H(L)
\\
&\le \mathbb{E}\left[\log\left(
\exp(n{\rm H}(\tfrac{L}{n}))
\cdot\vert \mathcal{X}\vert ^L
\right)\right]
+\log n
\\
&=n\mathbb{E}[{\rm H}(\tfrac{L}{n})]
+\mathbb{E}[L]\log\abs{\mathcal{X}}
+\log n
\\
&\le n{\rm H}(\tfrac{\mathbb{E}[L]}{n})
+\mathbb{E}[L]\log\abs{\mathcal{X}}
+\log n
\label{e_conv}
\\
&= n\mu(\tfrac{\mathbb{E}[L]}{n})+\log n
\\
&\le n\mu(\Delta(\lambda))
+\log n
\label{e119}
\end{align}
where we defined $L$ as the number of coordinates of $E$ in $\mathcal{X}_{\rm bad}; $\eqref{e_conv} follows from the concavity of the binary entropy function;
and \eqref{e119} follows from \eqref{e_93}.

The second term in \eqref{e113} is bounded by
\begin{align}
I(X^n;\uY^n\vert E) 
&\le I(X^n;\uY^n)
\\
&=D(P_{\underline{Y}^n\vert X^n}\|P_{\underline{Y}^n}\vert P_{X^n})
\label{e127}
\\
&\le D(P_{\underline{Y}^n\vert X^n}\|Q_{\underline{Y}^n}\vert P_{X^n})
\end{align}
where \eqref{e127} follows since $E(X^n)-X^n-\uY^n$ is a Markov chain.

The third term in \eqref{e113} is bounded as 
\begin{align}
I(X^n;V\vert \underline{Y}^n,E)
&=
D(P_{V\vert X^n}\| P_{V\vert \underline{Y}^nE}\vert P_{X^n\underline{Y}^n})
\label{e130}
\\
&\le D(P_{V\vert X^n}\| Q_{V\vert \underline{Y}^nE}\|P_{X^n\underline{Y}^n})
\\
&=D(P_{\underline{Z}^n\vert X^n}\| Q_{\underline{Z}^n\vert \underline{Y}^n}\|P_{X^n\underline{Y}^n})-
[\mathbb{E}_P[\imath_{Q_{\underline{Z^n}\vert \underline{Y}^nVE}}]-H(\underline{Z}^n\vert X^nV)],
\label{e132}
\end{align}
where \eqref{e130} follows since $P_{V\vert X^nE\uY^n}=P_{V\vert X^n}$,
and \eqref{e132} used the fact that $Q_{\underline{Z^n}\vert \underline{Y}^nE}=Q_{\underline{Z^n}\vert \underline{Y}^n}$ and that $V$ is a function of $E$ and $\uZ^n$ given the relay encoder.
Combining the bounds on the three terms in \eqref{e113} and rearranging, we obtain
\begin{align}
\mathbb{E}_P[\imath_{Q_{\underline{Z^n}\vert \underline{Y}^nVE}}]-H(\underline{Z}^n\vert X^nV)
\le 
nt
\end{align}
where we have chosen
\begin{align}
t:=\mu(\Delta(\lambda))
+\lambda+\frac{\log n}{n}.
\end{align}
By the definition of $g_n$ we must have 
\begin{align}
\frac{1}{n}\mathbb{E}_P[\imath_{Q_{\uZ^n\vert \uY^nVE}}]\le g_n(t). 
\label{e_135}
\end{align}
To finish, we can calculate the left side as
\begin{align}
&\quad\frac{1}{n}\mathbb{E}_P[\imath_{Q_{\uZ^n\vert \uY^nVE}}]
\nonumber\\
&=\frac{1}{n}
\mathbb{E}_P[\imath_{Q_{\uZ^n\vert \uY^n}}-\imath_{Q_{V\vert \uY^nE}}]
\label{e135}
\\
&\ge
\frac{1}{n}
\mathbb{E}_Q[\imath_{Q_{\uZ^n\vert \uY^n}}]
-\alpha \|P_{\uY_I\uZ_I}-Q_{\uY\uZ}\|_{TV}
-
\frac{1}{n}
\mathbb{E}_P[\imath_{Q_{V\vert \uY^nE}}]
\\
&\ge H_Q(\uZ\vert \uY)
-\alpha\sqrt{\frac{\lambda}{2}}
-
\frac{1}{n}
\mathbb{E}_P[\imath_{P_{V\vert \uY^nE}}]
-\frac{1}{n}D(P_{V\vert \uY^nE}\|Q_{V\vert \uY^nE}\vert P_{\uY^nE})
\label{e179}
\\
&\ge H_Q(\uZ\vert \uY)
-\alpha \sqrt{\frac{\lambda}{2}}
-R_0
-\lambda-\frac1{n}I(X^n;E),
\label{e_last179}
\\
&\ge H_Q(\uZ\vert \uY)
-\alpha \sqrt{\frac{\lambda}{2}}
-R_0
-t,
\label{e140}
\end{align}
where 
\begin{itemize}
\item \eqref{e135} used the fact that $Q_{\uZ^n\vert \uY^nE}=Q_{\uZ^n\vert \uY^n}$;
\item $\|\cdot\|_{TV}$ denotes the total variation distance; 
\item In \eqref{e179} we defined $I$ as an equiprobable random variable in $\{1,\dots,n\}$ independent of $(\uY^n,\uZ^n)$. 
\eqref{e179} follows by applying Pinsker's inequality to $D(P_{\uY_I\uZ_I}\|Q_{\uY\uZ})\le
\frac1{n}\sum_{i=1}^nD(P_{\uY_i\uZ_i}\|Q_{\uY\uZ})
\le \frac1{n}D(P_{\uY^n\uZ^n}\|Q_{\uY^n\uZ^n})
\le C(\infty)-R+\mu(\epsilon)$, where the last step follows from  \eqref{e87}.
\item To see \eqref{e_last179},
we first note $\mathbb{E}_P[\imath_{P_{V\vert \uY^nE}}]=H(V\vert \uY^nE)\le nR_0$.
The other term is bounded as
\begin{align}
D(P_{V\vert \uY^nE}\|Q_{V\vert \uY^nE}\vert P_{\uY^nE})
&\le 
D(P_{\uZ^n\vert \uY^nE}\|Q_{\uZ^n\vert \uY^n}\vert P_{\uY^nE})
\\
&= D(P_{\uZ^n\uY^n\vert E}\|Q_{\uZ^n\vert \uY^n}P_{\uY^n\vert E}\vert P_E)
\\
&\le D(P_{\uZ^n\uY^n\vert E}\|Q_{\uZ^n\vert \uY^n}Q_{\uY^n}\vert P_E)
\\
&\le 
n\lambda+I(X^n;E).\label{e143}
\end{align}
To justify \eqref{e143}, we use the similar argument as \eqref{e87}. That is,
\begin{align}
nR&\le I(X^n;\uZ^n\uY^nE)+n\mu(\epsilon)
\\
&
=I(X^n;\uZ^n\uY^n\vert E)+n\mu(\epsilon)+I(X^n;E)
\\
&=D(P_{\uY^n\uZ^n\vert X^n}\|Q_{\uY^n\uZ^n}\vert P_{X^n})-D(P_{\uY^n\uZ^n\vert E}\|Q_{\uY^n\uZ^n}\vert P_E)\nonumber\\&\quad+n\mu(\epsilon)+I(X^n;E)
\\
&\le nC(\infty)-D(P_{\uY^n\uZ^n\vert E}\|Q_{\uY^n\uZ^n}\vert P_E)+n\mu(\epsilon)+I(X^n;E)
\end{align}
which is equivalent to \eqref{e143} upon rearrangement.
\item \eqref{e140} follows from \eqref{e119}.
\end{itemize}
Combining \eqref{e_135} and \eqref{e_last179}, we have
\begin{align}
g_n(t)+ t+\alpha \sqrt{\frac{\lambda}{2}}
\ge H_Q(\uZ\vert \uY)-R_0.
\end{align}
\end{proof}

\subsection{Function $f_n(t)$}
Fix $P_{Y\vert X}$ and $P_{Z\vert X}$. 
For any $t\in[0,\infty)$, 
define 
\begin{align}
f_n(t):=\max\frac1{n}H(\underline{Z}^n\vert X^n,V)
\end{align}
where the max is over Model~2 source encoder, relay encoder, and decoder satisfying:
\begin{align}
\frac1{n}[\mathbb{E}_P[\imath_{Q_{\underline{Z^n}\vert \underline{Y}^nVE}}]-H(\underline{Z}^n\vert X^nV)]\le t.
\label{e181}
\end{align}
As before, the left side of \eqref{e181} can be expressed as the relative entropy $\frac1{n}D(P_{\underline{Z}^n\vert X^nV}\|Q_{\underline{Z}^n\vert \underline{Y}^nVE}\|P_{X^nV\underline{Y}^n})$ which is nonnegative.

\begin{lem}
\label{lem7}
\begin{align}
g_n(t)\le \max\{2t,2f_n(t)\}
\end{align}
\end{lem}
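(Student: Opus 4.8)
The plan is to exploit that $g_n(t)$ and $f_n(t)$ are maxima over \emph{the same} feasible family of Model~2 configurations: both \eqref{e154} and \eqref{e181} impose the identical constraint
\begin{align}
\tfrac1{n}\big[\mathbb{E}_P[\imath_{Q_{\underline{Z^n}|\underline{Y}^nVE}}]-H(\underline{Z}^n|X^nV)\big]\le t .
\end{align}
Hence every configuration admissible in the definition of $g_n(t)$ is also admissible in the definition of $f_n(t)$, so for any such configuration $\tfrac1{n}H(\underline{Z}^n|X^nV)\le f_n(t)$.

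First I would fix an admissible configuration and abbreviate $A:=\tfrac1{n}\mathbb{E}_P[\imath_{Q_{\underline{Z^n}|\underline{Y}^nVE}}]$ and $B:=\tfrac1{n}H(\underline{Z}^n|X^nV)$. The structural input I would use is that $A-B$ equals the per-letter conditional relative entropy $\tfrac1{n}D(P_{\underline{Z}^n|X^nV}\|Q_{\underline{Z}^n|\underline{Y}^nVE}|P_{X^n\underline{Y}^nV})$ recorded just after \eqref{e154}, which is nonnegative; together with admissibility this gives the two-sided bound $0\le A-B\le t$. Combined with $B\le f_n(t)$, this is all the information needed.

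The remaining step is a one-line case analysis. If $A\le 2t$ there is nothing to prove. Otherwise $A>2t$, and since $A\le B+t$ this forces $B>t$, whence $A\le B+t\le 2B\le 2f_n(t)$. In either case $A\le\max\{2t,2f_n(t)\}$; taking the supremum over admissible configurations yields $g_n(t)\le\max\{2t,2f_n(t)\}$.

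I do not expect a genuine obstacle here: the only points requiring care are checking that the two maxima are subject to exactly the same constraint (so that $B\le f_n(t)$ is legitimate) and that $A\ge B$, both of which are immediate from the definitions and the nonnegativity of a conditional relative entropy.
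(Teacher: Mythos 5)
Your proof is correct and is essentially the paper's argument: both fix a configuration attaining (the constraint of) $g_n(t)$, note it is admissible for $f_n(t)$ since \eqref{e154} and \eqref{e181} are the same constraint, and run a two-case analysis on the constraint $A-B\le t$ to conclude $A\le 2t$ or $A\le 2B\le 2f_n(t)$. The only cosmetic difference is that you split on $A\le 2t$ versus $A>2t$ while the paper splits on $A\ge 2B$ versus $A\le 2B$; the content is identical.
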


\begin{proof}
Fix Model~2 source encoder, relay coder, and decoder achieving the max in the definition of $g_n(t)$. 
Then 
\begin{align}
\mathbb{E}_P[\imath_{Q_{\underline{Z^n}\vert \underline{Y}^nVE}}]-H(\underline{Z}^n\vert X^nV)\le nt.
\label{e139}
\end{align}
holds.
\begin{itemize}
\item If $\mathbb{E}_P[\imath_{Q_{\underline{Z^n}\vert \underline{Y}^nVE}}]\ge 2H(\underline{Z}^n\vert X^nV)$,
then cancelling $H(\underline{Z}^n\vert X^nV)$ in \eqref{e139}, we have 
\begin{align}
\mathbb{E}_P[\imath_{Q_{\underline{Z^n}\vert \underline{Y}^nVE}}]
\le 2nt
\end{align}
which shows that $g_n(t)\le 2nt$.
\item If $\mathbb{E}_P[\imath_{Q_{\underline{Z^n}\vert \underline{Y}^nVE}}]\le 2H(\underline{Z}^n\vert X^nV)$,
we have 
\begin{align}
f_n(t)\ge H(\underline{Z}^n\vert X^nV)
\ge \frac1{2}\mathbb{E}_P[\imath_{Q_{\underline{Z^n}\vert \underline{Y}^nVE}}]
=\frac1{2}g_n(t).
\end{align}
\end{itemize}
\end{proof}

\subsection{Function $\bar{f}_n(t)$}
For given $x^n$, the \emph{condition type} of a sequence $z^n$ is the conditional law of $Z$ given $X$ in the empirical distribution $\widehat{P}_{x^nz^n}$.
Let 
$
\mathcal{P}_{x^n}
$
be the set of all possible such conditional types.
If $p\in \mathcal{P}_{x^n}$, then 
denote by 
$
\mathcal{T}_{x^n}(p)
$
the set of $z^n$ sequences with conditional type $p$.
We use 
$
T_{x^n}(z^n)
$
to denote the conditional type the sequence $z^n$ (given $x^n$).

Fix $P_{Y\vert X}$ and $P_{Z\vert X}$. 
For any $t\in[0,\infty)$, 
define 
\begin{align}
\bar{f}_n(t):=\max\frac1{n}H(\underline{Z}^n\vert X^n=x^n,V=v,T_{x^n}(\underline{Z}^n)=p)
\end{align}
where the max is over Model~2 source encoder, relay encoder, decoder, $x^n$ and $v$ and $p$ satisfying:
\begin{itemize}
\item $P_{X^n,V,T_{x^n}(\underline{Z}^n)}(x^n,v,p)>0$.
\item 
$\frac1{n}D(P_{\underline{Z}^n\vert X^n=x^n,V=v,T_{x^n}(\underline{Z}^n)=p}\|Q_{\underline{Z}^n\vert \underline{Y}^n,V=v,E=E(x^n),T_{x^n}(\underline{Z}^n)=p}\vert P_{\underline{Y}^n\vert X^n=x^n})\le t$;
\end{itemize}
\begin{lem}
\label{lem8}
For any $t>0$,
\begin{align}
f_n(t)\le 
\bar{f}_n(t^{4/5})+t^{1/5}\log\abs{\mathcal{Z}}+O(\frac1{n}\log n)
\end{align}
where the $O(\frac1{n}\log n)$ term may be chosen as $\frac1{n}\abs{\mathcal{Z}}\abs{\mathcal{X}}\log(n+\abs{\mathcal{Z}}-1)$.
\end{lem}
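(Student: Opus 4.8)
The plan is to split $\uZ^n$ into its conditional type $T_{x^n}(\uZ^n)$ and the residual randomness inside that type, bound the entropy of the type by the (polynomial in $n$) number of conditional types, and control the residual by separating the triples $(x^n,v,p)$ into those admissible in the definition of $\bar f_n$ at a relaxed parameter $\tau$ and those that are not, the latter being rare by a Markov bound. Concretely: fix a Model~2 source encoder, relay coder, and decoder (nearly) attaining $f_n(t)$, so that $\frac1n H(\uZ^n|X^n,V)=f_n(t)$ and $\frac1n D(P_{\uZ^n|X^nV}\|Q_{\uZ^n|\uY^nVE}|P_{X^n\uY^nV})\le t$ (this divergence being the constraint quantity, as noted after the definition of $f_n$; if the maximum is not attained one works with a setup within $\eta$ and lets $\eta\downarrow0$). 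Since $T_{x^n}(\uZ^n)$ is a deterministic function of $\uZ^n$, the entropy chain rule gives $H(\uZ^n|X^n=x^n,V=v)=H(T_{x^n}(\uZ^n)|X^n=x^n,V=v)+H(\uZ^n|X^n=x^n,V=v,T_{x^n}(\uZ^n))$, and the first term is at most $\log|\mathcal{P}_{x^n}|\le|\mathcal{X}||\mathcal{Z}|\log(n+|\mathcal{Z}|-1)$ by the usual type count. Taking expectations, $f_n(t)\le\frac1n|\mathcal{X}||\mathcal{Z}|\log(n+|\mathcal{Z}|-1)+\mathbb{E}_{(x^n,v,p)}[\frac1n H(\uZ^n|X^n=x^n,V=v,T_{x^n}(\uZ^n)=p)]$, where $(x^n,v,p)$ is distributed as $P_{X^n,V,T_{X^n}(\uZ^n)}$.

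Next I would control, on average, the per-triple divergence $\rho(x^n,v,p):=D(P_{\uZ^n|X^n=x^n,V=v,T_{x^n}(\uZ^n)=p}\|Q_{\uZ^n|\uY^n,V=v,E=E(x^n),T_{x^n}(\uZ^n)=p}|P_{\uY^n|X^n=x^n})$, which is exactly the quantity in the constraint defining $\bar f_n$. Using that $\uY^n$ and $\uZ^n$ are conditionally independent given $X^n$ while $E=E(X^n)$ and $V$ is a function of $(\uZ^n,E)$, one rewrites $D(P_{\uZ^n|X^nV}\|Q_{\uZ^n|\uY^nVE}|P_{X^n\uY^nV})$ as $\mathbb{E}_{(x^n,v)\sim P_{X^nV}}\,\mathbb{E}_{\uy^n\sim P_{\uY^n|X^n=x^n}}\,D(P_{\uZ^n|X^n=x^n,V=v}\|Q_{\uZ^n|\uY^n=\uy^n,V=v,E=E(x^n)})$; then, since $T_{x^n}(\cdot)$ is a deterministic function of $\uZ^n$, the chain rule for relative entropy gives, for each $(x^n,v,\uy^n)$, $D(P_{\uZ^n|X^n=x^n,V=v}\|Q_{\uZ^n|\uY^n=\uy^n,V=v,E=E(x^n)})\ge D(P_{\uZ^n|X^n=x^n,V=v,T}\|Q_{\uZ^n|\uY^n=\uy^n,V=v,E=E(x^n),T}|P_{T|X^n=x^n,V=v})$ with $T:=T_{x^n}(\uZ^n)$. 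Re-averaging over $\uy^n$ (note $P_{T|X^n=x^n,V=v}$ does not depend on $\uy^n$) and over $(x^n,v)$ identifies the right-hand side with $\mathbb{E}_{(x^n,v,p)}[\rho(x^n,v,p)]$, so $\mathbb{E}_P[\rho(X^n,V,T)]\le nt$; Markov's inequality then gives $\mathbb{P}[\rho(X^n,V,T)>n\tau]\le t/\tau$ for any $\tau>0$.

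Finally I would split the outer expectation. On the event $\{\rho(X^n,V,T)\le n\tau\}$ the triple $(x^n,v,p)$ together with the fixed Model~2 setup is admissible in the definition of $\bar f_n(\tau)$, so $\frac1n H(\uZ^n|X^n=x^n,V=v,T=p)\le\bar f_n(\tau)$; on the complement I use the trivial bound $\frac1n H(\uZ^n|\cdots)\le\log|\mathcal{Z}|$ (the alphabet of $\uZ$ has size at most $|\mathcal{Z}|$). Combining, $f_n(t)\le\frac1n|\mathcal{X}||\mathcal{Z}|\log(n+|\mathcal{Z}|-1)+\bar f_n(\tau)+\frac t\tau\log|\mathcal{Z}|$, and choosing $\tau=t^{4/5}$ gives $t/\tau=t^{1/5}$ and the stated inequality, with the $O(\frac1n\log n)$ term equal to $\frac1n|\mathcal{X}||\mathcal{Z}|\log(n+|\mathcal{Z}|-1)$. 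I expect the main obstacle to be the bookkeeping in the relative-entropy chain-rule step --- checking that peeling off the conditional type leaves precisely the conditional divergence that defines $\bar f_n$, and that the conditional-independence structure together with the facts that $E$ and $V$ are deterministic functions makes every conditioning line up correctly; the type count and the Markov step are routine.
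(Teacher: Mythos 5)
Your proposal is correct and follows essentially the same route as the paper's proof: peel off the conditional type via the entropy chain rule (the type count giving the $\frac1n|\mathcal{X}||\mathcal{Z}|\log(n+|\mathcal{Z}|-1)$ term), lower-bound the constraint divergence by its type-conditioned version via the chain rule for relative entropy, apply Markov's inequality with threshold $nt^{4/5}$ to get a good set of triples of probability at least $1-t^{1/5}$, and bound the conditional entropy by $\bar f_n(t^{4/5})$ on the good set and by $\log|\mathcal{Z}|$ on its complement. Your explicit verification that $\uY^n$ is conditionally independent of $(V,T_{X^n}(\uZ^n))$ given $X^n$, so the conditioning measure matches the $P_{\uY^n|X^n=x^n}$ appearing in the definition of $\bar f_n$, is exactly the bookkeeping implicit in the paper's displayed chain-rule step.
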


\begin{proof}
Choose Model~2 source encoder, relay encoder, and decoder achieving max in the definition of $f_n(t)$.
Then we have 
\begin{align}
D(P_{\underline{Z}^n\vert X^nV}\|Q_{\underline{Z}^n\vert \underline{Y}^nVE}\|P_{X^nV\underline{Y}^n})
\le nt.
\end{align}
Since 
$T_{X^n}(\uZ^n)$ is a function of $\underline{Z}^n$ conditioned on $X^n$, 
we have
\begin{align}
&\quad D(P_{\underline{Z}^n\vert X^nV}\|Q_{\underline{Z}^n\vert \underline{Y}^nVE}\|P_{X^nV\underline{Y}^n})
\nonumber\\
&=
D(P_{\underline{Z}^nT_{X^n}(\underline{Z}^n)\vert X^nV}\|Q_{\underline{Z}^nT_{X^n}(\underline{Z}^n)\vert \underline{Y}^nVE}\|P_{X^nV\underline{Y}^n})
\\
&\ge 
D(P_{\underline{Z}^n\vert X^nVT_{X^n}(\underline{Z}^n)}\|Q_{\underline{Z}^n\vert \underline{Y}^nVET_{X^n}(\underline{Z}^n)}\|P_{X^nV\underline{Y}^nT_{X^n}(\underline{Z}^n)}).
\end{align}

By the Markov inequality and the union bound, there exists a set $\mathcal{G}$ of $(x^n,v,p)$ such that 
$
P_{X^nVT_{X^n}(\uZ^n)}(\mathcal{G})\ge 1-t^{1/5}
$,
and for any $(x^n,v,p)\subseteq \mathcal{G}$,
\begin{align}
P_{X^nVT_{X^n}(\uZ^n)}(x^n,v,p)&>0,
\\
D(P_{\underline{Z}^n\vert X^n=x^n,V=v,T_{X^n}(\underline{Z}^n)=p}\|Q_{\underline{Z}^n\vert \underline{Y}^n,V=v,E=E(x^n),T_{X^n}(\underline{Z}^n)=p}\vert P_{\underline{Y}^n\vert X^n=x^n})&\le nt^{4/5}.
\end{align}
Note that $P_{X^n,V,T_{x^n}(\underline{Z}^n)}(x^n,v,p)>0$ ensures that the conditional probability $P_{\underline{Z}^n\vert X^n=x^n,V=v,T_{x^n}(\underline{Z}^n)=p}$ is well-defined. 
Then $Q_{\underline{Z}^n\vert \underline{Y}^n=\uy^n,V=v,E=E(x^n),T_{x^n}(\underline{Z}^n)=p}$ is well-defined when $P_{\uY^n\vert X^n=x^n}(\uy^n)>0$,
because the support of $Q_{\underline{Y}^n\underline{Z}^n}$ contains the support of $P_{\underline{Y}^n\underline{Z}^n\vert X^n=x^n}$ for any $x^n$.
Then,
\begin{align}
nf_n(t)&=H(\underline{Z}^n\vert X^n,V)
\\
&=H(\underline{Z}^n,T_{X^n}(\underline{Z}^n)\vert X^n,V)
\\
&=H(\underline{Z}^n\vert X^n,V,T_{X^n}(\underline{Z}^n))
+H(T_{X^n}(\underline{Z}^n)\vert X^n,V)
\\
&=H(\underline{Z}^n\vert X^n,V,T_{X^n}(\underline{Z}^n))+O(\log n)
\\
&=\sum_{(x^n,v,p)\in\mathcal{G}} H(\underline{Z}^n\vert X^n=x^n,V=v,T_{X^n}(\underline{Z}^n)=p)P_{X^nVT_{X^n}(\underline{Z}^n)}(x^n,v,p)
\nonumber\\
&\quad+
\sum_{(x^n,v,p)\in\mathcal{G}^c} H(\underline{Z}^n\vert X^n=x^n,V=v,T_{X^n}(\underline{Z}^n)=p)P_{X^nVT_{X^n}(\underline{Z}^n)}(x^n,v,p)
\nonumber\\
&\quad+O(\log n)
\\
&\le n\bar{f}_n(t^{4/5})+nt^{1/5}\log\abs{\mathcal{Z}}+O(\log n).
\end{align}
\end{proof}

\subsection{Function $h_n$}\label{sec_h}
\begin{defn}
Let $d_1$, $d_2$,\dots, $d_n$ be a sequence of nonnegative integers. 
For each $i=1,\dots,n$, let $\mathcal{C}_i\subseteq\mathbb{R}^{d_i}$ be a bounded, symmetric convex set containing $0$ in its interior, and let $\kappa>0$.
We say a set $\mathcal{A}_n\subseteq\mathbb{R}^{d_1}\times\dots\times \mathbb{R}^{d_n}$ is $\kappa$-differentiated (with respect to $\mathcal{C}_1$,\dots,$\mathcal{C}_n$) if for any $(z_1,\dots,z_n),(z_1',\dots,z_n')\in\mathcal{A}_n$, we have
\begin{align}
\sum_{i=1}^n\|z_i-z_i'\|_{(\mathcal{C}_i)^{\circ}}
\ge\kappa\sum_{i=1}^n1_{z_i\neq z_i'}
\end{align}
\end{defn}
Given $\kappa>0$, $t\ge0$, and nonnegative integers $n$, $d$, 
define $h_n(t)$ to be the smallest number such that the following holds:
For any sequence of integers $d_1,\dots,d_n\in\{0,\dots,d\}$
and $\mathcal{A}_n\subseteq \mathbb{R}^{d_1+\dots+d_n}$ satisfying
\begin{itemize} 
\item $\mathcal{A}_n$ is $\kappa$-differentiated with respect to $\mathcal{C}_i=[-1,1]^{d_i}$, $i=1,\dots,n$;
\item  $\abs{\{z_i\colon \exists z^n\in\mathcal{A}_n\}}\le d+1$, $i=1,\dots,n$;
\item $E_{\bar{Y}^n}[\rho(\bar{Y}^n)]\le t$, 
where $\bar{Y}^n$ is equiprobable on $\mathcal{C}^n:=\mathcal{C}_1\times\dots\times\mathcal{C}_n$ and
\begin{align}
\rho(y^n):=\log\left(\mathbb{E}[\exp\left< y^n,Z^n\right>]\right)
-\mathbb{E}[\left< y^n,Z^n\right>]
\label{e_rho}
\end{align}
with $Z^n$ equiprobable on $\mathcal{A}_n$,
\end{itemize}
we have 
\begin{align}
\frac1{n}\log\abs{\mathcal{A}_n}\le h_n(t).
\end{align}

Note that since $\bar{Y}^n$ is equiprobable on $\mathcal{C}^n$ whereas the $\kappa$-differentiated-ness of $\mathcal{A}_n$ is gauged by $(\mathcal{C}^n)^{\circ}$,
we have the following invariance property as the inner products in \eqref{e_rho} shall be preserved.
\begin{prop}\label{prop_invar}
The definition of $h_n(c)$ does not change if each $\mathcal{C}_i=[-1,1]^{d_i}$ is replaced with an arbitrary nondegenerate linear transform of $[-1,1]^{d_i}$.
 \end{prop}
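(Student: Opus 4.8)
The plan is to exploit the fact that every quantity appearing in the definition of $h_n(t)$ is expressed purely through the inner products $\langle y^n, z^n\rangle$ and through the norms $\|\cdot\|_{(\mathcal{C}_i)^\circ}$, both of which transform in a controlled way under an invertible linear change of coordinates in each block $\mathbb{R}^{d_i}$. Concretely, suppose we replace each cube $\mathcal{C}_i = [-1,1]^{d_i}$ by $\mathcal{C}_i' := M_i\,[-1,1]^{d_i}$ for some invertible $M_i$, and write $M := M_1\oplus\cdots\oplus M_n$ acting blockwise on $\mathbb{R}^{d_1+\cdots+d_n}$. First I would record the elementary linear-algebra facts: $(M_i\mathcal{C}_i)^\circ = (M_i^{-\top})\mathcal{C}_i^\circ$, so that $\|M_i^{-\top}v\|_{(\mathcal{C}_i')^\circ} = \|v\|_{(\mathcal{C}_i)^\circ}$; and for the uniform (Lebesgue-normalized) distribution, if $\bar Y^n$ is equiprobable on $\mathcal{C}^n$ then $M^{-\top}\bar Y^n$ is equiprobable on $(\mathcal{C}')^n := \mathcal{C}_1'\times\cdots\times\mathcal{C}_n'$ — since an invertible linear map pushes uniform measure to uniform measure on the image.

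The substitution I would then make is: given a configuration $(\mathcal{C}_i', \mathcal{A}_n')$ witnessing the value of $h_n(t)$ for the primed cubes, set $\mathcal{A}_n := M^\top \mathcal{A}_n'$ and keep the original cubes $\mathcal{C}_i = [-1,1]^{d_i}$. I would then check the three defining conditions transfer. For the $\kappa$-differentiated condition: for $z^n = M^\top w^n$ and $\tilde z^n = M^\top \tilde w^n$ with $w^n, \tilde w^n \in \mathcal{A}_n'$, we have $\|z_i - \tilde z_i\|_{(\mathcal{C}_i)^\circ} = \|M_i^\top(w_i - \tilde w_i)\|_{(\mathcal{C}_i)^\circ}$, which by the polar identity above equals $\|w_i - \tilde w_i\|_{(\mathcal{C}_i')^\circ}$; summing and using that $z_i \neq \tilde z_i \iff w_i \neq \tilde w_i$ (as $M_i^\top$ is invertible) gives exactly the $\kappa$-differentiated inequality for $\mathcal{A}_n'$. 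The cardinality condition $|\{z_i : z^n\in\mathcal{A}_n\}| \le d+1$ is immediate since $M_i^\top$ is a bijection between the $i$-th coordinate projections. For the soft-max condition, I would compute $\langle \bar Y^n, Z^n\rangle$ where now $\bar Y^n$ is uniform on $\mathcal{C}^n$ and $Z^n$ is uniform on $\mathcal{A}_n = M^\top\mathcal{A}_n'$: writing $Z^n = M^\top W^n$ with $W^n$ uniform on $\mathcal{A}_n'$, we get $\langle \bar Y^n, M^\top W^n\rangle = \langle M\bar Y^n, W^n\rangle$, and $M\bar Y^n$ is uniform on $M\mathcal{C}^n = M[-1,1]^{d_1+\cdots+d_n}$, i.e.\ blockwise uniform on $\mathcal{C}_1'\times\cdots\times\mathcal{C}_n'$ — wait, I must be careful here: I should instead pair the \emph{primed} $\bar Y'^n$ (uniform on $(\mathcal{C}')^n$) with $W^n$ uniform on $\mathcal{A}_n'$, and observe that $M\bar Y^n$ has the same law as $\bar Y'^n$. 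Either way the function $\rho$ built from these inner products is literally the same function, so $\mathbb{E}_{\bar Y^n}[\rho(\bar Y^n)] \le t$ holds for the unprimed data iff it held for the primed data. Since $|\mathcal{A}_n| = |\mathcal{A}_n'|$, this shows $\frac1n\log|\mathcal{A}_n|$ achievable with the standard cubes equals that achievable with the linearly transformed cubes, so $h_n(t)$ is unchanged; running the argument in both directions (the map $M$ is invertible) gives equality.

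The main obstacle — really the only place demanding care rather than bookkeeping — is the normalization of the "equiprobable" measures on the (now non-cubical) bodies and making sure the transport of uniform measure under $M$ is set up consistently with the convention implicit in the statement (Lebesgue-uniform on the body). Once one fixes the convention that "equiprobable on $\mathcal{C}^n$" means normalized Lebesgue measure, the pushforward statement $M_\ast(\mathrm{Unif}\,\mathcal{C}^n) = \mathrm{Unif}\,(M\mathcal{C}^n)$ is standard (Jacobian is constant), and everything else is the polar-duality identity $\|M^{-\top}v\|_{(M\mathcal{C})^\circ} = \|v\|_{\mathcal{C}^\circ}$, which I would state and verify in one line. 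I would also remark that "nondegenerate linear transform" is exactly what is needed so that the transformed body remains bounded, symmetric, and has $0$ in its interior, keeping all the hypotheses of the definition of $h_n$ meaningful.
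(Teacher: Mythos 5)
Your proof is correct and follows essentially the same route as the paper, which disposes of the proposition with the one-line observation that the inner products defining $\rho$ and the $\kappa$-differentiation gauged by $(\mathcal{C}^n)^{\circ}$ are preserved under the contragredient change of variables; your write-up simply makes the bijection between feasible configurations explicit ($\mathcal{A}_n\mapsto M^{\top}\mathcal{A}_n$ together with the polar identity $(M\mathcal{C})^{\circ}=M^{-\top}\mathcal{C}^{\circ}$). One minor convention point: in the paper ``equiprobable on $\mathcal{C}^n$'' means the uniform distribution on the finitely many vertices of the polytope (as its later use in Section~\ref{sec_sampling} shows), not normalized Lebesgue measure on the solid body, but this only simplifies your pushforward step, since $M$ maps vertex sets bijectively onto vertex sets.
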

 
We now make the connection between $h_n$ and the relay channel problem.
For each $x$ and $\uy$, $\uz$ satisfying $P_{\uY\vert X}(\uy\vert x)>0$, $P_{\uZ\vert X}(\uz\vert x)>0$,
define 
\begin{align}
K_x(\uy,\uz):=-\log \frac{P_{\uZ\vert X=x}(\uz)}{Q_{\uZ\vert \uY=\uy}(\uz)}
\label{e172}
\end{align}
which is in $(-\infty,\infty)$.
Define
\begin{align}
\tilde{K}_x(\uy,\uz)
=K_x(\uy,\uz)-\sum_{\uy'}P_{\uY\vert X=x}(\uy')K_x(\uy',\uz).
\label{e173}
\end{align}
Then we have 
\begin{align}
\mathbb{E}_{P_{\uY\vert X=x}}[\tilde{K}_x(\uY,\uz)]=0.\label{e174}
\end{align}
Moreover $K_x(\uy,\uz)-\tilde{K}_x(\uy,\uz)$ depends only on $x$ and $\uz$.
Thus for any $x^n$, $\uy^n$, $p$ and $\uz^n\in\mathcal{T}_{x^n}(p)$ such that $P_{\uY^n\vert X^n}(\uy^n\vert x^n)>0$ and $P_{\uZ^n\vert X^n}(\uz^n\vert x^n)>0$, we have that
\begin{align}
\sum_{i=1}^nK_{x_i}(\uy_i,\uz_i)
-\sum_{i=1}^n\tilde{K}_{x_i}(\uy_i,\uz_i)
=c_{x^n,p}
\label{e209}
\end{align}
is a constant depending only on $x^n$ and $p$ (not on $\uy^n$ and the particular choice of $\uz^n\in\mathcal{T}_{x^n}(p)$).

For each $x\in\mathcal{X}$ define $\mathcal{\uY}_x:=\{\uy\in\mathcal{\uY}\colon P_{\uY\vert X=x}(\uy)>0\}$,
and let $d(x):=\abs{\mathcal{\uY}_x}-1$.
Define $\mathcal{\uZ}_x$ similarly.
Next we will define representations
\begin{align}
\phi_x\colon \mathcal{\uY}_x\to \mathbb{R}^{d(x)}
\\
\psi_x \colon \mathcal{\uZ}_x\to \mathbb{R}^{d(x)}
\end{align}
so that 
\begin{align}
\left<\phi_x(\uy), \psi_x(\uz)\right>
=\tilde{K}_x(\uy,\uz)
\label{e_rep}
\end{align}
for any $x$, $\uy\in\mathcal{\uY}_x$, $\uz\in\mathcal{\uZ}_x$.
Clearly such representations are well-defined only up to a linear transform. 
It will be seen that the particular choice of the representation will not make any difference due to the invariance property (Proposition~\ref{prop_invar}), 
but for concreteness we can make the following choice:
call elements in $\mathcal{\uY}_x$ as $\uy^0$, $\uy^1$,\dots, $\uy^{d(x)}$.
Set $\phi_x$ to map these elements to the following vectors in $\mathbb{R}^d$:
\begin{align}
-\sum_{i=1}^{d(x)}\frac{P_{\uY\vert X=x}(\uy^i)}{P_{\uY\vert X=x}(\uy^0)}e_i,e_1,\dots,e_{d(x)}
\label{e_simplex}
\end{align}
where $e_1,\dots,e_{d(x)}$ are the canonical basis vectors for $\mathbb{R}^{d(x)}$.
Thus 
\begin{align}
\sum_{\uy\in\mathcal{\uY}_x}\phi_x(\uy)P_{\uY\vert X=x}(\uy)=0.
\end{align}
Define $\psi_x$ by
\begin{align}
\psi_x(\uz):=(\tilde{K}_x(\uy^i,\uz))_{i=1}^{d(x)},\quad
\forall x\in\mathcal{X},\,\uz\in \mathcal{Z}_x.
\end{align}
Then \eqref{e_rep} is clearly satisfied for $\uy=\uy^i$, $i\neq 0$. 
For $i=0$, we have
\begin{align}
\tilde{K}_x(\uy^0,\uz)
&=-\sum_{i=1}^{d(x)}\frac{P_{\uY\vert X=x}(\uy^i)}{P_{\uY\vert X=x}(\uy^0)}\tilde{K}_x(\uy^i,\uz)
\label{e182}
\\
&=\left<-\sum_{i=1}^{d(x)}\frac{P_{\uY\vert X=x}(\uy^i)}{P_{\uY\vert X=x}(\uy^0)}\phi_x(\uy
^i),\psi_x(\uz)\right>
\\
&=\left<\phi_x(\uy^0),\psi_x(\uz)\right>
\end{align}
where \eqref{e182} used \eqref{e174};
therefore \eqref{e_rep} remains true.

We now show that for any $x\in\mathcal{X}$, the map $\psi_x$ is injective on $\mathcal{\uZ}_x$ under the symmetry assumption $P_{Y\vert X}=P_{Z\vert X}$.
From the definition \eqref{e173} we can see that 
\begin{align}
\tilde{K}_x(\uy,\uz)=\log \frac{Q_{\uY\uZ}(\uy,\uz)}{Q_{\uY}(\uy)}-a(x,\uz),\quad \forall \uy\in\mathcal{\uY}_x,\,\uz\in\mathcal{\uZ}_x
\label{e_kt}
\end{align}
for some function $a$ of $(x,\uz)$.
Pick arbitrary $\uz\neq \uz'$ in $\mathcal{\uZ}_x$.
The following argument uses the symmetry assumption $P_{Y\vert X}=P_{Z\vert X}$:
Observe that the matrix 
\begin{align}
\left(
\begin{array}{cc}
Q_{\uY\uZ}(\uz,\uz)  &   Q_{\uY\uZ}(\uz,\uz')  \\
Q_{\uY\uZ}(\uz',\uz)  &   Q_{\uY\uZ}(\uz',\uz')   
\end{array}
\right)
\label{e_full}
\end{align}
is full-rank.
Indeed, if otherwise, we must have $(P_{\uZ\vert X}(\uz\vert x),P_{\uZ\vert X}(\uz'\vert x))$ all being a multiple of some vector $v\in\mathbb{R}^2$ for all $x$ in the support of a capacity-achieving input distribution (to see this, consider the fact that if $u,v\in\mathbb{R}^2$ are vectors which are linearly independent, then $uu^{\top}+vv^{\top}$ must be strictly positive definite).
Integrating over $x$, we find that $(Q_{\uZ\uY}(\uz,\uy),Q_{\uZ\uY}(\uz',\uy))$ is a multiple of $v$ for each $\uy\in\mathcal{\uY}$.
Then $Q_{\uY\vert \uZ=\uz}=Q_{\uY\vert \uZ=\uz'}$, a contradiction (Proposition~\ref{prop7}).
Now we claim that 
$(\tilde{K}_x(\uy,\uz))_{\uy\in\mathcal{\uY}_x}\neq(\tilde{K}_x(\uy,\uz'))_{\uy\in\mathcal{\uY}_x}$,
hence the injectivity of $\psi_x$. 
Indeed, otherwise $(\tilde{K}_x(\uy,\uz))_{\uy\in\mathcal{\uY}_x}=(\tilde{K}_x(\uy,\uz'))_{\uy\in\mathcal{\uY}_x}$ implies that the vectors $(Q_{\uY\uZ}(\uy,\uz))_{\uy\in\mathcal{\uY}_x}$ and $(Q_{\uY\uZ}(\uy,\uz'))_{\uy\in\mathcal{\uY}_x}$ differ by a multiplicative constant
according to \eqref{e_kt}, which contradicts the fact that \eqref{e_full} is full-rank.

For each $x\in\mathcal{X}$, let $\mathcal{C}_x$ be an arbitrary nondegenerate linear transform of $[-1,1]^{d(x)}$ contained in the convex hull of \eqref{e_simplex},\footnote{In principle, we can maximize $\kappa$ by optimizing over $\mathcal{C}_x$ contained in the convex hull of \eqref{e_simplex}, although for our purpose of solving Cover's problem we can pick any $\mathcal{C}_x$.}
and then define 
\begin{align}
\kappa:=\min_{x\in\mathcal{X}}\,
\min_{\uz,\uz'\in\mathcal{\uZ}_x,\,\uz\neq \uz'}
\|\psi_x(\uz)-
\psi_x(\uz')\|_{(\mathcal{C}_x)^{\circ}}.
\label{e216}
\end{align}
Now by the injectivity of $\psi_x$ shown above we see that 
\begin{prop}
Assume symmetry $P_{Y\vert X}=P_{Z\vert X}$.
Then
$\kappa\in (0,\infty)$.
\end{prop}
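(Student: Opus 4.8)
The plan is to deduce $\kappa\in(0,\infty)$ directly from the injectivity of $\psi_x$ on $\mathcal{\uZ}_x$ established just above, together with elementary properties of the norm $\|\cdot\|_{(\mathcal{C}_x)^{\circ}}$.

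First I would record that $\|\cdot\|_{(\mathcal{C}_x)^{\circ}}$ is a genuine norm on $\mathbb{R}^{d(x)}$. Indeed, $\mathcal{C}_x$ is by construction a nondegenerate linear image of $[-1,1]^{d(x)}$, hence bounded, symmetric ($\mathcal{C}_x=-\mathcal{C}_x$), and containing $0$ in its interior; these three properties are inherited by the polar $(\mathcal{C}_x)^{\circ}$, so by the discussion in the subsection on norms defined by convex bodies, $\|\cdot\|_{(\mathcal{C}_x)^{\circ}}={\sf m}_{(\mathcal{C}_x)^{\circ}}$ is finite-valued (because $(\mathcal{C}_x)^{\circ}$ is absorbing) and positive-definite, i.e.\ $\|v\|_{(\mathcal{C}_x)^{\circ}}=0$ only when $v=0$ (because $(\mathcal{C}_x)^{\circ}$ is bounded).

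Next, for the bound $\kappa>0$: given $x\in\mathcal{X}$ and distinct $\uz,\uz'\in\mathcal{\uZ}_x$, injectivity of $\psi_x$ gives $\psi_x(\uz)-\psi_x(\uz')\neq 0$, and positive-definiteness then forces $\|\psi_x(\uz)-\psi_x(\uz')\|_{(\mathcal{C}_x)^{\circ}}>0$; since $\mathcal{X}$ and each $\mathcal{\uZ}_x$ are finite, $\kappa$ is a minimum over a finite index set of strictly positive numbers, hence $\kappa>0$. (If every $\mathcal{\uZ}_x$ is a singleton this index set is empty; in that degenerate case $H_Q(\uZ|\uY)=0$ and the converse in Theorem~\ref{thm1} is vacuous, so I would dispose of it at the outset.) For $\kappa<\infty$ there is nothing further to do: each $\psi_x(\uz)-\psi_x(\uz')$ is a fixed vector of the finite-dimensional space $\mathbb{R}^{d(x)}$, on which $\|\cdot\|_{(\mathcal{C}_x)^{\circ}}$ takes finite values, and the minimum is over finitely many terms.

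The only nontrivial ingredient is the injectivity of $\psi_x$ on $\mathcal{\uZ}_x$, which is where the symmetry hypothesis $P_{Y|X}=P_{Z|X}$ enters (via the full-rankness of the $2\times2$ matrix in \eqref{e_full} and the identity \eqref{e_kt}); that has already been carried out in the paragraphs preceding this Proposition. Consequently I do not anticipate any genuine obstacle — the Proposition is essentially a bookkeeping statement recording that $\kappa$ is a bona fide positive, finite constant, so that the $\kappa$-differentiated condition and the function $h_n$ in Section~\ref{sec_h} are non-vacuous.
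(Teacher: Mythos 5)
Your proposal is correct and follows essentially the same route as the paper: the paper's proof of this Proposition is precisely the observation (stated in the sentence preceding it) that the injectivity of $\psi_x$ on $\mathcal{\uZ}_x$, established from the symmetry assumption via the full-rank matrix in \eqref{e_full} and \eqref{e_kt}, makes each term in the finite minimum \eqref{e216} a strictly positive, finite value of the norm $\|\cdot\|_{(\mathcal{C}_x)^{\circ}}$. Your additional remarks on the norm being well-defined and on the degenerate singleton case are harmless bookkeeping beyond what the paper spells out.
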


\begin{lem}
\label{lem11}
Assume symmetry $P_{Y\vert X}=P_{Z\vert X}$.
Let $\kappa$ be defined as in \eqref{e216},
and $d:=\abs{\mathcal{Y}}-1$.
Then
$\bar{f}_n(t)\le h_n(t)$.
\end{lem}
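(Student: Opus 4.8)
The plan is to produce, for each Model~2 configuration together with a triple $(x^n,v,p)$ attaining the maximum in the definition of $\bar f_n(t)$, a set $\mathcal{A}_n$ admissible in the definition of $h_n(t)$ whose normalized log-cardinality dominates the entropy being maximized. Write $P:=P_{\uZ^n|X^n=x^n,V=v,T_{x^n}(\uZ^n)=p}$ and $e:=E(x^n)$, and use the shorthands $\phi(\uy^n):=(\phi_{x_1}(\uy_1),\dots,\phi_{x_n}(\uy_n))$, $\psi(\uz^n):=(\psi_{x_1}(\uz_1),\dots,\psi_{x_n}(\uz_n))$, so that $\langle\phi(\uy^n),\psi(\uz^n)\rangle=\sum_i\tilde K_{x_i}(\uy_i,\uz_i)$ by \eqref{e_rep}. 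First I would observe that $\mathrm{supp}(P)=\mathcal{T}_{x^n}(p)\cap\{\uz^n:V(\uz^n,e)=v\}$ and that $P$ is \emph{uniform} on it: the hypothesis $P_{X^nVT_{x^n}(\uZ^n)}(x^n,v,p)>0$ forces $p(\uz|x)>0\Rightarrow P_{\uZ|X=x}(\uz)>0$, so every sequence of conditional type $p$ has $\prod_iP_{\uZ|X=x_i}(\uz_i)>0$, and this product depends only on $x^n$ and $p$, hence is constant on $\mathcal{T}_{x^n}(p)$. Now set $d_i:=d(x_i)$ and, invoking the invariance Proposition~\ref{prop_invar}, use the bodies $\mathcal{C}_i:=\mathcal{C}_{x_i}$ in the definition of $h_n$; put $\mathcal{A}_n:=\{\psi(\uz^n):\uz^n\in\mathrm{supp}(P)\}$. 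Since each $\psi_{x_i}$ is injective (shown above using the symmetry hypothesis), $\psi$ is injective on $\mathrm{supp}(P)$, so $|\mathcal{A}_n|=|\mathrm{supp}(P)|$, a uniform point of $\mathcal{A}_n$ equals $\psi(\uZ^n)$ in law for $\uZ^n\sim P$, and
\[
\tfrac1n H(\uZ^n|X^n=x^n,V=v,T_{x^n}(\uZ^n)=p)\le\tfrac1n\log|\mathrm{supp}(P)|=\tfrac1n\log|\mathcal{A}_n|.
\]
Two of the three conditions defining $h_n(t)$ are then immediate: $d_i=|\mathcal{\uY}_{x_i}|-1\le|\mathcal{Y}|-1=d$; $|\{\psi_{x_i}(\uz_i):\uz^n\in\mathcal{A}_n\}|\le|\mathcal{\uZ}_{x_i}|\le|\mathcal{Z}|=|\mathcal{Y}|=d+1$; and for distinct $\uz^n,\uz'^n\in\mathrm{supp}(P)$, injectivity of the $\psi_{x_i}$ with \eqref{e216} gives $\sum_i\|\psi_{x_i}(\uz_i)-\psi_{x_i}(\uz_i')\|_{(\mathcal{C}_{x_i})^{\circ}}\ge\kappa\sum_i 1_{\uz_i\ne\uz_i'}$, i.e.\ $\kappa$-differentiatedness.

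The heart of the argument is the soft-max condition, and it rests on the identity that, for every $\uy^n$ with $P_{\uY^n|X^n=x^n}(\uy^n)>0$, the reference law $Q_{\uZ^n|\uY^n=\uy^n,V=v,E=e,T_{x^n}(\uZ^n)=p}$ is precisely the exponential tilt of the uniform $P$ with tilt vector $\phi(\uy^n)$, namely $\uz^n\mapsto P(\uz^n)e^{\langle\phi(\uy^n),\psi(\uz^n)\rangle}/\mathbb{E}_P[e^{\langle\phi(\uy^n),\psi(\uZ^n)\rangle}]$. To see this I would use $Q_{\uZ^n|\uY^n=\uy^n}=\prod_iQ_{\uZ|\uY=\uy_i}$ (the capacity-achieving output law of the product channel is the product law, as in the definition of $g_n$), write $\prod_iQ_{\uZ|\uY=\uy_i}(\uz_i)=\bigl(\prod_iP_{\uZ|X=x_i}(\uz_i)\bigr)\exp\bigl(\sum_iK_{x_i}(\uy_i,\uz_i)\bigr)$, substitute $\sum_iK_{x_i}(\uy_i,\uz_i)=\langle\phi(\uy^n),\psi(\uz^n)\rangle+c_{x^n,p}$ from \eqref{e209} (valid on $\mathcal{T}_{x^n}(p)$), and use the constancy of $\prod_iP_{\uZ|X=x_i}(\uz_i)$ on $\mathrm{supp}(P)$ together with $\mathrm{supp}(Q_{\uZ^n|\uY^n=\uy^n})\cap\{T_{x^n}(\cdot)=p,V(\cdot,e)=v\}=\mathrm{supp}(P)$ (each $K_{x_i}$ is finite on $\mathcal{\uY}_{x_i}\times\mathcal{\uZ}_{x_i}$), so that every normalizing constant cancels and conditioning on $\{V=v,T=p\}$ merely restricts the tilt to $\mathrm{supp}(P)$. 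Taking relative entropy then gives
\[
D(P\|Q_{\uZ^n|\uY^n=\uy^n,V=v,E=e,T_{x^n}(\uZ^n)=p})=\rho(\phi(\uy^n)),
\]
with $\rho$ as in \eqref{e_rho} for $Z^n$ uniform on $\mathcal{A}_n$ (here uniformity of $P$ is essential). Averaging over $\uy^n\sim P_{\uY^n|X^n=x^n}=\prod_iP_{\uY|X=x_i}$ and invoking the constraint in the definition of $\bar f_n(t)$ bounds $\mathbb{E}[\rho(\phi(\uY^n))]$ by the amount required by $h_n(t)$.

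To pass from the vertex law $\phi(\uY^n)$ to $\bar Y^n$ uniform on $\mathcal{C}^n$ I would use that $\rho$ is convex (a log-sum-exp minus a linear functional) together with a Choquet/convex-order comparison $\mathrm{Unif}(\mathcal{C}^n)\preceq_{\mathrm{cx}}\phi(\uY^n)$. By the product property of the convex order this reduces to the coordinatewise statement $\mathrm{Unif}(\mathcal{C}_{x_i})\preceq_{\mathrm{cx}}\phi_{x_i}(\uY_i)$, which holds because $\mathcal{C}_{x_i}$ is a symmetric body contained in the simplex $\conv\{\phi_{x_i}(\uy):\uy\in\mathcal{\uY}_{x_i}\}$, both laws have barycenter $0$ (the former by symmetry of $\mathcal{C}_{x_i}$, the latter by the choice \eqref{e_simplex} of $\phi_{x_i}$), and the unique barycentric-coordinate map of that simplex furnishes a mean-preserving dilation of $\mathrm{Unif}(\mathcal{C}_{x_i})$ onto $\phi_{x_i}(\uY_i)$. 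Hence $\mathbb{E}_{\bar Y^n}[\rho(\bar Y^n)]\le\mathbb{E}[\rho(\phi(\uY^n))]$, which is the third defining condition of $h_n(t)$; therefore $\tfrac1n\log|\mathcal{A}_n|\le h_n(t)$, and since the configuration and $(x^n,v,p)$ were arbitrary subject to the $\bar f_n$-constraint, $\bar f_n(t)\le h_n(t)$.

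I expect the main obstacle to be the bookkeeping behind the exponential-tilt identity: one must verify that the supports of $P$, of $Q_{\uZ^n|\uY^n=\uy^n}$ restricted to $\{T_{x^n}(\cdot)=p,V(\cdot,e)=v\}$, and of the tilted law all coincide, and that the normalizers $Z_P:=\sum_{\uz^n\in\mathrm{supp}(P)}\prod_iP_{\uZ|X=x_i}(\uz_i)$, the partition function of the conditional $Q$, and $e^{c_{x^n,p}}$ cancel \emph{exactly} (an equality, not merely an inequality). This is precisely where the symmetry hypothesis enters, through the injectivity of $\psi_{x_i}$ and the finiteness of $K_{x_i}$, together with the conditional-type-class structure. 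The convex-order step is conceptually the more delicate ingredient but is technically routine once the barycentric coupling is written down.
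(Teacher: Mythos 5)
Your proposal is correct and follows essentially the same route as the paper: your exponential-tilt identity is exactly the paper's ratio identity \eqref{e218} after normalizing, the replacement of $K$ by $\tilde K$ via \eqref{e209} and the passage to $\rho(\phi_{x^n}(\uy^n))$ coincide with the paper's computation, and your barycentric convex-order step is the paper's coupling with $\mathbb{E}[\phi_{x^n}(\uY^n)\,|\,\bar Y^n]=\bar Y^n$ followed by Jensen, just made explicit. The only (shared) looseness is the per-letter normalization when matching the $\bar f_n$ constraint to the $h_n$ constraint, which is the same bookkeeping the paper itself glosses over.
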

\begin{proof}
Choose Model~2 source encoder, relay encoder, decoder, $x^n$ and $v$ and $p$, 
in the definition of $\bar{f}_n(t)$.
Consider arbitrary $\uz^n$ in
\begin{align}
\mathcal{A}_{x^n,v,p}:=\{\uz^n\colon V(\uz^n,E(x^n))=v,\,T_{x^n}(\uz^n)=p\}.
\end{align}
We will show that 
\begin{align}
&\quad\frac{P_{\uZ^n\vert X^n=x^n,V=v,T_{x^n}(\uZ^n)=p}(\uz^n)}
{Q_{\uZ^n\vert \uY^n=\uy^n,V=v,E=E(x^n),T_{x^n}(\uZ^n)=p}(\uz^n)}
\nonumber\\
&=c\,\frac{P_{\uZ^n\vert X^n}(\uz^n\vert x^n)}{Q_{\uZ^n\vert \uY^n}(\uz^n\vert \uy^n)}
\label{e218}
\end{align}
where $c>0$ is some constant depending only on $x^n$, $v$, $p$ and $\uy^n$ (not on $\uz^n$). 
Indeed,
\begin{align}
P_{\uZ^n\vert X^n=x^n,V=v,T_{x^n}(\uZ^n)=p}(\uz^n)
&=\frac{P_{\uZ^nVT_{x^n}(\uZ^n)\vert X^n=x^n}(\uz^n,v,p)}{P_{VT_{x^n}(\uZ^n)\vert X^n=x^n}(v,p)}
\\
&=\frac{P_{\uZ^n\vert X^n=x^n}(\uz^n)}{P_{VT_{x^n}(\uZ^n)\vert X^n=x^n}(v,p)},
\end{align}
whereas
\begin{align}
&\quad Q_{\uZ^n\vert \uY^n=\uy^n,V=v,E=E(x^n),T_{x^n}(\uZ^n)=p}(\uz^n)
\nonumber\\
&=\frac{Q_{\uZ^nVT_{x^n}(\uZ^n)\vert \uY^n=\uy^n,E=E(x^n)}(\uz^n,v,p)}{Q_{VT_{x^n}(\uZ^n)\vert \uY^n=\uy^n,E=E(x^n)}(v,p)}
\label{e_192}
\\
&=\frac{Q_{\uZ^nV\vert \uY^n=\uy^n,E=E(x^n)}(\uz^n,v)}{Q_{VT_{x^n}(\uZ^n)\vert \uY^n=\uy^n,E=E(x^n)}(v,p)}
\label{e_193}
\\
&=\frac{Q_{\uZ^n\vert \uY^n=\uy^n,E=E(x^n)}(\uz^n)Q_{V\vert \uZ^n=\uz^n,E=E(x^n),\uY^n=\uy^n}(v)}
{Q_{VT_{x^n}(\uZ^n)\vert \uY^n=\uy^n,E=E(x^n)}(v,p)}
\label{e_194}
\\
&=\frac{Q_{\uZ^n\vert \uY^n=\uy^n}(\uz^n)
Q_{V\vert \uZ^n=\uz^n,E=E(x^n),\uY^n=\uy^n}(v)}
{Q_{VT_{x^n}(\uZ^n)\vert \uY^n=\uy^n,E=E(x^n)}(v,p)},
\label{e_195}
\\
&=\frac{Q_{\uZ^n\vert \uY^n=\uy^n}(\uz^n)}{Q_{VT_{x^n}(\uZ^n)\vert \uY^n=\uy^n,E=E(x^n)}(v,p)},
\label{e_196}
\end{align}
where \eqref{e_195} follows since $E$ and $(\uY^n,\uZ^n)$ are independent in the definition of $Q$, and 
\eqref{e_196} follows since $\uz^n\in\mathcal{A}_{x^n,v,p}$ implies that $Q_{V\vert \uZ^n=\uz^n,E=E(x^n),\uY^n=\uy^n}(v)=1$.
Therefore \eqref{e218} is verified.
Now 
\begin{align}
t&\ge  D(P_{\uZ^n\vert X^n=x^n,V=v,T_{x^n}(\uZ^n)=p}\|
Q_{\uZ^n\vert \uY^n=\uy^n,V=v,E=E(x^n),T_{x^n}(\uZ^n)=p})
\nonumber\\
&=
\log\mathbb{E}[\exp(K_{x^n}(\uy^n,\cdot))]
-\mathbb{E}[K_{x^n}(\uy^n,\cdot)]
\label{e231}
\\
&=
\log\mathbb{E}[\exp(\tilde{K}_{x^n}(\uy^n,\cdot))]
-\mathbb{E}[\tilde{K}_{x^n}(\uy^n,\cdot)]
\label{e232}
\\
&=\log\mathbb{E}[\exp(\left<\phi_{x^n}(\uy^n),
\psi_{x^n}(\cdot)\right>)]
-\mathbb{E}[\left<\phi_{x^n}(\uy^n),
\psi_{x^n}(\cdot)\right>]
\label{e199}
\end{align}
where 
\begin{itemize}
\item the expectations are with respect to the distribution $P_{\uZ^n\vert VT_{x^n}(\uZ^n)X^n}(\cdot\vert v,p,x^n)$, which is the equiprobable distribution on $\mathcal{A}_{x^n,v,p}$.
\item \eqref{e231} follows from \eqref{e218},
where we defined $K_{x^n}(\uy^n,\uz^n):=\sum_{i=1}^nK_{x_i}(\uy_i,\uz_i)$, and define $\tilde{K}_{x^n}$ similarly
(recall the definitions of $K_x$ and $\tilde{K}_x$ in \eqref{e172} and \eqref{e173});
\item \eqref{e232} follows since under $P_{\uZ^n\vert VT_{x^n}(\uZ^n)X^n}(\cdot\vert v,p,x^n)$ 
we have $K_{x^n}(\uy^n,\uz^n)
-\tilde{K}_{x^n}(\uy^n,\uz^n)
=c_{x^n,p}$ with probability 1
as shown in \eqref{e209};
\item \eqref{e199} follows from \eqref{e_rep}, and we defined $\phi_{x^n}(\uy^n):=(\phi_{x_1}(\uy_1),\dots,\phi_{x_n}(\uy_n))$, and similarly for $\psi_{x^n}$.
\end{itemize}
Now setting $d_i:=d(x_i)$, $\mathcal{C}_i:=\mathcal{C}_{x_i}$ (where $\mathcal{C}_x$ was chosen near \eqref{e216}), $i=1,\dots,n$, and 
defining $\mathcal{A}_n$ as the image of $\mathcal{A}_{x^n,v,p}$ under $\psi_{x^n}\colon \uz^n\mapsto (\psi_{x_i}(\uz_i))_{i=1}^n$,
we see that $\mathcal{A}_n$ is $\kappa$-differentiated by \eqref{e216},
and 
\begin{align}
D(P_{\uZ^n\vert X^n=x^n,V=v,T_{x^n}(\uZ^n)=p}\|
Q_{\uZ^n\vert \uY^n=\uy^n,V=v,E=E(x^n),T_{x^n}(\uZ^n)=p})
=\rho(\phi_{x^n}(\uy^n))
\label{e200}
\end{align}
where $\rho$ is defined as in \eqref{e_rho}.
Now let $\bar{Y}_1,\dots,\bar{Y}_n$ be independent and $P_{\bar{Y}_i}$ be equiprobable on $\mathcal{C}_i$.
Since $\mathcal{C}_i$ is contained in the convex hull of \eqref{e_simplex} which is a simplex,
there exists a coupling of $P_{\phi_{x^n}(\uY^n)\vert X^n=x^n}$ and $P_{\bar{Y}^n}$ such that $\mathbb{E}[\phi_{x^n}(\uY^n)\vert \bar{Y}^n]=\bar{Y}^n$ (indeed, for each $i$ and $\bar{y}_i\in\mathcal{C}_i$ we can define $P_{\phi_{x_i}(\uY_i)\vert \bar{Y}_i=\bar{y}_i}$ as the unique distribution on the points in \eqref{e_simplex} under which the expectation equals $\bar{y}_i$; this couples $P_{\phi_{x_i}(\uY_i)\vert X=x_i}$ and $P_{\bar{Y}_i}$ because $\mathbb{E}[\phi_{x_i}(\uY_i)\vert X=x_i]=\mathbb{E}[\bar{Y}_i]=0$).
Then 
\begin{align}
t\ge\mathbb{E}[\rho(\phi_{x^n}(\uY^n))\vert X^n=x^n]
\ge
\mathbb{E}[\rho(\bar{Y}^n)] 
\end{align}
where the first inequality follows from \eqref{e200} and the definition of $\bar{f}_n(t)$, and the second inequality follows from Jensen's inequality since $\rho$ is convex.
Thus by the definition of $h_n(t)$ (noting Proposition~\ref{prop_invar}) we see that 
\begin{align}
H(\uZ^n\vert X^n=x^n,V=v,T_{x^n}(\uZ^n)=p)=\log\abs{\mathcal{A}_n}\le h_n(t).
\end{align}
It follows that $\bar{f}_n(t)\le h_n(t)$.
\end{proof}

\section{Sampling from a Polytope}\label{sec_sampling}
%
The goal of this section is to bound $h_n(t)$ defined in Section~\ref{sec_h}.
The basic idea is find a (generally nonproduct) subset $\mathcal{S}_n$ of the product set $\mathcal{C}^n$, so that a ``soft-max bound'' of $\mathcal{A}_n$ for $y^n$ in $\mathcal{C}^n$ (bound on $\rho(y^n)$) is translated to a ``hard-max bound'' of another set $\mathcal{B}_n$ for $y^n$ in $\mathcal{S}_n$.
Conceptually, we may draw an analogy to the familiar $\epsilon$-net argument in bounding the supremum of processes \cite{boucheron2004concentration},
whereby $\mathcal{S}_n$ can be thought of as a ``net'' of $\mathcal{C}^n$.
\subsection{A Construction via Hadamard Matrices}
\label{sec_had}
\begin{lem}\label{lem_sampling}
Let $d_1$,\dots,$d_n$ be nonnegative integers and let $\mathcal{C}_i$ be 
a nondegenerate linear transform of $[-1,1]^{d_i}$, $i=1,\dots,n$.
Put $N:=d_1+\dots+d_n$.
For any integer $t$ which is a power of $2$,
there exists $\mathcal{S}_n \subseteq \mathcal{C}^n:=\mathcal{C}_1\times \dots\times \mathcal{C}_n$ such that
\begin{align}
\abs{\mathcal{S}_n}=
(2t)^{\lfloor \frac{N}{t}\rfloor}
\cdot 2^{N-\lfloor \frac{N}{t}\rfloor t}
\end{align}
and 
\begin{align}
\frac1{\sqrt{t}}\,\mathcal{C}^n\subseteq \conv(\mathcal{S}_n).
\label{e_scontain}
\end{align}
\end{lem}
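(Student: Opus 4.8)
\emph{Overall strategy.} The plan is to reduce to the standard cube and then build $\mathcal{S}_n$ coordinate-block by coordinate-block using Sylvester--Hadamard matrices: on most blocks we can afford only $2t$ of the cube's vertices because they come from mutually orthogonal $\pm1$ rows, and orthogonality plus Cauchy--Schwarz is exactly what buys the factor $\tfrac1{\sqrt t}$.

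\emph{Reduction to the cube.} Each $\mathcal{C}_i$ is $T_i([-1,1]^{d_i})$ for some invertible linear $T_i$ on $\mathbb{R}^{d_i}$, so $\mathcal{C}^n=T([-1,1]^N)$ with $T:=T_1\oplus\cdots\oplus T_n$ invertible on $\mathbb{R}^N$. Since convex hull, Cartesian product, and scaling all commute with linear maps, it suffices to exhibit $\mathcal{S}\subseteq[-1,1]^N$ with $|\mathcal{S}|=(2t)^{\lfloor N/t\rfloor}\,2^{N-\lfloor N/t\rfloor t}$ and $\tfrac1{\sqrt t}[-1,1]^N\subseteq\conv(\mathcal{S})$; then $\mathcal{S}_n:=T(\mathcal{S})$ does the job, and because $T$ is block-diagonal with respect to the factors, $T(\mathcal{S})\subseteq\prod_i T_i([-1,1]^{d_i})=\mathcal{C}^n$ whenever $\mathcal{S}\subseteq[-1,1]^N$.

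\emph{The Hadamard block.} For $t$ a power of $2$, the Sylvester construction ($H_1=[1]$, $H_{2t}=\bigl(\begin{smallmatrix}H_t&H_t\\ H_t&-H_t\end{smallmatrix}\bigr)$) gives $H_t\in\{-1,1\}^{t\times t}$ with $H_tH_t^\top=tI_t$; let $h_1,\dots,h_t$ be its rows, which are pairwise orthogonal with $\|h_j\|_2^2=t$. For $x\in[-1,1]^t$ one has $\|x\|_2^2\le t$, so expanding $x=\sum_{j=1}^t c_jh_j$ with $c_j:=\langle x,h_j\rangle/t$ gives $\sum_j c_j^2=\|x\|_2^2/t\le 1$, hence $\sum_j|c_j|\le\sqrt t$ by Cauchy--Schwarz. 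Since $0=\tfrac12 h_1+\tfrac12(-h_1)\in\conv(\{\pm h_1,\dots,\pm h_t\})$, the identity $\tfrac1{\sqrt t}x=\sum_j \tfrac{c_j}{\sqrt t}h_j$ writes $\tfrac1{\sqrt t}x$ as a convex combination of the $2t$ points $\pm h_j$ (any leftover mass is put on $0$). Thus $\mathcal{H}_t:=\{\pm h_1,\dots,\pm h_t\}\subseteq\{-1,1\}^t$ has $|\mathcal{H}_t|=2t$ and $\tfrac1{\sqrt t}[-1,1]^t\subseteq\conv(\mathcal{H}_t)$.

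\emph{Assembling the net and the obstacle.} Write $N=qt+r$ with $q=\lfloor N/t\rfloor$ and $0\le r<t$, partition the $N$ coordinates into $q$ blocks of size $t$ and one block of size $r$, and let $\mathcal{S}$ be the Cartesian product obtained by placing a copy of $\mathcal{H}_t$ on each size-$t$ block and a copy of $\{-1,1\}^r$ on the size-$r$ block (for which trivially $\tfrac1{\sqrt t}[-1,1]^r\subseteq[-1,1]^r=\conv(\{-1,1\}^r)$). Then $\mathcal{S}\subseteq\{-1,1\}^N\subseteq[-1,1]^N$ and $|\mathcal{S}|=(2t)^q\,2^r$, as required. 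Finally, using that the convex hull of a Cartesian product of finite sets is the product of the convex hulls of the factors, $\conv(\mathcal{S})$ is the product of the $q$ sets $\conv(\mathcal{H}_t)$ with $\conv(\{-1,1\}^r)$, which contains $\tfrac1{\sqrt t}[-1,1]^t$ (resp.\ $\tfrac1{\sqrt t}[-1,1]^r$) in each factor, hence contains $\tfrac1{\sqrt t}[-1,1]^N$; transporting by $T$ completes the proof. There is no deep difficulty here; the only real point is the norm bookkeeping in the Hadamard block—one must use $\|x\|_2\le\sqrt t$ rather than $\|x\|_\infty\le1$, since it is precisely the Cauchy--Schwarz bound $\|c\|_1\le\sqrt t$ that yields the sharp factor $\tfrac1{\sqrt t}$. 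The degenerate cases ($r=0$, $N<t$, or some $d_i=0$) are already accounted for by the cardinality formula and need only be noted.
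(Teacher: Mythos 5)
Your proposal is correct and takes essentially the same route as the paper: reduce to $[-1,1]^N$ by linear invariance, place the $2t$ signed Hadamard vectors on blocks of size $t$ and $\{\pm1\}$ on the leftover coordinates, and verify $\tfrac1{\sqrt t}[-1,1]^t\subseteq\conv(\{\pm h_1,\dots,\pm h_t\})$ using $H_tH_t^{\top}=tI_t$ together with an $\ell_1$--$\ell_2$--$\ell_\infty$ comparison. The only cosmetic difference is that you exhibit the convex combination explicitly (orthogonal expansion plus Cauchy--Schwarz), whereas the paper runs the same computation in dual form, showing that the boundary of the image of the cross-polytope under $A_t$ stays outside $t^{-1/2}(-1,1)^t$.
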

\begin{proof}
It is easy to see from the invariance under linear transforms that we can assume without loss of generality that $\mathcal{C}_i=[-1,1]^{d_i}$.
Then $\mathcal{C}^n$ is simply $[-1,1]^N$.
Now suppose that $t$ is a power of $2$.
Let $\mathcal{R}$ be the set of the following $2t$ points in $\mathbb{R}^t$:
\begin{align}
(0,0,\dots,\pm 1,\dots,0,0),
\end{align}
that is, points with exactly one nonzero coordinate which is equal to either 1 or $-1$.
The next step is to choose a $t\times t$ matrix $A_t$ with $\pm1$ entries so that points in $\mathcal{R}$ are transformed to points in $[-1,1]^t$.
Let $A_t$ be a Hadamard matrix of order $t$, which is possible because $t$ is a power of $2$. 
We have 
\begin{align}
A_t^{\top}A_t
&=
tI_t.
\label{e_iso1}
\end{align}
Now we choose $\mathcal{S}_n\subseteq \mathcal{C}^n=[-1,1]^N$ as
\begin{align}
\underbrace{
(A_t\mathcal{R})
\oplus
\dots
\oplus
(A_t\mathcal{R})}_\text{$\lfloor \frac{N}{t}\rfloor$ times}
\oplus
\underbrace{
\{1,-1\}\oplus\dots\oplus\{1,-1\}}_\text{$N-\lfloor \frac{N}{t}\rfloor t$ times}.
\end{align}
Thus
\begin{align}
\abs{\mathcal{S}_n}=(2t)^{\lfloor \frac{N}{t}\rfloor}
\cdot 2^{(N-\lfloor \frac{N}{t}\rfloor t)}
\end{align}
We now show that 
\begin{align}
t^{-\frac1{2}}[-1,1]^t\subseteq\conv(A_t\mathcal{R}),
\label{e_include}
\end{align}
which will imply the claim of the lemma.
Pick an arbitrary $x=(x_1,\dots,x_t)\in\mathbb{R}^t$
 such that 
\begin{align}
\sum_{i=1}^t\abs{x_i}=1.
\label{e_boundary}
\end{align}
Note that \eqref{e_boundary} is equivalent to $x$ being on the boundary of $\conv(\mathcal{R})$,
and hence is equivalent to $A_tx$ being on the boundary of $\conv(A_t\mathcal{R})$.
We will show that
\begin{align}
\sup_{1\le j\le t}\abs{(A_tx)_j}
\ge t^{-\frac1{2}}
\label{e_norm}
\end{align}
where $(A_tx)_j$ denotes the $j$-th coordinate of $A_tx$, which is equivalent to $A_tx$ being outside $t^{-\frac1{2}}(-1,1)^t$.
Since both $[-1,1]^t$ and $\conv(A_t\mathcal{R})$ are non-degenerate convex sets containing zero, this would imply \eqref{e_include}.

To prove \eqref{e_norm}, note that 
\begin{align}
\|x\|_2
&\ge
t^{-\frac1{2}}\|x\|_1
=t^{-\frac1{2}}.
\end{align}
Therefore, 
\begin{align}
\sup_{1\le j\le t}\abs{(A_tx)_j}
&\ge \frac1{\sqrt{t}}\|A_tx\|_2
\\
&=\|x\|_2 \label{e_iso}
\\
&\ge t^{-\frac1{2}}
\end{align}
where \eqref{e_iso} used \eqref{e_iso1}.
Hence \eqref{e_norm} is established, and the claim of the lemma is true.
\end{proof}
\begin{rem}
The explicit construction of $\mathcal{S}_n$ in Lemma~\ref{lem_sampling} is based on Hadamard matrices.
We remark that Hadamard matrix constructions frequently appears in the estimates of Banach-Mazur distances; see e.g.\ the comment on p279 in \cite{ideals}.
\end{rem}
\begin{rem}
Alternatively, one may construct $\mathcal{S}_n$ using the method of \cite[Theorem~1]{barvinok2014thrifty} based on the John
decomposition and a tensor power trick, which actually works when $\mathcal{C}^n$ is a general balanced convex body in $\mathbb{R}^N$;
see discussions around \eqref{e284}. 
\end{rem}

\begin{lem}\label{lem_l1}
Let $\kappa\in(0,\infty)$, $n\in\{1,2,\dots\}$, and $d\in\{0,1,\dots\}$.
Define $h_n$ as in Section~\ref{sec_h}.
Then there exists $c_{\kappa,d}>0$ such that 
for any $s\in(0,1/2)$,
\begin{align}
\limsup_{n\to\infty}h_n(s)\le c_{\kappa,d}s^{\frac1{4}}\log\frac1{s}.
\end{align}
\end{lem}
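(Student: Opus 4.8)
The plan is to run the ``soft-max to hard-max'' reduction sketched in the introduction: sample a small net $\mathcal{S}_n$ inside $\mathcal{C}^n$ via Lemma~\ref{lem_sampling}, pass to a subset $\mathcal{B}_n$ of $\mathcal{A}_n$ on which $\sup_{z}\langle z,y\rangle$ over $y\in\mathcal{S}_n$ is governed by $\rho$, apply the mixed-volume bound Lemma~\ref{lem_general} to $\mathcal{B}_n$, and finally turn the packing bound back into a bound on $|\mathcal{A}_n|$ using that $\mathcal{A}_n$ is $\kappa$-differentiated with per-block alphabets of size at most $d+1$. The exponent $\tfrac14$ will come out of optimizing the Hadamard block size against the packing radius. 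By Proposition~\ref{prop_invar} I may assume $\mathcal{C}_i=[-1,1]^{d_i}$, so $\mathcal{C}^n=[-1,1]^N$ with $N\le dn$ and $\|\cdot\|_{(\mathcal{C}^n)^\circ}=\|\cdot\|_1$; fix $\mathcal{A}_n$ (nearly) achieving $h_n(s)$ and let $Z^n$ be equiprobable on it. Recall $\rho$ from \eqref{e_rho} is nonnegative, convex, vanishes at $0$, and $\mathbb{E}[\rho(\bar Y^n)]\le s$ with $\bar Y^n$ equiprobable on the vertices of $\mathcal{C}^n$.

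The first delicate point is to produce a mean-zero law on the \emph{small} set $\mathcal{S}_n$ whose soft-max does not blow up. Choose a power of two $m$ and let $\mathcal{S}_n^0\subseteq\{\pm1\}^N$ be as in Lemma~\ref{lem_sampling}, so $\tfrac1{\sqrt m}\mathcal{C}^n\subseteq\conv(\mathcal{S}_n^0)$ and $\log|\mathcal{S}_n^0|\le\tfrac Nm\log(2m)+m\log 2$. For a diagonal sign matrix $D$ with i.i.d.\ Rademacher entries, $D$ fixes $\mathcal{C}^n$, $Dy$ is equidistributed with $\bar Y^n$ for every $y\in\{\pm1\}^N$, and the geometric vertex law of Lemma~\ref{lem_general} is equivariant under orthogonal maps, so averaging over $D$ gives
\begin{align}
\mathbb{E}_D\,\mathbb{E}\big[\rho(\hat Y^n)\big]=\mathbb{E}[\rho(\bar Y^n)]\le s,\qquad \hat Y^n\sim P_Y\ \text{for}\ \conv(D\mathcal{S}_n^0).
\end{align}
Fix a good $D$ and set $\mathcal{S}_n:=D\mathcal{S}_n^0$, $\mathcal{C}:=\conv(\mathcal{S}_n)$, $\hat Y^n\sim P_Y$ for $\mathcal{C}$; then $\mathbb{E}[\hat Y^n]=0$, $\mathbb{E}[\rho(\hat Y^n)]\le s$, $\hat Y^n$ is supported on $\mathcal{S}_n$, and $\tfrac1{\sqrt m}\mathcal{C}^n\subseteq\mathcal{C}\subseteq\mathcal{C}^n$, hence $\tfrac1{\sqrt m}\|\cdot\|_1\le\|\cdot\|_{\mathcal{C}^\circ}\le\|\cdot\|_1$. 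Next put $\mathcal{B}_n:=\{z^n\in\mathcal{A}_n:\langle z^n,y\rangle\le\mathbb{E}[\langle Z^n,y\rangle]+\rho(y)+\log(2|\mathcal{S}_n|)\ \text{for all }y\in\mathcal{S}_n\}$; Markov applied to $e^{\langle Z^n,y\rangle}$ together with a union bound over $\mathcal{S}_n$ give $|\mathcal{B}_n|\ge\tfrac12|\mathcal{A}_n|$, and averaging the defining inequality over $\hat Y^n$ (using $\mathbb{E}[\hat Y^n]=0$) gives $a:=\mathbb{E}[\sup_{z\in\mathcal{B}_n}\langle z,\hat Y^n\rangle]\le s+\log(2|\mathcal{S}_n|)$.

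Now I would apply Lemma~\ref{lem_general} to $\mathcal{B}_n$ with the polytope $\mathcal{C}$: ${\sf P}_l(\mathcal{B}_n)\le(1+\tfrac{2a}{l})^N$ in $\|\cdot\|_{\mathcal{C}^\circ}$, for every $l>0$. To convert this into a bound on $|\mathcal{B}_n|$, cover $\mathcal{B}_n$ by ${\sf N}_l(\mathcal{B}_n)\le{\sf P}_l(\mathcal{B}_n)$ balls of $\|\cdot\|_{\mathcal{C}^\circ}$-radius $l$; two points of $\mathcal{B}_n$ in one ball are within $\ell_1$-distance $2l\sqrt m$, hence (by $\kappa$-differentiatedness) differ in at most $\beta n$ blocks with $\beta:=\tfrac{2l\sqrt m}{\kappa n}$, and since each block ranges over at most $d+1$ values, one ball meets $\mathcal{B}_n$ in at most $\binom{n}{\le\beta n}(d+1)^{\beta n}\le\exp\!\big(n({\rm H}(\beta)+\beta\log(d+1))+o(n)\big)$ points. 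Collecting the pieces, for every valid $\mathcal{A}_n$,
\begin{align}
\tfrac1n\log|\mathcal{A}_n|\le\tfrac Nn\log\!\big(1+\tfrac{2a}{l}\big)+{\rm H}(\beta)+\beta\log(d+1)+o(1).
\end{align}
Taking $l:=\tfrac{\beta\kappa n}{2\sqrt m}$ gives $\tfrac{2a}{l}=\tfrac{4\sqrt m}{\beta\kappa n}\big(s+\tfrac Nm\log(2m)+m\log2\big)$, so with $m$ and $\beta$ held fixed and $n\to\infty$ one gets $\limsup_n\tfrac Nn\log(1+\tfrac{2a}{l})\le d\log\!\big(1+\tfrac{4d\log(2m)}{\beta\kappa\sqrt m}\big)$, whence
\begin{align}
\limsup_{n\to\infty}h_n(s)\le d\log\!\Big(1+\tfrac{4d\log(2m)}{\beta\kappa\sqrt m}\Big)+{\rm H}(\beta)+\beta\log(d+1).
\end{align}
Choosing $m\asymp s^{-1}$ (a power of two) and $\beta\asymp s^{1/4}$ makes each term of order $s^{1/4}\log\tfrac1s$ with a constant depending only on $\kappa,d$, which proves the claim for small $s$; the remaining $s\in(0,1)$ are absorbed into $c_{\kappa,d}$.

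The two steps I expect to need the most care are exactly these. First, the soft-max comparison: one must obtain a mean-zero law supported on $\mathcal{S}_n$ whose soft-max $\mathbb{E}[\rho(\hat Y^n)]$ stays $\le s$, and the randomization-plus-equivariance argument above is what makes this go through — it genuinely uses that $\rho$ vanishes at the barycenter of $\mathcal{C}^n$ and that $\mathcal{S}_n$ is built by a cube-preserving (symmetric) construction, so that sign-averaging recovers the vertex average of $\rho$. Second, the parameter bookkeeping: the term $\log(2|\mathcal{S}_n|)$, which is of order $n$, must be controlled by the $1/\sqrt m$ gain furnished by $\tfrac1{\sqrt m}\mathcal{C}^n\subseteq\conv(\mathcal{S}_n)$ while keeping the cell-counting non-vacuous ($\beta\le\tfrac12$); it is precisely the balance $m\asymp s^{-1}$, $\beta\asymp s^{1/4}$ that yields the exponent $\tfrac14$ rather than something weaker, and getting that tradeoff right is the crux of the quantitative estimate.
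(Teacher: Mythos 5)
Your proposal is correct and follows essentially the same route as the paper: the Hadamard net of Lemma~\ref{lem_sampling} with block size $\asymp 1/s$, a random sign-flip (your diagonal $D$ is the paper's group action $u$, with the same equivariance of the vertex law), the Markov-plus-union-bound passage to $\mathcal{B}_n$ with $|\mathcal{B}_n|\ge\tfrac12|\mathcal{A}_n|$, the mixed-volume packing bound of Lemma~\ref{lem_general}, and Hamming-ball counting via $\kappa$-differentiatedness with radius $\asymp s^{1/4}n$. Your explicit choice $\beta\asymp s^{1/4}$ (versus the paper's implicit choice of $r$) and the covering-ball counting are only cosmetic variants, and your treatment of $s$ near $1$ by enlarging $c_{\kappa,d}$ mirrors the paper's own.
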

\begin{proof}
Let $d_1$, $d_2$,\dots be any sequence satisfying $d_i\le d$ for each $i=1,2,\dots$.
Suppose that $\mathcal{A}_n$ is $\kappa$-differentiated,
$\abs{\{z_i\colon \exists z^n\in\mathcal{A}_n\}}\le d+1$, $i=1,\dots,n$, and 
\begin{align}
\mathbb{E}[\rho(\bar{Y}^n)]\le s,
\end{align}
for each $n$, where we recall that $\bar{Y}^n$ is equiprobable on $\mathcal{C}^n$ and
\begin{align}
\rho(y^n):=\log\left(\mathbb{E}[\exp\left< y^n,Z^n\right>]\right)
-\mathbb{E}[\left< y^n,Z^n\right>]
\end{align}
with $Z^n$ equiprobable on $\mathcal{A}_n$.
Now choose 
$\mathcal{S}_n\subseteq\mathcal{C}^n$ as in Lemma~\ref{lem_sampling}
and set $P_{Y^n}$ to be defined by the geometry of $\mathcal{S}_n$ as in \eqref{e_geometry}.
The following defines an action of the multiplicative group $\{-1,1\}^{N}$ on the vertices of $\mathcal{C}^n=[-1,1]^{N}$, which can also be identified as $\{-1,1\}^{N}$:
\begin{align}
u(y^n):=(u_1b_1,\dots,u_{N}b_{N})
\end{align}
where $u=(u_1,\dots,u_{N})$ and $(y_1,\dots,y_n)=(b_1,\dots,b_{N})$ are both in $\{-1,1\}^{N}$.

Now let $U=(U_1$,\dots,$U_{N})$ be equiprobable on $\{-1,1\}^{N}$ and independent of $Y^n\sim P_{Y^n}$.
Observe that 
\begin{align}
\mathbb{E}[\rho(U(Y^n))]
&=\frac1{2^{N}}\sum_{u\in\{-1,1\}^{N}}
\sum_{y^n\in\mathcal{C}^n}P_{Y^n}(y^n)\rho(u(y^n))
\\
&=
\frac1{2^{N}}
\sum_{y^n\in\mathcal{C}^n}
\left(\sum_{u\in\{-1,1\}^{N}}
P_{Y^n}(u(y^n))\rho(y^n)\right)
\label{e_changev}
\\
&=
\frac1{2^{N}}
\sum_{y^n\in\mathcal{C}^n}\rho(y^n)
\label{e_cover}
\\
&=\mathbb{E}[\rho(\bar{Y}^n)]
\end{align}
where
\eqref{e_changev} used change of variables (noting that in the multiplicative group $\{-1,1\}^{N}$, $u$ is the inverse of itself)
and \eqref{e_cover} follows since for any $y^n,z^n\in\mathcal{C}^n$ there is exactly one $u\in\{-1,1\}^{N}$ such that $u(y^n)=z^n$.
Thus we can find a $u$ such that 
\begin{align}
\mathbb{E}[\rho(u(Y^n))]
\le 
\mathbb{E}[\rho(\bar{Y}^n)].
\end{align}
Now define the random variable
\begin{align}
\hat{Y}^n=u(Y^n).
\end{align}
We have
\begin{align}
ns\ge
\mathbb{E}[\rho(\hat{Y}^n)].
\end{align}
But by Markov's inequality
\begin{align}
\mathbb{P}[\left<y^n,Z^n\right>
\ge
\mathbb{E}[\left<y^n,Z^n\right>]
+\lambda]
\le \exp(\rho(y^n)-\lambda),\quad
\forall \lambda\in \mathbb{R},
y^n\in\mathbb{R}^{N}
\end{align}
where $Z^n$ is uniform on $\mathcal{A}_n$.
Now set
\begin{align}
\mathcal{B}_n
:=
\left\{z^n\in\mathcal{A}_n\colon
\left<y^n,z^n\right>
\le 
\mathbb{E}[\left<y^n,Z^n\right>]
+\rho(y^n)+\log(2\abs{\mathcal{S}_n}),
\,
\forall y^n\in u(\mathcal{S}_n)
\right\}.
\end{align}
By applying the union bound to the above we have
\begin{align}
\mathbb{P}[Z^n\in\mathcal{B}_n]
\ge \frac1{2}.
\end{align}
Since $Z^n$ is equiprobable on $\mathcal{A}_n$,
we must have
\begin{align}
\abs{\mathcal{B}_n}\ge \frac1{2}\abs{\mathcal{A}_n}.
\label{e227}
\end{align}

Next we lower bound the packing number of $\mathcal{B}_n$. 
A Hamming ball of diameter $rn$ contains no more than ${n\choose rn}(d+1)^{rn}\le\exp(n[{\rm H}(r)+r\log (d+1)])$ elements. 
Here ${\rm H}(r):=r\log\frac1{r}+(1-r)\log\frac1{1-r}$ denotes the entropy function.
Thus $\mathcal{B}_n$ contains at least 
\begin{align}
\abs{\mathcal{B}_n}\exp(-n[{\rm H}(r)+r\log(d+1)])
\end{align}
elements with pairwise Hamming distance at least $rn$.
Indeed, this follows since the $rn$-blowup of the maximum $rn$-packing under the Hamming distance must contain $\mathcal{B}_n$.
Since $\mathcal{A}_n$, and hence $\mathcal{B}_n$, is $\kappa$-differentiated, we see that a $rn$-packing of $\mathcal{B}_n$ under the Hamming distance is also a $\kappa rn$-packing under $\|\cdot\|_{(\mathcal{C}^n)^{\circ}}$ norm,
and hence $\frac{\kappa rn}{\sqrt{t}}$-packing under $\|\cdot\|_{(\mathcal{S}_n)^{\circ}}$ in view of \eqref{e_scontain}.
Thus
\begin{align}
{\sf P}_{\frac{\kappa rn}{\sqrt{t}}}(\mathcal{B}_n)
\ge \abs{\mathcal{B}_n}\exp(-n[{\rm H}(r)+r\log(d+1)]).
\end{align}
where ${\sf P}_{\frac{\kappa rn}{\sqrt{t}}}(\mathcal{B}_n)$ denotes the $\frac{\kappa rn}{\sqrt{t}}$-packing number of $\mathcal{B}_n$ under $\|\cdot\|_{(\mathcal{S}_n)^{\circ}}$.

On the other hand,
by the definition of $\mathcal{B}_n$ and the fact that $\mathbb{E}[\hat{Y}^n]=0$,
it is clear that
\begin{align}
\mathbb{E}[\sup_{z^n\in\mathcal{B}_n}
\left<\hat{Y}^n, z^n\right>]
&\le
\mathbb{E}[\rho(\hat{Y}^n)]+\log(2\abs{\mathcal{S}_n})
\\
&\le
ns+\log(2\abs{\mathcal{S}_n}).
\end{align}
Invoking Lemma~\ref{lem_general}, we see that the $l$-packing number with respect to the $(\mathcal{S}_n)^{\circ}$ norm satisfies
\begin{align}
{\sf P}_{\frac{\kappa rn}{\sqrt{t}}}
(\mathcal{B}_n)\le 
\left[1+\frac{2\sqrt{t}}{\kappa rn}\,(ns+\log(2\abs{\mathcal{S}_n}))\right]^{N}.
\end{align}
We thus obtain
\begin{align}
\abs{\mathcal{B}_n}\exp(-n[{\rm H}(r)+r\log(d+1)])
\le
\left[1+\frac{2\sqrt{t}}{\kappa rn}\,(ns+\log(2\abs{\mathcal{S}_n}))\right]^{N}.
\end{align}
Taking $\frac1{n}\log$ of both sides and using $\log(1+x)\le x$, $x>0$, we obtain
\begin{align}
\frac1{n}\log\abs{\mathcal{B}_n}-{\rm H}(r)-r\log(d+1)
&\le 
\frac{2d\sqrt{t}}{\kappa rn}\,[ns+\log(2\abs{\mathcal{S}_n})].
\end{align}
By Lemma~\ref{lem_sampling}, 
\begin{align}
\frac1{n}\log\abs{\mathcal{S}_n}
&\le \frac1{n}\log\left((2t)^{\lfloor\frac{N}{t}\rfloor}\cdot
2^{N-\lfloor\frac{N}{t}\rfloor t}
\right)
\\
&\le \frac1{n}\log\left((2t)^{\lfloor\frac{dn}{t}\rfloor}\cdot
2^{dn-\lfloor\frac{dn}{t}\rfloor t}
\right)
\\
&\le \frac1{n}\log\left((2t)^{\frac{dn}{t}}\cdot
2^t
\right)
\\
&= \frac{d}{t}\log(2t)+\frac{t}{n}\log 2.
\end{align}
Let us first suppose that $1/s\in\{2,4,8,\dots\}$ is a power of 2, and fix $t=1/s$ and arbitrary $r>0$ independent of $n$.
Taking $n\to\infty$ we obtain
\begin{align}
r\left(\limsup_{n\to\infty}\frac1{n}\log\abs{\mathcal{B}_n}-{\rm H}(r)-r\log(d+1)\right)
\le
\frac{2d}{\kappa}\sqrt{s}\left(
1+d\log\frac{2}{s}
\right).
\end{align}
Letting $r$ be such that ${\rm H}(r)+r\log(d+1)=\frac1{2}\limsup_{n\to\infty}\frac1{n}\log\abs{\mathcal{B}_n}$, we see that $\frac1{2}\limsup_{n\to\infty}\frac1{n}\log\abs{\mathcal{B}_n}=\Theta(r\log\frac1{r})$,
and hence there exists $c_{\kappa,d}$ large enough such that 
\begin{align}
\limsup_{n\to\infty}\frac1{n}\log\abs{\mathcal{B}_n}
\le c_{\kappa,d}s^{\frac1{4}}\log\frac1{s}
\end{align}
for all $s$ being a reciprocal of a power of $2$.
By choosing $c_{\kappa,d}$ large enough we can extend the bound to all $s\in(0,1/2)$.
Combining with \eqref{e227} we obtain the desired bound on $\frac1{n}\log\abs{\mathcal{A}_n}$ and hence on $h_n(s)$.
\end{proof}

\subsection{Connection to the Banach-Mazur Distance}\label{sec_bm}
Given a bounded, symmetric convex set $\mathcal{C}$ containing the origin in its interior, the norm $\|\cdot\|_{\mathcal{C}}$ defines a Banach space.
Note that applying a nondegenerate linear transform on $\mathcal{C}$ does not change the Banach space structure (i.e., the resulting spaces are equivalent up to isomorphism).
Now suppose that $\mathcal{C}_1$ and $\mathcal{C}_2$ are two such convex sets, and $B_1$ and $B_2$ are the corresponding Banach spaces; equivalently, $\mathcal{C}_1$ and $\mathcal{C}_2$ are the unit balls in the Banach spaces $B_1$ and $B_2$.
The Banach-Mazur distance (see e.g.\ \cite{pajor,ideals}), in this context, is defined as
\begin{align}
d(B_1,B_2):=\inf\{\lambda\ge 1\colon 
\exists A,\, \mathcal{C}_1\subseteq \mathcal{A}\mathcal{C}_2,\,\mathcal{A}\mathcal{C}_2\subseteq \lambda\mathcal{C}_1\}.
\end{align}
Then $\log d(\cdot,\cdot)$ defines a metric on the space of all such Banach spaces.
The Banach-Mazur distance plays an important role in the local theory of Banach spaces initiated by Milman \cite{milman1,milman2}.
In this language, the explicit construction in Lemma~\ref{lem_sampling} essentially established the following result:
there exists a Banach space $B$ where the unit ball has at most $t^{\frac{n}{t}}$ vertices such that 
\begin{align}
d(B,\ell_{\infty}^n)=O(\sqrt{t})
\label{e_bm}
\end{align}
for any $t>2$ independent of $n$, where $\ell_{\infty}^n$ denotes the $n$-dimensional space equipped with the $\ell_{\infty}$ norm.
By duality, this is equivalent to finding Banach space $B$ where the unit ball has at most $t^{\frac{n}{t}}$ \emph{facets} such that 
\begin{align}
d(B,\ell_1^n)=O(\sqrt{t}).
\label{e284}
\end{align}
The result in \eqref{e_bm} is not new; in fact the bound in \eqref{e_bm} holds even if $\ell_{\infty}^n$ is replaced by any $n$-dimensional Banach space \cite{barvinok2014thrifty}.

The technique of this paper can also be used to prove the Gaussian version of the Cover's problem (problem stated and solved in \cite{WuBarnesOzgur}).
In that case, we need an $\ell_2$ version of the estimate in \eqref{e_bm}:
for any $t>2$ independent of $n$,
there exists a Banach space where the unit ball has at most $(1+\frac1{t})^n$ vertices such that  
\begin{align}
d(B,\ell_2^n)=O(\sqrt{t}).
\label{e_bm2}
\end{align}
This is slightly better than the $\ell_1$ case since the required number of vertices is smaller.
The bound in \eqref{e_bm2} can be seen using the construction from \cite[p96-97]{figiel1980large}, and its tightness has been shown in \cite[Theorem~3.3]{pajor}\cite[Corollary~3]{gluskin1989extremal}.

\backmatter

\bmhead{Acknowledgments}
The author gratefully acknowledge Professor Ramon van Handel for very helpful comments on the initial version of the manuscript, especially for pointing out that Lemma~\ref{lem_general} already appeared in the work of Pajor \cite{pajort}.
The author is indebted to Professor Ayfer Ozgur for introducing Cover's problem, discussions during our previous collaborative work \cite{liuozgur}, and her guidance and support as a mentor in the IT society.
The author also thanks Professor Shahar Mendelson for comments on some references about Rademacher complexity.
This work was supported by the start-up grant at the Department of Statistics, University of Illinois.

\bmhead{Data availibility}Data sharing not applicable to this article as no datasets were generated or analyzed during the current study.

%
%
%
%
%
%
%

\begin{appendices}
\section{Proof of Claims in Example~\ref{exp3}}\label{app1}
For any fully supported $P_X$, we can see that $P_{X\vert Z=z}$ (induced by $P_X$ and $P_{Z\vert X}$), $z\in\mathcal{Z}$ are distinct distributions.
Therefore we cannot combine symbols in $\mathcal{Z}$ to form a ``more succinct'' sufficient statistic for $X$. 
However, below we will see that there exists a capacity-achieving $P_X$ which is not fully supported.

Define $P_X=[\frac1{2},\frac1{2},0]$, $P_{YZ\vert X}=P_{Y\vert X}P_{Z\vert X}$, and $Q_{YZ}=\frac1{2}P_{YZ\vert X=1}+\frac1{2}P_{YZ\vert X=2}$.
We will show that 
\begin{align}
D(P_{YZ\vert X=3}\|Q_{YZ})<
D(P_{YZ\vert X=1}\|Q_{YZ})=
D(P_{YZ\vert X=2}\|Q_{YZ})
\label{e100}
\end{align}
for the range of $\epsilon$ and $\delta$ in Example~\ref{exp3}, which will imply that $P_X$ maximizes $I(X;YZ)$ in view of the saddle-point characterization of the channel capacity (Section~\ref{sec_caod}).
To show \eqref{e100}, note that in the matrix form, we have
\begin{align}
Q_{YZ}=
\begin{bmatrix}
\frac1{16}+\frac{\epsilon^2}{4} &
\frac1{16}+\frac{\epsilon^2}{4} &
\frac1{8}-\frac{\epsilon^2}{2}
\\
\frac1{16}+\frac{\epsilon^2}{4} &
\frac1{16}+\frac{\epsilon^2}{4} &
\frac1{8}-\frac{\epsilon^2}{2}
\\
\frac1{8}-\frac{\epsilon^2}{2} &
\frac1{8}-\frac{\epsilon^2}{2} &
\frac1{4} +\epsilon^2
\end{bmatrix}
=
\begin{bmatrix}
\frac1{16} & \frac1{16} & \frac1{8}\\
\frac1{16} & \frac1{16} & \frac1{8}\\
\frac1{8} & \frac1{8} & \frac1{4}
\end{bmatrix}
+\Theta(\epsilon^2)
\label{e101}
\end{align}
where $\Theta(\epsilon^2)$ denotes a matrix whose Frobenius norm is order $\epsilon^2$.
Therefore, we can see (for example, by approximating the relative entropy with the $\chi^2$-divergence, noting $P_{YZ\vert X=1}-Q_{YZ}=\Theta(\epsilon)$) that
\begin{align}
D(P_{YZ\vert X=1}\|Q_{YZ})=D(P_{YZ\vert X=2}\|Q_{YZ})=\Theta(\epsilon^2).
\end{align}
Similarly,
\begin{align}
D(P_{YZ\vert X=3}\|Q_{YZ})=\Theta(\delta^2+\epsilon^4)=\Theta(\epsilon^4).
\end{align}
These establish \eqref{e100} for sufficiently small $\epsilon$, confirming that $P_X$ is capacity-achieving.

To see $R_{\rm crit}={\rm H}(\frac1{2}+2\epsilon^2)$, note that by Definition~\ref{defn1} we have $\underline{1}=\underline{2}\neq \underline{3}$. Therefore 
\begin{align}
Q_{\uY\uZ}=
\begin{bmatrix}
\frac1{4}+\epsilon^2 & \frac1{4}-\epsilon^2\\
\frac1{4}-\epsilon^2 & \frac1{4}+\epsilon^2
\end{bmatrix}
\end{align}
where the first column/row corresponds to $\underline{1}=\underline{2}$ and the last colum/row corresponds to $\underline{3}$.
Therefore $R_{\rm crit}=H(\uZ\vert\uY)={\rm H}(\frac1{2}+2\epsilon^2)$.

From \eqref{e101} we see that 
\begin{align}
H(Z\vert Y)=\frac1{4}H(Z\vert Y=1)+\frac1{4}H(Z\vert Y=2)
+\frac1{2}H(Z\vert Y=3)
=\frac{\log 2}{2}+{\rm H}(\frac1{2}+2\epsilon^2).
\end{align}

\section{Achievability (Proof of Proposition~\ref{prop_achieve})}\label{app_prop_achieve}
Consider encoder, relay, and decoder in Model~1 with the additional restrictions that the channel input symbols $x_1,\dots,x_n$ must be selected from a set $\mathcal{X}_{\rm good}\subseteq\mathcal{X}$ to be defined in \eqref{e_good},
and the relay and the decoder must first preprocess their received vectors $Z^n$ and $Y^n$ to obtain $\uZ^n$ and $\uY^n$ (i.e., applying coordinate-wise maps $Z_i\mapsto \uZ_i$ and $Y_i\mapsto \uY_i$).
The capacity of the restricted model can only be smaller than that of the original Model~1 due to the restrictions, i.e.,
$C_{\rm restrict}(H(\uZ\vert\uY))\le C(H(\uZ\vert\uY))$
Moreover, from the perspectives of the  encoder, relay, and the decoder, the restricted model is essentially also Model~1 but with channels $P_{\uZ\vert X}$ and $P_{\uY\vert X}$ and input alphabet $\mathcal{X}_{\rm good}$.
Using compress-and-forward (see \cite[Proposition~3]{kim_techniques}
with the substitutions $Y_1\leftarrow Z$ and $\hat{Y}_1\leftarrow V$), 
we see that $C_{\rm restrict}(H(\uZ\vert \uY))=C_{\rm restrict}(\infty)$.
It will be shown in Proposition~\ref{prop_delta} that $C_{\rm restrict}(\infty)=C(\infty)$.
Thus $C(H(\uZ\vert \uY))\ge C(\infty)$.

\section{Proof of Theorem~\ref{thm_general}}
\label{app_thm_general}
Most parts of the proofs of upper and lower bounds on $R_{\rm crit}$ follows the same lines as the symmetric case, with the modifications that all underlines for $y$ and $z$ are removed throughout, $\mathcal{X}_{\rm bad}=\emptyset$ in \eqref{e_bad}, and Proposition~\ref{prop7} and Proposition~\ref{prop_delta} are no longer used. Note that when the $z$-equivalence classes are singletons, $E$ in Model~2 is a constant, and in fact Model~2 collapses to Model~1.
The only essential difference in the proof is the argument of the injectivity of $\psi_x$ in the paragraph of \eqref{e_kt}.
We now show the injectivity of $\psi_x$ as the following instead, continuing \eqref{e_kt} (note that now the underlines are removed):
Pick arbitrary $z\neq z'$ in $\mathcal{Z}_x$.
By the second assumption of Theorem~\ref{thm_general} we have $\mathcal{Y}_x=\mathcal{Y}$.
Now by \eqref{e_kt}, if $(\tilde{K}_x(y,z))_{y\in\mathcal{Y}_x}=(\tilde{K}_x(y,z'))_{y\in\mathcal{Y}_x}$ then the vectors $(Q_{YZ}(y,z))_{y\in\mathcal{Y}}$ and $(Q_{YZ}(y,z'))_{y\in\mathcal{Y}}$ differ by a multiplicative constant.
Therefore
$Q_{Y\vert Z}(\cdot\vert Z)=Q_{Y\vert Z}(\cdot\vert Z')$,
which contradicts the assumption that $z$-equivalence classes are singletons, 
Hence $\psi_x(z)=\psi_x(z')$ is false.
Therefore $\psi_x$ is injective.

\section{Proof of Theorem~\ref{thm3}}
\label{app_thm3}
By Theorem~\ref{thm_general}, it suffices to show that for almost all $(P_{Y\vert X}, P_{Z\vert X})$, the $z$-equivalence classes are singletons.

For almost all $(P_{Y\vert X}, P_{Z\vert X})$, we have that 
\begin{align}
P_{Y\vert X=x_1}P_{Z\vert X=x_1}\neq P_{Y\vert X=x_2}P_{Z\vert X=x_2},
\quad \forall x_1, x_2\in\mathcal{X}\colon x_1\neq x_2.
\label{e107}
\end{align}
Under \eqref{e107}, the capacity is positive and the support of any capacity achieving input distribution (CAID) has cardinality at least 2.
We will also show that for almost all $(P_{Y\vert X}, P_{Z\vert X})$,
\begin{align}
\rank([S_{YZ}(\cdot, z_1),S_{YZ}(\cdot,z_2)])\ge 2,\quad
\forall z_1,z_2\in\mathcal{Z}\colon z_1\neq z_2; \,S_X\colon \abs{\supp(S_X)}\ge 2.
\label{e106}
\end{align}
Here, 
$S_{YZ}$ is induced by $S_X$ and $P_{Y\vert X}P_{Z\vert X}$, and
$\rank([S_{YZ}(\cdot, z_1),S_{YZ}(\cdot,z_2)])$ denotes the rank of the matrix $[S_{YZ}(\cdot, z_1),S_{YZ}(\cdot,z_2)]$ where $S_{YZ}(\cdot, z_1)$ and $S_{YZ}(\cdot,z_2)$ are treated as column vectors.
Note \eqref{e107} (hence CAID has support size at least 2) and \eqref{e106} combined imply that the $z$-equivalence classes are singletons, and hence imply the claim of the theorem.

It remains to show \eqref{e106} for almost all $(P_{Y\vert X},P_{Z\vert X})$. 
In turn, it suffices to show that given arbitrary  $\mathcal{S}\subseteq \mathcal{X}$, $\abs{\mathcal{S}}\ge 2$, and any $z_1,z_2\in\mathcal{Z}$, $z_1\neq z_2$,
we have 
\begin{align}
\rank([S_{YZ}(\cdot, z_1),S_{YZ}(\cdot,z_2)])\ge 2,\quad
\forall S_X\colon \supp(S_X)=\mathcal{S}
\label{e109}
\end{align}
for almost all $(P_{Y\vert X}, P_{Z\vert X})$.
To see \eqref{e109},
first call elements in the given $\mathcal{S}$ as $\{x_1,\dots x_s\}$ where $s\ge 2$. Pick arbitrary $\{y_1,\dots,y_s\}\subseteq \mathcal{Y}$.
A submatrix of the matrix $[S_{YZ}(\cdot, z_1),S_{YZ}(\cdot,z_2)]$ equals
\begin{align}
\begin{bmatrix}
P_{Y\vert X}(y_1\vert X_1)\dots P_{Y\vert X}(y_1\vert X_s)\\
\dots\\
P_{Y\vert X}(y_s\vert X_1)\dots P_{Y\vert X}(y_s\vert X_s)
\end{bmatrix}
\begin{bmatrix}
S_X(x_1)&\dots& 0\\
~&\ddots&~\\
0&\dots &S_X(x_s)
\end{bmatrix}
\begin{bmatrix}
P_{Z\vert X}(z_1\vert X_1)\quad P_{Z\vert X}(z_2\vert X_1)\\
\dots\\
P_{Z\vert X}(z_1\vert X_s)\quad P_{Z\vert X}(z_2\vert X_s)
\end{bmatrix}.
\end{align}
For almost all $(P_{Y\vert X},P_{Z\vert X})$, the first matrices above is invertible and the third matrix is rank 2.
The middle matrix is invertible when $\supp(S_X)=\mathcal{S}$. 
Therefore \eqref{e109} holds for almost all $(P_{Y\vert X},P_{Z\vert X})$, as desired.




\end{appendices}


\bibliography{ref_minorization}


\end{document}